\documentclass[aps,pra,reprint,superscriptaddress,floatfix]{revtex4-2}
\usepackage{float}
\usepackage{amsthm}
\usepackage{amssymb}
\usepackage{color}
\usepackage{booktabs}
\usepackage{makecell}
\usepackage{tabularx}
\usepackage{array}
\usepackage{graphicx}
\usepackage{capt-of}
\usepackage{stfloats}
\usepackage{multirow}
\usepackage{mathtools}
\usepackage{dcolumn}
\usepackage{bm}
\usepackage{algorithm2e}
\usepackage[T1]{fontenc}
\usepackage{etoolbox}
\usepackage[colorlinks,
            linkcolor=blue,
            anchorcolor=blue,
            citecolor=blue,
            ]{hyperref}
\graphicspath{{figures/}}

\newtheorem{theorem}{Theorem}
\newtheorem{corollary}{Corollary}

\makeatletter
\def\@author@join@#1#2#3{%
  \def\@author{{#1}{\@join{\textsuperscript{,\,}}{#2}{\frontmatter@footnote{#3}}}}%
}
\patchcmd{\doauthor}{\frontmatter@footnote{#2}}{#2}{}%
  {\PackageError{pra-author-notes}{Unable to patch REVTeX author notes}{Check the REVTeX version.}}
\makeatother

\begin{document}

\title{Universal entanglement detection with geometrically optimal witnesses}
\author{Xuanran Zhu}
\thanks{These authors contributed equally to this work.}
\email[Contact author: ]{zhuxr@sustech.edu.cn}
\affiliation{Department of Physics, State Key Laboratory of Quantum Functional Materials, and Guangdong Basic Research Center of Excellence for Quantum Science, Southern University of Science and Technology, Shenzhen 518055, China}

\author{Keyi Huang}
\thanks{These authors contributed equally to this work.}
\affiliation{Department of Physics, State Key Laboratory of Quantum Functional Materials, and Guangdong Basic Research Center of Excellence for Quantum Science, Southern University of Science and Technology, Shenzhen 518055, China}

\author{Chao Zhang}
\affiliation{Department of Physics, The Hong Kong University of Science and Technology, Clear Water Bay, Kowloon, Hong Kong, China}

\author{Xinfang Nie}
\affiliation{Department of Physics, State Key Laboratory of Quantum Functional Materials, and Guangdong Basic Research Center of Excellence for Quantum Science, Southern University of Science and Technology, Shenzhen 518055, China}
\affiliation{Quantum Science Center of Guangdong-HongKong-Macao Greater Bay Area, Shenzhen 518045, China}

\author{Qi Zhao}
\email[Contact author: ]{zhaoqi@cs.hku.hk}
\affiliation{QICI Quantum Information and Computation Initiative, School of Computing and Data Science, The University of Hong Kong, Pokfulam Road, Hong Kong SAR, China}

\author{Dawei Lu}
\email[Contact author: ]{ludw@sustech.edu.cn}
\affiliation{Department of Physics, State Key Laboratory of Quantum Functional Materials, and Guangdong Basic Research Center of Excellence for Quantum Science, Southern University of Science and Technology, Shenzhen 518055, China}
\affiliation{Quantum Science Center of Guangdong-HongKong-Macao Greater Bay Area, Shenzhen 518045, China}

\date{\today}
\begin{abstract}
Constructing optimal entanglement witnesses for arbitrary quantum systems remains a formidable challenge, hindered by the structural complexity of high-dimensional quantum state spaces.
In this work, we overcome this barrier by introducing a unified framework for constructing geometrically optimal witnesses capable of identifying arbitrary entanglement structures. By leveraging the geometric properties of a target quantum state relative to a specific entanglement structure in Hilbert-Schmidt space, these witnesses are constructed in an efficient and effective manner. Among all nontrivial witnesses, they provide maximal distinguishability between the target state and the set of states without the desired entanglement structure. As a result, our approach enables the universal detection of bipartite, multipartite, and high-dimensional entanglement. To demonstrate its experimental feasibility, we apply the framework to a four-qubit nuclear spin quantum system. Compared to existing methods, our approach offers enhanced noise robustness, broader detection universality, and improved construction efficiency, providing a powerful and versatile tool for advancing the characterization and understanding of entanglement in quantum systems.
\end{abstract}

\maketitle

\section{Introduction}
Over the past decades, entanglement has been recognized as a fundamental resource for various quantum information processing tasks, including quantum teleportation \cite{PhysRevLett.70.1895}, quantum computing \cite{steane1998quantum,zhao2025entanglement}, and quantum metrology \cite{PhysRevLett.96.010401}.
A key challenge in both theoretical and experimental contexts lies in the detection of entanglement \cite{friis2019entanglement, guhne2009entanglement, RevModPhys.81.865}. Entanglement witnesses (EWs) \cite{terhal2000bell, PhysRevA.62.052310}, which circumvent the need for full state tomography \cite{cramer2010efficient}, are widely employed for entanglement detection. Their construction can be derived from separability criteria, such as the positive partial transpose (PPT) \cite{PhysRevLett.77.1413} and the computable cross norm or realignment (CCNR) \cite{rudolph2005further, chen2003matrix}. EWs have been further extended to detect more complex entanglement structures, including multipartite entanglement \cite{PhysRevLett.92.087902, PhysRevLett.94.060501, guhne2003experimental, PhysRevX.8.031047, PhysRevLett.111.110503} and high-dimensional entanglement \cite{PhysRevA.63.050301, PhysRevA.107.022431, PhysRevA.61.040301}. However, the construction of optimal EWs is fundamentally hindered, e.g., by the intricate geometry of the separable state set \cite{10.1145/780542.780545, ioannou2006computational}.

Geometric entanglement witnesses (GEWs) provide an intuitive framework for optimal entanglement detection by leveraging the geometry of separable states in a metric space, such as Hilbert-Schmidt space \cite{krammer2009characterizing, PhysRevA.72.052331, PhysRevA.77.024303, PhysRevA.66.032319}. GEWs offer maximal distinguishability between the target state and separable states, which yields intrinsic robustness to noise models such as depolarizing noise. The existence of corresponding GEWs is guaranteed by the Hahn-Banach separation theorem \cite{reed1980methods}. However, due to the intrinsic complexity of separability characterization, prior studies have only succeeded in determining GEWs for specific quantum states with particular symmetries. The construction of GEWs for general states remains a significant challenge. Moreover, the generalization of GEWs for detecting diverse entanglement structures has yet to be systematically explored.

\begin{figure}[H]
    \centering
    \includegraphics[width=0.4\textwidth]{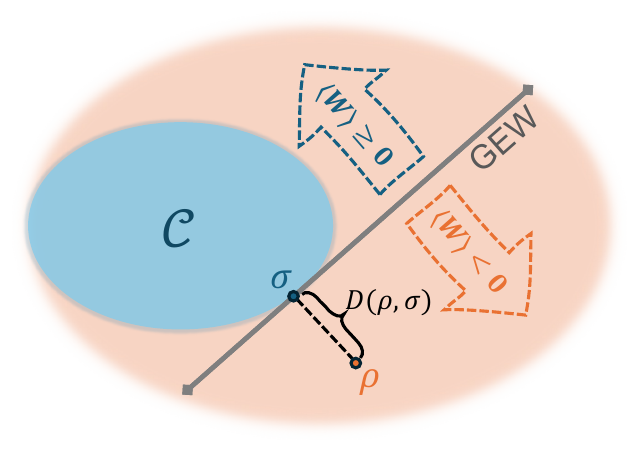}
    \caption{Schematic diagram of GEWs for detecting arbitrary entanglement structures. The convex, compact set $\mathcal{C}$ denotes the set of states without the desired entanglement structure. For example, for genuine multipartite entanglement detection, $\mathcal{C}$ can be chosen as the set of biseparable states. A GEW is constructed based on the nearest state $\sigma \in \mathcal{C}$ to the target entangled state $\rho$, as determined by the Hilbert-Schmidt distance $D$. The hyperplane defined by the GEW separates $\rho$ from $\mathcal{C}$, thereby certifying the entanglement structure of $\rho$.}
    \label{fig:geometric_witness}
\end{figure}

In this article, we present a universal framework for constructing GEWs capable of detecting arbitrary entanglement structures~\cite{PhysRevX.8.021072, zhou2019detecting, PhysRevLett.113.260502, PhysRevLett.110.030501}. The framework exploits the geometric properties of a target quantum state relative to the given entanglement structure in the Hilbert-Schmidt space, as illustrated in Fig.~\ref{fig:geometric_witness}. Compared with previous GEW studies, which focused mainly on special states or specific scenarios, our work formulates the closest-state construction for a general convex compact set $\mathcal{C}$ associated with the entanglement structure to be certified. We further develop the dynamic refinement of extremes approximation model (DREAM) to make this construction computationally accessible, and realize the resulting witnesses experimentally on a four-qubit nuclear spin system. We demonstrate this unified framework for bound entanglement~\cite{PhysRevLett.80.5239}, multipartite entanglement~\cite{walter2016multipartite}, and high-dimensional entanglement~\cite{PhysRevLett.127.220501, erhard2020advances, bavaresco2018measurements}. Experimental results show excellent agreement with theoretical predictions, underscoring the advantages of the framework in terms of detection universality, noise resilience, and construction efficiency when compared to existing methods.

\section{Geometric entanglement witnesses}
We begin with the basic definitions in bipartite systems. A bipartite state $\rho_{AB}$ in $\mathcal{H}_A \otimes \mathcal{H}_B$ is separable if it can be expressed as a convex combination of product pure states:
\begin{equation}
\rho_{AB} = \sum_i \lambda_i |\phi_A^{(i)}\rangle\langle \phi_A^{(i)}| \otimes |\phi_B^{(i)}\rangle\langle \phi_B^{(i)}|,
\label{eq:separable_state}
\end{equation}
where $0 < \lambda_i \leq 1$ and $\sum_i \lambda_i = 1$. Otherwise, $\rho_{AB}$ is entangled. An EW is a Hermitian operator $W$ satisfying $\langle W \rangle_\sigma = \mathrm{Tr}[W \sigma] \geq 0$ for all separable states $\sigma$, with at least one entangled state $\rho$ such that $\langle W \rangle_\rho < 0$. If there exists a separable state $\sigma_0$ such that $\langle W \rangle_{\sigma_0} = 0$, then $W$ is called an optimal EW. Geometrically, the hyperplane defined by such a witness is tangent to the set of separable states at $\sigma_0$.

For a given target state $\rho$, we call an optimal EW geometrically optimal if its tangent separable state is the exact nearest separable state $\sigma^\star$ to $\rho$ under the Hilbert--Schmidt (HS) distance, i.e., $\sigma^\star=\arg\min_{\sigma\in\mathrm{SEP}}\|\rho-\sigma\|$. This is the geometric entanglement witness (GEW) construction introduced in previous literature~\cite{krammer2009characterizing,PhysRevA.72.052331,PhysRevA.77.024303,PhysRevA.66.032319}. The corresponding witness is given by
\begin{equation}
    W(\sigma^{\star}) = \frac{(\sigma^{\star} - \rho) - \mathrm{Tr}[\sigma^{\star}(\sigma^{\star} - \rho)] I}{\|\sigma^{\star} - \rho\|}.
    \label{eq:optimal_witness}
\end{equation}
By construction, $\langle W(\sigma^\star)\rangle_{\sigma^\star}=0$, so it is optimal according to the definition above; because the tangent state $\sigma^\star$ is also the closest separable state to $\rho$ in HS distance, it is geometrically optimal.

Importantly, for a target state on a $d$-dimensional Hilbert space, GEWs provide maximal distinguishability among all nontrivial EWs: under a trace-free norm constraint, they maximize the expectation-value gap between the target state and the separable set, i.e.,
\begin{equation}
    W(\sigma^\star) = \mathop{\arg\max}_{\|W \|_{\text{tf}} \leq 1} \left(\min_{\sigma \in \mathrm{SEP}} \langle W \rangle_{\sigma} - \langle W \rangle_{\rho}\right),
\end{equation}
where the trace-free norm constraint, $\|W \|_{\text{tf}} = \|W - \mathrm{Tr}[W] I/d\| \leq 1$, prevents trivial scaling. Furthermore, it can be shown that the maximal distinguishability is directly related to the HS measure of the given state \cite{PhysRevA.66.032319}, i.e., $\langle W(\sigma^\star) \rangle_{\rho} = -D(\rho)$.

As a result, the maximal distinguishability of GEWs establishes their intrinsic robustness in detecting noisy entangled states. For example, when the effect of depolarizing noise is considered, the visibility threshold of GEWs is determined by the following theorem, whose proof is provided in Appendix~\ref{app:threshold-proof}:

\begin{theorem}[Robustness of GEWs]
    Consider a target state $\rho$ affected by depolarizing noise, described as $\mathcal{N}_p(\rho)=p\rho + (1-p)\frac{I}{d}$. The visibility threshold $p_{\text{th}}$ for a witness is the minimum visibility such that all noisy states with $p > p_{\text{th}}$ can be detected as entangled. For the GEW constructed for the target state $\rho$, the visibility threshold is given by
    \begin{equation}
        p_{\text{GEW}} = \frac{\mathrm{Tr}[\sigma^\star (\rho - \sigma^\star)]}{\mathrm{Tr}[\rho (\rho - \sigma^\star)]} \leq 1 - \frac{D(\rho)}{\sqrt{\gamma(\rho)}},
    \end{equation}
    where $\sigma^\star$ is the nearest separable state to $\rho$ under the HS distance, $D(\rho)$ is the HS measure of $\rho$, and $\gamma(\rho)$ denotes the purity of $\rho$.
    \label{theo:threshold}
\end{theorem}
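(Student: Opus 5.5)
We compute the expectation of the GEW $W(\sigma^\star)$ from Eq.~\eqref{eq:optimal_witness} on the noisy state and locate its sign change. Write $\Delta=\sigma^\star-\rho$. By linearity,
\begin{equation*}
\langle W\rangle_{\mathcal{N}_p(\rho)}\,\|\Delta\| = p\,\mathrm{Tr}[\rho\Delta] + (1-p)\frac{\mathrm{Tr}[\Delta]}{d} - \mathrm{Tr}[\sigma^\star\Delta].
\end{equation*}
Since $\mathrm{Tr}\rho=\mathrm{Tr}\sigma^\star=1$, the middle term vanishes. Hence the expectation is affine in $p$, equal to $p\,\mathrm{Tr}[\rho\Delta]-\mathrm{Tr}[\sigma^\star\Delta]$ up to the positive factor $1/\|\Delta\|$.

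Next we fix the signs. The nearest-point (projection) property of $\sigma^\star$ onto the convex compact set $\mathrm{SEP}$ gives $\mathrm{Tr}[(\rho-\sigma^\star)(\sigma-\sigma^\star)]\le 0$ for all separable $\sigma$; this is what makes $W$ a witness. Taking $\sigma=I/d$, which is separable, yields $\mathrm{Tr}[\sigma^\star(\rho-\sigma^\star)]\ge \mathrm{Tr}[\rho-\sigma^\star]/d=0$. Moreover
\begin{equation*}
\mathrm{Tr}[\rho(\rho-\sigma^\star)]=\mathrm{Tr}[\sigma^\star(\rho-\sigma^\star)]+\|\rho-\sigma^\star\|^2>0.
\end{equation*}
The expectation therefore equals $\bigl(\mathrm{Tr}[\sigma^\star(\rho-\sigma^\star)]-p\,\mathrm{Tr}[\rho(\rho-\sigma^\star)]\bigr)/\|\Delta\|$. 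It is negative exactly when $p>\mathrm{Tr}[\sigma^\star(\rho-\sigma^\star)]/\mathrm{Tr}[\rho(\rho-\sigma^\star)]$, which gives the stated equality for $p_{\text{GEW}}$.

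For the bound, set $a=\mathrm{Tr}[\sigma^\star(\rho-\sigma^\star)]\ge0$ and $D=D(\rho)=\|\rho-\sigma^\star\|$. Then
\begin{equation*}
p_{\text{GEW}}=\frac{a}{a+D^2}=1-\frac{D^2}{a+D^2}.
\end{equation*}
It therefore suffices to show $a+D^2\le D\sqrt{\gamma}$, that is, $\mathrm{Tr}[\rho(\rho-\sigma^\star)]\le \|\rho\|\,\|\rho-\sigma^\star\|$. This is exactly the Cauchy--Schwarz inequality in HS space, with $\|\rho\|=\sqrt{\gamma(\rho)}$.

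The only step requiring care is the sign analysis. The variational inequality for the HS projection, together with the separability of $I/d$, ensures $a\ge0$, so that the threshold lies in $[0,1)$ and detection holds for all $p$ above it rather than below. The remaining steps are routine algebra.
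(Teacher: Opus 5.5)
Your proposal is correct and follows the same route as the paper. You solve for the zero crossing of the affine function $p\mapsto\langle W\rangle_{\mathcal{N}_p(\rho)}$ to get $p_{\text{GEW}}=1-\|\rho-\sigma^\star\|^2/\mathrm{Tr}[\rho(\rho-\sigma^\star)]$, then bound the denominator by Cauchy--Schwarz. Your extra sign analysis, which uses the projection inequality at $\sigma=I/d$ to get $\mathrm{Tr}[\sigma^\star(\rho-\sigma^\star)]\ge 0$ and a positive denominator, is left implicit in the paper, and it rightly confirms that detection holds for $p$ above the threshold rather than below it.
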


This theorem establishes a direct link between the geometric measure and noise tolerance, providing a sufficient visibility threshold for entanglement detection under depolarizing noise. The following corollary further shows that, for bipartite pure states, the GEW is strictly more robust than the fidelity-based witness, the proof is provided in Appendix~\ref{app:corollary-proof}.

\begin{corollary}
For detecting entanglement in any bipartite pure state, the GEW is strictly more robust than the fidelity-based witness under depolarizing noise.
\label{coro:compare-fidelity-witness}
\end{corollary}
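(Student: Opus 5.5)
The plan is to compare the exact visibility threshold of Theorem~\ref{theo:threshold} with the threshold of the fidelity witness; the rough bound $1-D(\rho)/\sqrt{\gamma(\rho)}$ is too weak for this. Write $|\psi\rangle=\sum_i\sqrt{\lambda_i}|ii\rangle$ in Schmidt form, with $\lambda_1=\max_i\lambda_i$. The fidelity-based witness is $W_F=\lambda_1 I-|\psi\rangle\langle\psi|$. Then $\mathrm{Tr}[W_F\mathcal{N}_p(\rho)]=\lambda_1-p-(1-p)/d$, so its threshold is $p_F=(d\lambda_1-1)/(d-1)$. Note first that $W_F$ is itself an optimal witness: it is tangent to SEP at $|11\rangle\langle 11|$.

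A general identity is useful here. Any witness tangent to SEP can be written as $W=cI+W_0$, with $W_0$ traceless and $c=-\min_{\sigma\in\mathrm{SEP}}\langle W_0\rangle_\sigma$. Its threshold is then $c/(c-\langle W_0\rangle_\rho)$. The claim $p_{\text{GEW}}<p_F$ is equivalent to $\mathrm{Tr}[W(\sigma^\star)\,\rho_F]<0$, where $\rho_F=p_F\rho+(1-p_F)I/d$ is the noisy state on the hyperplane of $W_F$. Using Eq.~(\ref{eq:optimal_witness}), this reads
\begin{equation}
p_F\,\mathrm{Tr}[(\sigma^\star-\rho)\rho]+\frac{1-p_F}{d}-\mathrm{Tr}[\sigma^\star(\sigma^\star-\rho)]<0 .
\end{equation}

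First, I would reduce $\sigma^\star$ by symmetry. The local phase twirl $\mathcal{T}(X)=\int (U_\theta\otimes U_\theta^{*})X(U_\theta\otimes U_\theta^{*})^\dagger\,d\theta$, with $U_\theta$ diagonal in the Schmidt basis, leaves $|\psi\rangle$ invariant. It maps SEP into SEP and is an HS-contraction. Because SEP is convex, the nearest point is unique, so $\mathcal{T}(\sigma^\star)=\sigma^\star$. Hence $\sigma^\star=\sum_{i\neq j}b_{ij}|ij\rangle\langle ij|+\sum_{i,j}a_{ij}|ii\rangle\langle jj|$. The displayed inequality then becomes a finite-dimensional inequality in the parameters $a,b$.

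Second, I would use the variational characterization of $\sigma^\star$: $\mathrm{Tr}[(\rho-\sigma^\star)(\tau-\sigma^\star)]\le 0$ for every separable $\tau$. Testing this with $\tau=|11\rangle\langle11|$ and with $\tau=I/d$ gives two linear inequalities. Combining them with $\mathrm{Tr}[\sigma^\star\rho]\le\lambda_1$ (which holds because $\sigma^\star$ is separable) should yield the displayed inequality. The reason to expect this is that the condition obtained from the test state $|11\rangle\langle 11|$ says exactly that the $W(\sigma^\star)$ hyperplane lies on the correct side of the tangent point of $W_F$.

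The main obstacle is strictness. For the maximally entangled state, $\sigma^\star$ is the isotropic state with $p=1/(\sqrt d+1)$ and $W(\sigma^\star)\propto W_F$, so the two thresholds coincide. The statement must therefore be read as strict for non-maximally entangled pure states, with equality in the maximally entangled case. For strictness I would show that equality in the combined inequality forces $\rho-\sigma^\star\propto |\psi\rangle\langle\psi|-\lambda_1 I$ up to its trace part. That proportionality requires $|11\rangle\langle11|$ to be a tangent point of $W(\sigma^\star)$ with gradient $\propto|\psi\rangle\langle\psi|$, which fails unless all $\lambda_i$ are equal. I expect the delicate step to be extracting the inequality purely from the two test states, without explicitly solving for $a_{ij},b_{ij}$. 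If it does not close, I would fall back on explicitly parametrizing the phase-invariant separable states for the case of two distinct Schmidt values.
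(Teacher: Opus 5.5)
Your diagnosis is better than the paper's own argument, but your proof does not close. The paper proves the corollary through exactly the rough bound $p_{\text{GEW}}\le 1-D(\rho)$ together with $D(\rho)\ge(1-\mu_1^2)/\sqrt{1-1/d}$. Its last step, $(1-\mu_1^2)/\sqrt{1-1/d}>(1-\mu_1^2)/(1-1/d)$, is reversed, because $\sqrt{1-1/d}>1-1/d$; for $|\Phi^+\rangle$ one has $D=1/\sqrt3<2/3$. Your equality case for maximally entangled states of $n\otimes n$ is also correct, so the strict claim needs that exception.

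Two things go wrong in your own route. The first is an algebra slip. Since $\mathrm{Tr}[(\sigma^\star-\rho)I/d]=0$, the term $(1-p_F)/d$ in your displayed inequality should not be there. The correct condition is $p_{\text{GEW}}=s/(s+D^2)<p_F$, with $s=\mathrm{Tr}[\sigma^\star(\rho-\sigma^\star)]$ and $D^2=\|\rho-\sigma^\star\|^2$. Your version is actually false: for two qubits with $\lambda_1=0.6$ its left-hand side is about $+0.07$.

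The second is the main gap. The correct target, $s(1-p_F)<p_F D^2$, is an \emph{upper} bound on $s$. The variational inequality $\mathrm{Tr}[(\rho-\sigma^\star)\tau]\le s$ gives only a \emph{lower} bound on $s$ for every separable test state $\tau$: $\tau=I/d$ gives $s\ge 0$, and $\tau=|11\rangle\langle 11|$ gives $s\ge\lambda_1-\langle 11|\sigma^\star|11\rangle$. The bound $\mathrm{Tr}[\sigma^\star\rho]\le\lambda_1$ only gives $s+D^2\ge 1-\lambda_1$. No combination of these ingredients can bound $s$ from above. What is missing is a lower bound on the purity of $\sigma^\star$.

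That ingredient finishes the proof without any twirling. Put $f=\mathrm{Tr}[\rho\sigma^\star]$ and $g=\mathrm{Tr}[\sigma^{\star 2}]$; for pure $\rho$, Theorem~\ref{theo:threshold} gives $p_{\text{GEW}}=(f-g)/(1-f)$. Bessel's inequality for the orthogonal pair $I$ and $\rho-I/d$ gives $g\ge \frac1d+\frac{(f-1/d)^2}{1-1/d}$. Hence $f-g\le\frac{(f-1/d)(1-f)}{1-1/d}$, and with $f\le\lambda_1$,
\begin{equation*}
p_{\text{GEW}}\le\frac{df-1}{d-1}\le\frac{d\lambda_1-1}{d-1}=p_F .
\end{equation*}
Equality forces $\sigma^\star=\mathcal{N}_{p_F}(\rho)$. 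By the variational condition, this happens iff $\mathcal{N}_{p_F}(\rho)$ is separable, since $\mathrm{Tr}[\rho\tau]\le\lambda_1$ for all separable $\tau$. So strictness amounts to showing that $\mathcal{N}_{p_F}(\rho)$ is entangled.

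Your tangency argument does not show this, because $W_F$ is tangent at $|11\rangle\langle 11|$ for every $\psi$. Also, ``all $\lambda_i$ equal'' is only necessary for equality, not sufficient. A maximally entangled state of $2\otimes 3$ has $p_F=2/5$, yet its noisy version is already entangled for $p>1/4$, so there $p_{\text{GEW}}<p_F$ strictly.

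The partial transpose settles strictness. The smallest eigenvalue of $\mathcal{N}_p(\rho)^{T_B}$ is $(1-p)/d-p\sqrt{\lambda_1\lambda_2}$, so $\mathcal{N}_{p_F}(\rho)$ is NPT iff $(d\lambda_1-1)\sqrt{\lambda_1\lambda_2}>1-\lambda_1$. Let $m=\min(d_A,d_B)$. Using $\lambda_1\ge 1/m$, $d\ge m^2$ and $1-\lambda_1\le(m-1)\lambda_2$, one gets
\begin{equation*}
(d\lambda_1-1)\sqrt{\lambda_1\lambda_2}\ge(d\lambda_1-1)\lambda_2\ge(m-1)\lambda_2\ge 1-\lambda_1 .
\end{equation*}
Equality holds throughout only when $d_A=d_B$ and $\psi$ is maximally entangled. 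That is exactly the case in which the two witnesses coincide.
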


The above results extend to the detection of general entanglement structures, as illustrated in Fig.~\ref{fig:geometric_witness} and established in Appendix~\ref{app:general-proof}.

\section{Numerical construction}
\label{sec:construction}
In this work, we introduce a numerical algorithm for approximating the HS measure and a nearest separable state, denoted by \(\hat{\sigma}\), leveraging a dynamic characterization of separable states. Because the parametrized optimization is nonconvex, \(\hat{\sigma}\) is not accompanied by an independent certificate that it is the exact global minimizer \(\sigma^{\star}\).

For the set of separable states described in Eq.~\eqref{eq:separable_state}, we decompose it into three components: (i) the probability distribution $\{\lambda_i\}$, (ii) the local states $\{|\phi_A^{(i)}\rangle\}$, and (iii) the local states $\{|\phi_B^{(i)}\rangle\}$. A natural approach is to reformulate the constrained optimization problem over separable states into an unconstrained optimization problem by introducing a set of free parameters $\vec{\theta}$ to parametrize the separable states. This allows the distance to be expressed as
$
D(\rho) = \min_{\vec{\theta}} \|\rho - \sigma(\vec{\theta})\|,
$
where $\sigma(\vec{\theta})$ represents the parametrized separable state.

The probability distribution $\{\lambda_i\}$ resides on a probability simplex, which can be conveniently parametrized using the softmax function \cite{goodfellow2016deep}:
\begin{equation}
    \lambda_i = \frac{e^{\theta_i}}{\sum_j e^{\theta_j}}, \quad i=1,2,\ldots,n,
\end{equation}
where $\{\theta_i\}$ are the free real parameters.

The local states $\{|\phi_A^{(i)}\rangle\}$ and $\{|\phi_B^{(i)}\rangle\}$ are constrained to the unit spheres in their respective Hilbert spaces. These states can be parametrized by complex vectors normalized to unit length:
\begin{equation}
    |\phi\rangle = \frac{|\tilde{\phi}\rangle}{\sqrt{\langle \tilde{\phi}|\tilde{\phi}\rangle}}, \quad |\tilde{\phi}\rangle = \sum_{k=0}^{d-1} (\theta_{2k} + i \theta_{2k+1}) |k\rangle,
\end{equation}
where $d$ is the local dimension of the subsystem, and $\{\theta_{2k}, \theta_{2k+1}\}$ represent the free parameters of the unnormalized state vector $|\tilde{\phi}\rangle$.

For an $n$-entry separable state in a $d_A \otimes d_B$ composite system, where $d_A$ and $d_B$ denote the local dimensions, the total number of free parameters required is $(2(d_A + d_B) + 1)n$. Here, the number $n$ for any separable state can be upper bounded by $d_A^2 d_B^2$ due to Carathéodory's theorem \cite{caratheodory1911variabilitatsbereich}, ensuring that the optimization process involves polynomial parameter complexity.

Once the parametrization is established, a dynamic convex hull approximation (CHA) of the separable states is constructed. This approach provides a substantial improvement in efficiency compared to previous CHA methods, which relied on the random sampling of a large number of product states \cite{PhysRevA.98.012315, hou2020upper}. To further refine the approximation, gradient-based optimization techniques are employed to minimize the distance and obtain a numerical candidate $\hat{\sigma}$ for the closest separable state. Notably, our approach can be seamlessly generalized to locate the nearest state within any other convex compact sets $\mathcal{C}$, enabling the detection of arbitrary entanglement structures, which we refer to as the DREAM model. Optimization details for the different scenarios are provided in Appendix~\ref{app:dream}.

\section{Experimental realization}
With the construction of GEWs, we employ them to experimentally detect target entangled states on a four-qubit nuclear magnetic resonance (NMR) platform \cite{PhysRevLett.129.070502,PhysRevLett.132.210403,PhysRevLett.133.040401,y8wr-yml4,huang2025direct}. The sample used is \(^{13}\)C-labeled trans-crotonic acid dissolved in \(d_6\)-acetone, where four \(^{13}\)C nuclear spins constitute a four-qubit quantum system. The internal Hamiltonian of the system is described as
\begin{equation}
    \mathcal{H}_{\text{NMR}} = -\sum_{i=1}^4 \pi \nu_i \sigma_z^i + \sum_{i<j}^4 \pi \frac{J_{ij}}{2} \sigma_z^i \sigma_z^j,
\end{equation}
where \(\nu_i\) denotes the chemical shift of each spin, and \(J_{ij}\) represents the coupling strength between spins \(i\) and \(j\). Single-qubit gates are implemented using radio-frequency (rf) pulses with tailored shapes, while two-qubit gates are achieved via the natural coupling between qubits. To prepare the target quantum states in our experiments, shaped rf pulses were optimized using gradient ascent pulse engineering \cite{KHANEJA2005296}. The GEWs are computed from the intended target states specified by these preparation protocols, rather than reconstructed from tomography of unknown experimental states. The expectation values of the GEW on the target state were then measured through local Pauli measurements. All experiments were performed on a Bruker 300-MHz spectrometer at room temperature. Additional experimental details are given in Appendix~\ref{app:experiment}.

\section{Bound entanglement}
\label{sec:bound}
We first demonstrate the detection of bound entanglement, a special type of entanglement that cannot be distilled into Bell states via local operations and classical communication. It is well established that PPT entangled states are bound entangled \cite{PhysRevLett.80.5239,horodecki1997separability} and cannot be detected by fidelity-based witnesses \cite{PhysRevLett.126.140503}. Here, we consider a one-parameter $3 \otimes 3$ quantum state $\rho_b$ introduced by the Horodecki family \cite{PhysRevLett.82.1056}. It is known that $\rho_b$ is separable for $2 \leq b \leq 3$, bound entangled for $1 \leq b < 2$ and $3 < b \leq 4$, and free entangled for $0 \leq b < 1$ and $4 < b \leq 5$ \cite{note1}.

\begin{figure}[h]
    \centering
    \includegraphics[width=0.45\textwidth]{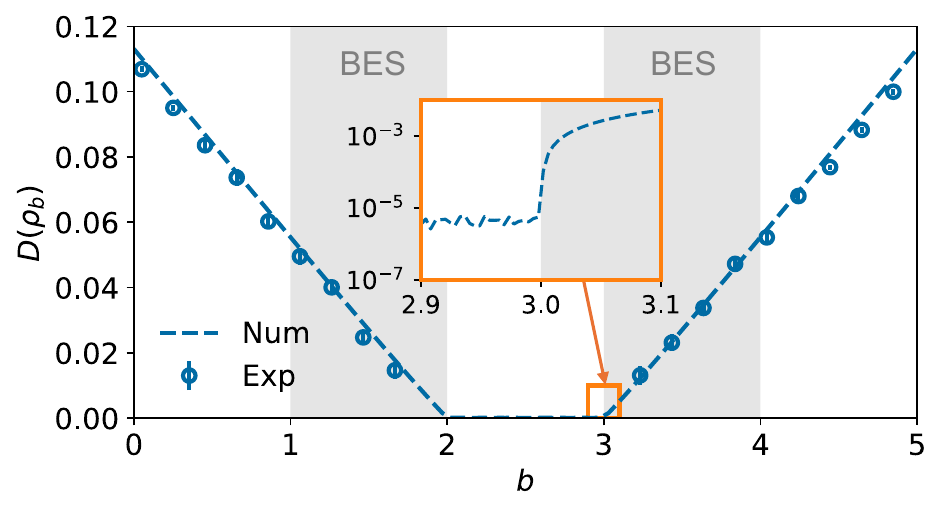}
    \caption{Numerical and experimental results for the Horodecki states \(\rho_b\). A positive value indicates the presence of entanglement. The dashed line represents the numerical results of the HS measure. Dots denote the mean expectation values of the GEWs, multiplied by a negative sign, from repeated experiments, while the error bars represent the standard deviations. The bound entangled state (BES) regions are shaded in gray.}
    \label{fig:bound_entanglement}
\end{figure}

We apply our method to compute the HS measure \(D(\rho_b)\) numerically and obtain the expectation values of the GEWs from the experimental data. As shown in Fig.~\ref{fig:bound_entanglement}, our method accurately identifies the boundary between the bound entangled region and the separable states, exhibiting strong alignment between the numerical and experimental results. This demonstrates the capability of the method to detect such subtle entanglement structures.

We can further compare our method with the Gilbert algorithm in detecting bound entangled states constructed from the unextendible product basis \cite{PhysRevLett.82.5385}. For the Tiles bound entangled state $\rho_{\text{Tiles}}$, the HS measure computed by our method is around $0.043455$, whereas Ref.~\cite{PhysRevA.102.012409} reports $D^2\simeq0.002177$ for the Gilbert algorithm, equivalently $D\simeq0.046658$. Under the same distance convention, our result gives a smaller HS distance. This implies that our method can identify a closer separable state to the target state, thereby constructing a more effective EW. Additionally, our numerical results reveal some novel symmetries of the nearest separable state to $\rho_{\text{Tiles}}$ in Appendix~\ref{app:tiles}.

We also compare the noise tolerance of our GEW with that of the operator-Schmidt-decomposition (OSD) witness~\cite{PhysRevLett.133.040203} by introducing depolarizing noise to $\rho_{\text{Tiles}}$. According to Theorem~\ref{theo:threshold}, the GEW demonstrates a higher robustness to noise, tolerating a lower visibility ($p_{\text{GEW}} \approx 0.8687$) compared to the OSD witness ($p_{\text{OSD}} \approx 0.8908$). Notably, this robustness is achieved without incorporating the noise model into the optimization process, unlike the OSD witness, which explicitly accounts for it.

\section{High-dimensional entanglement}
The entanglement dimensionality of a bipartite state $\rho$ is quantified by its Schmidt number \cite{terhal2000bell}:
\begin{equation}
    \mathrm{SN}(\rho) = \min_{\{\lambda_i, |\psi_i\rangle\}} \max_i \mathrm{SR}(|\psi_i\rangle),
\end{equation}
where $\mathrm{SR}(|\psi_i\rangle)$ denotes the Schmidt rank of the pure state $|\psi_i\rangle$, and $\rho = \sum_i \lambda_i |\psi_i\rangle\langle \psi_i|$ represents the possible convex decomposition of $\rho$.

To identify $k$-entangled states whose Schmidt numbers are at least $k$, one can construct a Schmidt number witness $W_k$, which generalizes the concept of an EW. Specifically, $W_k$ satisfies the condition $\langle W_k \rangle_{\sigma} \geq 0$ for all states $\sigma$ with Schmidt number less than $k$, while there exists at least one $k$-entangled state $\rho$ such that $\langle W_k \rangle_{\rho} < 0$.

In this scenario, the convex set $\mathcal{C}$ can be chosen as the set of all states with a Schmidt number less than $k$ to construct GEWs. The parametrization can be performed analogously by replacing the product pure states with states that have Schmidt ranks less than $k$ (see Appendix~\ref{app:dream-parameterization}).

Here, we first consider the $d \otimes d$ isotropic states, which are defined as follows \cite{PhysRevA.59.4206}:
\begin{equation}
    \rho_{I}(F) = F |\Psi_d^+\rangle\langle \Psi_d^+| + (1-F) \frac{I - |\Psi_d^+\rangle\langle \Psi_d^+|}{d^2 - 1},
    \label{eq:isotropic_state}
\end{equation}
where $|\Psi_d^+\rangle = \frac{1}{\sqrt{d}} \sum_{i=0}^{d-1} |ii\rangle$ is the maximally entangled state in $d \otimes d$ systems. We find that GEWs for isotropic states are in the form of fidelity-based witnesses, with the proof given in Appendix~\ref{app:isotropic-proof}:
\begin{theorem}[GEW for isotropic state]
For the isotropic state $\rho_I(F)$ in $d \otimes d$ systems, defined in Eq.~\eqref{eq:isotropic_state}, its Schmidt number $\mathrm{SN}(\rho_I(F)) = k$ for $F \in \left(\frac{k-1}{d}, \frac{k}{d}\right]$, and can be certified by the corresponding GEW:
\begin{equation}
    W_k = c \cdot \left(\frac{k-1}{d}I - |\Psi_d^+\rangle\langle \Psi_d^+| \right),
\end{equation}
where the coefficient $c=\frac{d}{\sqrt{d^2-1}}$.
\label{theo:isotropic}
\end{theorem}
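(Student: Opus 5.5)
The plan has three steps: identify the nearest state $\sigma^\star$ in $\mathcal{C}_k=\{\sigma:\mathrm{SN}(\sigma)\le k-1\}$ to $\rho_I(F)$, plug it into Eq.~\eqref{eq:optimal_witness}, and verify optimality through the hyperplane condition. We rely on two standard facts. First, the maximal fidelity of a Schmidt-rank-$(k-1)$ pure state with $|\Psi_d^+\rangle$ is $\frac{k-1}{d}$. Hence $\langle\Psi_d^+|\sigma|\Psi_d^+\rangle\le\frac{k-1}{d}$ for every $\sigma\in\mathcal{C}_k$. Second, $\rho_I(F)$ has Schmidt number at most $k-1$ whenever $F\le\frac{k-1}{d}$; this follows from twirling a rank-$(k-1)$ state with fidelity $\frac{k-1}{d}$ and mixing with $\rho_I(0)$. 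Together these give $\mathrm{SN}(\rho_I(F))=k$ for $F\in\left(\frac{k-1}{d},\frac{k}{d}\right]$, which is the first claim.

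For the nearest state, we use the $U\otimes U^*$ twirl $\mathcal{T}$. It is HS-nonexpansive and fixes $\rho_I(F)$. It also maps $\mathcal{C}_k$ into itself, since local unitaries preserve Schmidt rank, and $\mathcal{C}_k$ is convex. Consequently, for any $\sigma\in\mathcal{C}_k$, $\|\rho_I(F)-\mathcal{T}(\sigma)\|\le\|\rho_I(F)-\sigma\|$, so a minimizer may be sought among isotropic states. The distance $\|\rho_I(F)-\rho_I(F')\|$ is monotone in $|F-F'|$, so the candidate is $\sigma^\star=\rho_I\!\left(\frac{k-1}{d}\right)$. Rather than rely on uniqueness inside the symmetric family, we certify it with the supporting-hyperplane criterion used for Eq.~\eqref{eq:optimal_witness}. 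The criterion is that $\sigma^\star$ is nearest if and only if $\mathrm{Tr}[(\rho-\sigma^\star)(\sigma-\sigma^\star)]\le0$ for all $\sigma\in\mathcal{C}_k$.

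To apply it, write $P=|\Psi_d^+\rangle\langle\Psi_d^+|$. A direct computation gives $\rho_I(F)-\sigma^\star=\left(F-\frac{k-1}{d}\right)\frac{d^2}{d^2-1}\left(P-\frac{I}{d^2}\right)$. The criterion therefore reduces to $\mathrm{Tr}[P\sigma]\le\mathrm{Tr}[P\sigma^\star]=\frac{k-1}{d}$, which is exactly the fidelity bound above. This step is the main obstacle: everything hinges on the Schmidt-rank fidelity bound, and we would prove it via the Schmidt decomposition, $|\langle\Psi_d^+|\phi\rangle|^2\le\frac{1}{d}\left(\sum_{i<k}\sqrt{\mu_i}\right)^2\le\frac{k-1}{d}$ by Cauchy–Schwarz.

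Finally, we substitute into Eq.~\eqref{eq:optimal_witness}. The prefactor $F-\frac{k-1}{d}$ is positive and cancels under normalization, leaving $W\propto\left(\frac{I}{d^2}-P\right)-\mathrm{Tr}\!\left[\sigma^\star\left(\frac{I}{d^2}-P\right)\right]I=\frac{k-1}{d}I-P$. The coefficient is fixed by $\|P-I/d^2\|=\sqrt{1-1/d^2}=\frac{\sqrt{d^2-1}}{d}$, which yields $c=\frac{d}{\sqrt{d^2-1}}$. As a consistency check, $\langle W_k\rangle_{\rho_I(F)}=c\left(\frac{k-1}{d}-F\right)<0$ in the stated range.
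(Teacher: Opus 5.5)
Your proposal is correct and follows essentially the same route as the paper: you take $\sigma^\star=\rho_I(\frac{k-1}{d})$, substitute it into Eq.~\eqref{eq:optimal_witness}, and certify it with the fidelity bound $\langle\Psi_d^+|\sigma|\Psi_d^+\rangle\le\frac{k-1}{d}$ on states of Schmidt number at most $k-1$, which is exactly the paper's check that $\mathrm{Tr}[W_k\sigma]\ge 0$ and is equivalent to your supporting-hyperplane criterion. The differences are that you prove the Schmidt-number characterization and the fidelity bound (Lemma~1 of Ref.~\cite{PhysRevA.63.050301}) rather than citing them, and that your twirling argument together with the explicit nearest-point criterion rigorously justify the step the paper only describes as ``suggested'' by the Schmidt-number result.
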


We leverage the above analytical results to evaluate the performance of our method across various dimensions. The findings indicate that, for quantum systems of moderate size, the computation time scales mildly with increasing local dimension and remains exceptionally low. Furthermore, the computational error is consistently maintained below $10^{-9}$. These results underscore the efficiency and accuracy of our method in handling quantum systems of moderate size. Detailed results are provided in Appendix~\ref{app:dream-benchmark}.

\begin{figure}[H]
    \centering    \includegraphics[width=0.45\textwidth]{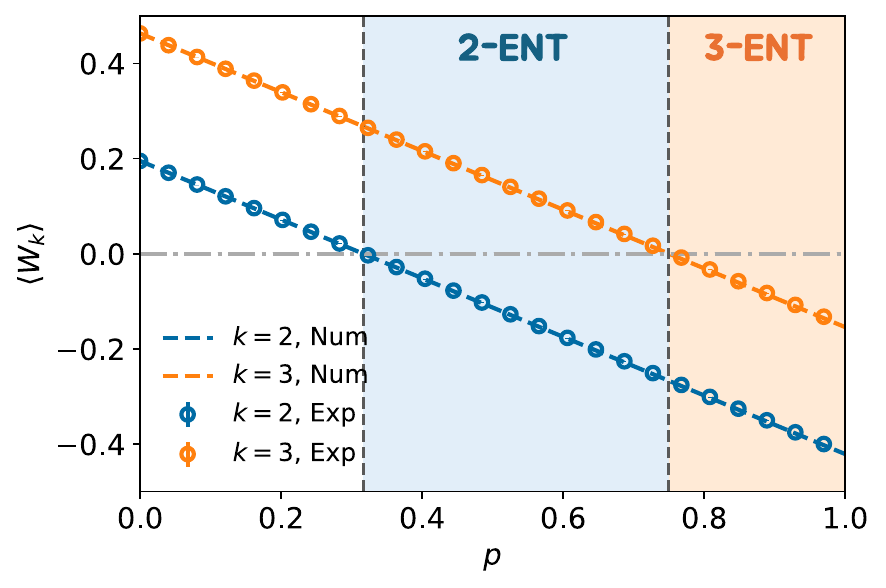}
    \caption{Numerical and experimental results for the state $\varrho$ in Eq.~(\ref{eq:varrho}) under depolarizing noise with different visibility $p$. Here, $W_k$ represents the constructed GEW for the target state $\varrho$, and a negative value of \(\langle W_k \rangle\) indicates the presence of \(k\)-entangled states. The dashed lines correspond to the numerical results, while the dots denote the experimental results obtained from repeated experiments. The blue and orange regions indicate the areas of different \(k\)-entangled states detected by the GEW.}
    \label{fig:high_dimensional_entanglement}
\end{figure}

It is worth noting that for general states lacking such high symmetry, fidelity-based witnesses fail to detect their Schmidt numbers. For example, the following state in a $4 \otimes 4$ system \cite{PhysRevLett.124.200502}:
\begin{equation}
    \varrho = \frac{1}{2} \left|\Psi_{3}^+\right\rangle \left\langle \Psi_{3}^+\right| + \frac{1}{2} \left(\frac{|23\rangle + |32\rangle}{\sqrt{2}}\right) \left(\frac{\langle 23| + \langle 32|}{\sqrt{2}}\right),
\label{eq:varrho}
\end{equation}
whose Schmidt number equals three but cannot be certified by fidelity-based witnesses. Although OSD witnesses can detect its Schmidt number, the efficient construction of OSD witnesses for arbitrary Schmidt numbers remains unsolved. In contrast, our method provides a more intuitive and straightforward approach to constructing GEWs for Schmidt number witnesses, making it significantly easier to detect states with higher Schmidt numbers.

To demonstrate the capabilities of our method, we investigate the noise tolerance of our GEWs in detecting the Schmidt number of the aforementioned state $\varrho$ under depolarizing noise $\mathcal{N}_p$. As shown in Fig.~\ref{fig:high_dimensional_entanglement}, numerical and experimental results clearly demonstrate the transition from separable states to $2$-entangled states and subsequently to $3$-entangled states as the visibility $p$ increases. Our GEWs exhibit significantly better noise tolerance compared to the CCNR witness. Specifically, our GEWs can detect $3$-entangled states at a much lower visibility threshold ($p_{\text{GEW}} \approx 0.75$) than the CCNR witness ($p_{\text{CCNR}} \approx 0.83$), as reported in \cite{PhysRevLett.133.040203}. This enhanced robustness highlights the practical advantages of our method in detecting high-dimensional entanglement under realistic noise conditions.

\section{Multipartite entanglement}
\label{sec:multipartite}
GEWs can also be applied to detect multipartite entanglement. Here, we focus on the detection of genuinely entangled states that cannot be expressed as a convex combination of biseparable states across different bipartitions. To detect genuine entanglement, we can choose the convex set $\mathcal{C}$ as the set of all biseparable states. The parametrization can be done similarly by replacing the product pure states with biproduct pure states under various bipartitions (see Appendix~\ref{app:dream-parameterization}).

We consider the following $n$-qubit states under depolarizing noise defined as $\mathcal{N}_p(|\psi\rangle) = p |\psi\rangle\langle\psi| + (1-p)\frac{I}{2^n}$, where $p$ denotes the visibility and $I$ is the identity operator in the $2^n$-dimensional Hilbert space. The states investigated include the three-qubit W state $|W_3\rangle$ and the four-qubit W state $|W_4\rangle$. As shown in Fig.~\ref{fig:genuine_entanglement}, for each state, the visibility threshold given by GEWs is lower than that of the fidelity-based witnesses and OSD witnesses \cite{PhysRevLett.133.040203}. Moreover, when the depolarizing noise model is explicitly incorporated into the optimization described in Appendix~\ref{app:newton}, the approximate numerical estimate for the three-qubit W state threshold changes from $0.541$ to about $0.478$, numerically approaching the known benchmark reported in Ref.~\cite{PhysRevLett.106.190502}.

\begin{figure}[H]
    \centering
    \includegraphics[width=0.45\textwidth]{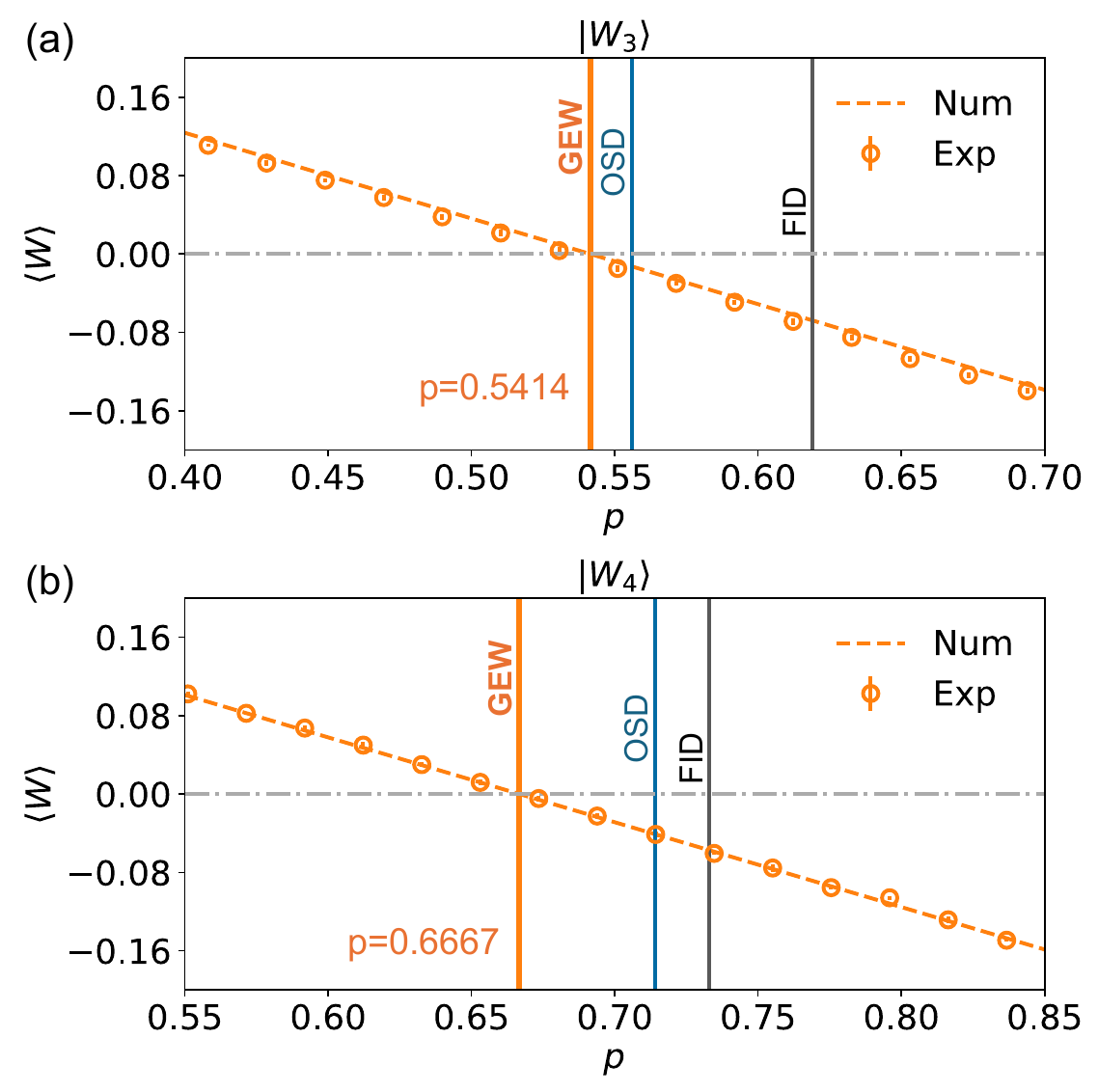}
    \caption{Numerical and experimental results for the noisy state \(\mathcal{N}_p(|\psi\rangle) = p |\psi\rangle\langle\psi| + (1-p)\frac{I}{2^n}\) for (a) three-qubit W state and (b) four-qubit W state. A negative expectation value of the witness indicates the presence of genuine entanglement. The dashed lines represent the numerical results, while dots denote the experimental results from repeated experiments. The solid black and blue lines are the visibility thresholds given by fidelity-based witnesses and OSD witnesses, while the orange ones represent the thresholds from GEWs.}
    \label{fig:genuine_entanglement}
\end{figure}

Besides the bipartition of multipartite states, GEWs can also be applied to finer partitions, such as the detection of completely entangled states. In this scenario, the convex set $\mathcal{C}$ is chosen as the set of fully separable states under the given partition. Further results are provided in Appendix~\ref{app:additional-results}.

\section{Conclusion}
We propose a universal method for constructing GEWs that can detect arbitrary entanglement structures of target states while achieving maximal distinguishability among all nontrivial witnesses. This enhanced distinguishability improves the noise tolerance of our detection method, as demonstrated for depolarizing noise, compared to existing approaches.

In addition to its theoretical strengths, our approach demonstrates practical feasibility in quantum experiments, providing a versatile tool for entanglement detection in real-world scenarios. Furthermore, the proposed method can be extended to detect other quantum resources \cite{RevModPhys.91.025001,zhu2025unified}, highlighting its broad applicability in quantum information science.

\begin{acknowledgments}
X.N. and D.L. are supported by the National Natural Science Foundation of China
(Grants No. 12574543 and No. 12575020), the Guangdong Provincial Quantum Science Strategic Initiative (Grants No. GDZX2203001, No. GDZX2303001, No. GDZX2403004, No. GDZX2503001, and No. GDZX2506002),
and the Guangdong Basic and Applied Basic Research Foundation (Grants No. 2025A1515011599 and No. 2026B1515020012).
\end{acknowledgments}

\section*{Data Availability}
The code and data for this work are open source and available at \cite{repo}.

\appendix
\counterwithin*{theorem}{section}
\renewcommand{\thetheorem}{\thesection\arabic{theorem}}

\setlength{\textfloatsep}{10pt plus 2pt minus 2pt}
\setlength{\floatsep}{10pt plus 2pt minus 2pt}
\setlength{\dbltextfloatsep}{12pt plus 2pt minus 2pt}
\setlength{\dblfloatsep}{10pt plus 2pt minus 2pt}

\section{Proof of Theorem~\ref{theo:threshold}}
\label{app:threshold-proof}

\begin{proof}
For the target state $\rho$, the corresponding GEW has the following form:
\[
W = \frac{(\sigma^{\star} - \rho) - \mathrm{Tr}[\sigma^{\star}(\sigma^{\star} - \rho)] I}{\|\sigma^{\star} - \rho\|},
\]
where $\sigma^\star$ is the nearest separable state to $\rho$ under the HS measure. The visibility threshold $p_{\text{GEW}}$ satisfies
\[
\mathrm{Tr}\left[W\left(p_{\text{GEW}}\rho+(1-p_{\text{GEW}})\frac{I}{d}\right)\right]=0.
\]
From the above equations, we can derive that
\[
p_{\text{GEW}}=\frac{\mathrm{Tr}[\frac{W}{d}]}{\mathrm{Tr}[\frac{W}{d}]-\mathrm{Tr}[W\rho]}=\frac{\mathrm{Tr}[\sigma^\star(\rho-\sigma^\star)]}{\mathrm{Tr}[\rho(\rho-\sigma^\star)]}.
\]
For simplicity, we denote the inner product of two matrices as $\langle A,B\rangle=\mathrm{Tr}[A^\dagger B]$, so that
\[
p_{\text{GEW}}=1-\frac{\mathrm{Tr}[(\rho-\sigma^\star)^2]}{\mathrm{Tr}[\rho(\rho-\sigma^\star)]}=1-\frac{\|\rho-\sigma^\star\|^2}{\langle\rho,\rho-\sigma^\star\rangle}.
\]
Since $\langle A,B\rangle\leq \|A\|\cdot\|B\|$, we obtain that
\[
p_{\text{GEW}}\leq 1 -\frac{\|\rho-\sigma^\star\|}{\|\rho\|}=1-\frac{D(\rho)}{\sqrt{\gamma(\rho)}},
\]
where $D(\rho)$ is the HS measure of $\rho$ and $\gamma(\rho)=\mathrm{Tr}[\rho^2]$ is the purity of $\rho$.
\end{proof}
\section{Proof of Corollary~\ref{coro:compare-fidelity-witness}}
\label{app:corollary-proof}
\begin{proof}
    For any bipartite pure state $|\psi\rangle$ in $d_A\otimes d_B$, the resulting state after depolarizing channel is $\mathcal{N}_p(|\psi\rangle)=p|\psi\rangle\langle\psi|+(1-p)\frac{I}{d}$, where $d=d_Ad_B$ is the total dimension. The corresponding fidelity-based witness is given by
    \[
    W = \mu_1(|\psi\rangle)^2I-|\psi\rangle\langle\psi|,
    \]
    where $\mu_1$ denotes the largest Schmidt coefficient of $|\psi\rangle$. Then it is easy to derive that the visibility threshold is
    \[
    p_{\text{FID}}=\frac{\mu_1^2-\frac{1}{d}}{1-\frac{1}{d}}=1-\frac{1-\mu_1^2}{1-\frac{1}{d}}.
    \]
    Next, we aim to prove that $D(|\psi\rangle)>\frac{1-\mu_1^2}{1-\frac{1}{d}}$, such that
    \[
    p_{\text{GEW}}\leq 1-D(|\psi\rangle)<p_{\text{FID}}.
    \]
    First, it is easy to prove that $\max_{\sigma\in\mathrm{SEP}}\mathrm{Tr}[\rho_\psi\sigma]=\max_{\sigma\in\mathrm{SEP}}\langle\psi|\sigma|\psi\rangle=\mu_1^2$. Then for any $\sigma \in \mathrm{SEP}$, we have
    \[
    \mathrm{Tr}[(\rho_\psi-\frac{I}{d})(\rho_\psi-\sigma)]=1-\mathrm{Tr}[\rho_\psi\sigma]\geq 1-\mu_1^2.
    \]
    On the other hand, by Cauchy-Schwarz inequality for the Hilbert-Schmidt inner product, we have
    \[
    \begin{aligned}
        \mathrm{Tr}[(\rho_\psi-\frac{I}{d})(\rho_\psi-\sigma)] &\leq \|\rho_\psi-\frac{I}{d}\|\cdot\|\rho_\psi-\sigma\|\\
        &=\sqrt{1-\frac{1}{d}}\|\rho_\psi-\sigma\|.
    \end{aligned}
    \]
    Finally, we obtain that
    \[
    D(|\psi\rangle)=\min_{\sigma\in \mathrm{SEP}}\|\rho_\psi-\sigma\|\geq \frac{1-\mu_1^2}{\sqrt{1-\frac{1}{d}}}>\frac{1-\mu_1^2}{1-\frac{1}{d}}.
    \]
\end{proof}
\section{Proof of Theorem~\ref{theo:isotropic}}
\label{app:isotropic-proof}
\begin{proof}
For simplicity, here, we denote the set of states with Schmidt number no larger than $k$ as $S_k$ and the inner product of two operators $A$ and $B$ as $\langle A, B \rangle = \mathrm{Tr}[A^{\dagger} B]$.

From Ref.~\cite{PhysRevA.63.050301}, $\mathrm{SN}(\rho_I(F))= k$ for $\frac{k-1}{d}<F\leq\frac{k}{d}$. This suggests that the nearest state in $S_{k-1}$ to $\rho_I(F)$ under the HS distance is $\rho_I(\frac{k-1}{d})$. It is therefore sufficient to prove that the following witness constructed from Eq.~\eqref{eq:optimal_witness}
\begin{equation}
    W_k=\frac{(\rho_I(\frac{k-1}{d})-\rho_I(F))-\langle\rho_I(\frac{k-1}{d}),\rho_I(\frac{k-1}{d})-\rho_I(F)\rangle I}{\|\rho_I(\frac{k-1}{d})-\rho_I(F)\|}
    \label{eq:witness-isotropic}
\end{equation}
is effective \cite{PhysRevA.72.052331} for $F>\frac{k-1}{d}$. Using the following results:
\[
    \begin{aligned}
        \langle\rho_I(F_1),\rho_I(F_2)\rangle=F_1F_2+\frac{(1-F_1)(1-F_2)}{d^2-1}, \\
        \|\rho_I(F_1)-\rho_I(F_2)\|=\frac{d}{\sqrt{d^2-1}}|F_1-F_2|,
    \end{aligned}
\]
we can simplify the witness in Eq.~(\ref{eq:witness-isotropic}) to the following form:
\[
    W_k=\frac{d}{\sqrt{d^2-1}}\left(\frac{k-1}{d}I-|\Psi_d^+\rangle\langle\Psi^+_d|\right).
\]

For isotropic states $\rho_I(F)$ with $F>\frac{k-1}{d}$, we can obtain that
\[
    \mathrm{Tr}[W_k\rho_I(F)]=-\frac{d}{\sqrt{d^2-1}}(F-\frac{k-1}{d})<0,
\]
while for quantum states $\sigma \in S_{k-1}$, according to Lemma 1 from Ref.~\cite{PhysRevA.63.050301}, we have
\[
    \mathrm{Tr}[W_k\sigma]=\frac{k-1}{\sqrt{d^2-1}}(1-\frac{d}{k-1}\langle\Psi^+_d|\sigma|\Psi^+_d\rangle)\geq 0.
\]
Thus, we prove that the corresponding witness is effective.
\end{proof}
\section{Detection of arbitrary entanglement structures}
\label{app:general-proof}
Here, we define the HS measure for a quantum state $\rho$ with respect to a convex and compact set $\mathcal{C}$ as
\begin{equation}
    D_{\mathcal{C}}(\rho) = \min_{\sigma \in \mathcal{C}} \|\rho - \sigma\|.
\end{equation}
In this framework, a witness $W_{\mathcal{C}}$ for detecting states outside the set $\mathcal{C}$ is defined as a Hermitian operator satisfying $\langle W_{\mathcal{C}} \rangle_{\sigma} \geq 0$ for all states $\sigma \in \mathcal{C}$, with the additional property that there exists at least one state $\rho \notin \mathcal{C}$ such that $\langle W_{\mathcal{C}} \rangle_{\rho} < 0$.

For any state $\rho$ outside the set $\mathcal{C}$, the Hahn-Banach separation theorem ensures the existence of a witness \cite{reed1980methods}:
\begin{theorem}
    Given a convex and compact subset $\mathcal{C}$ of the quantum state space, for any state $\rho \notin \mathcal{C}$, there exists a valid witness $W_{\mathcal{C}}$ for detecting it.
\end{theorem}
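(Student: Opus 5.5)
The plan is to prove the statement constructively, using the same closest-point construction as Eq.~\eqref{eq:optimal_witness}, rather than quoting an abstract Hahn--Banach theorem. Work in the real vector space of Hermitian $d\times d$ matrices with the Hilbert--Schmidt inner product $\langle A,B\rangle=\mathrm{Tr}[AB]$. This is a finite-dimensional Euclidean space. Since $\mathcal{C}$ is nonempty, compact and convex, and $\sigma\mapsto\|\rho-\sigma\|$ is continuous, the minimum $D_{\mathcal{C}}(\rho)$ is attained at some $\sigma^\star\in\mathcal{C}$. Strict convexity of the squared norm makes this minimizer unique, although uniqueness is not needed. Because $\rho\notin\mathcal{C}$, we have $D_{\mathcal{C}}(\rho)=\|\rho-\sigma^\star\|>0$.

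The key step is the variational inequality characterizing the projection. For any $\sigma\in\mathcal{C}$ and $t\in[0,1]$, convexity gives $\sigma_t=\sigma^\star+t(\sigma-\sigma^\star)\in\mathcal{C}$, so $\|\rho-\sigma_t\|^2\ge\|\rho-\sigma^\star\|^2$. Expanding, this reads
$$-2t\langle\rho-\sigma^\star,\sigma-\sigma^\star\rangle+t^2\|\sigma-\sigma^\star\|^2\ge0.$$
Dividing by $t$ and letting $t\to0^+$ yields $\langle\sigma^\star-\rho,\sigma-\sigma^\star\rangle\ge0$ for all $\sigma\in\mathcal{C}$.

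Next define
$$W_{\mathcal{C}}=\frac{(\sigma^\star-\rho)-\mathrm{Tr}[\sigma^\star(\sigma^\star-\rho)]I}{\|\sigma^\star-\rho\|},$$
which is Hermitian. For $\sigma\in\mathcal{C}$, using $\mathrm{Tr}\,\sigma=1$,
$$\mathrm{Tr}[W_{\mathcal{C}}\sigma]=\frac{\langle\sigma^\star-\rho,\sigma-\sigma^\star\rangle}{\|\sigma^\star-\rho\|}\ge0.$$
For the target state, using $\mathrm{Tr}\,\rho=1$,
$$\mathrm{Tr}[W_{\mathcal{C}}\rho]=\frac{\langle\sigma^\star-\rho,\rho-\sigma^\star\rangle}{\|\sigma^\star-\rho\|}=-D_{\mathcal{C}}(\rho)<0.$$
Hence $W_{\mathcal{C}}$ is a valid witness detecting $\rho$. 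It is moreover tangent at $\sigma^\star$ and gives the value $-D_{\mathcal{C}}(\rho)$, mirroring the bipartite case.

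The only delicate point is the trace normalization. The identity shift relies on every element of $\mathcal{C}$ having unit trace, which holds because $\mathcal{C}$ is a subset of the state space. Existence of the minimizer needs only compactness, which the statement assumes. No other obstacle is expected.
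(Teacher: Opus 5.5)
Your proof is correct, but it follows a different route from the paper. For this statement the paper gives no argument: it simply invokes the Hahn--Banach separation theorem \cite{reed1980methods}. You instead prove the finite-dimensional separation directly through the metric projection onto $\mathcal{C}$. You get existence of $\sigma^\star$ from compactness and the variational inequality $\langle\sigma^\star-\rho,\sigma-\sigma^\star\rangle\ge 0$ from a one-sided derivative, and you then write down an explicit Hermitian witness whose zero threshold is fixed by an identity shift using $\mathrm{Tr}\,\sigma=1$.

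The citation is shorter and does not depend on a particular norm or a particular separating hyperplane. However, it leaves implicit how the abstract separating functional and its constant become a Hermitian operator with threshold zero. That step needs the Riesz representation plus exactly the unit-trace identity shift you make explicit.

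Your argument is self-contained and produces the witness of Eq.~\eqref{eq:optimal_witness} concretely. It also already gives $\langle W_{\mathcal{C}}\rangle_\rho=-D_{\mathcal{C}}(\rho)$ and tangency at $\sigma^\star$. In particular, your variational inequality is precisely the ``variational condition for the nearest point'' that the paper uses without proof in the next theorem ($D_{\mathcal{C}}=B_{\mathcal{C}}$).

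The only case you leave unaddressed is $\mathcal{C}=\emptyset$, where your minimizer does not exist but the claim is trivial (take $W_{\mathcal{C}}=-I$).
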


The maximal distinguishability of a target state $\rho$ from the set $\mathcal{C}$ is defined as
\begin{equation}
    B_{\mathcal{C}}(\rho) = \max_{\|W_{\mathcal{C}} - \mathrm{Tr}[W_{\mathcal{C}}]I/d\| \leq 1} (\min_{\sigma \in \mathcal{C}} \langle W_{\mathcal{C}} \rangle_{\sigma} - \langle W_{\mathcal{C}} \rangle_{\rho}),
\end{equation}
where $d$ is the Hilbert-space dimension. This quantity is the maximum gap between the minimum expectation value of $W_{\mathcal{C}}$ over all states in $\mathcal{C}$ and the expectation value of $W_{\mathcal{C}}$ for the state $\rho$. The norm constraint on the trace-free part of $W_{\mathcal{C}}$ prevents trivial scaling. When the witness is shifted so that its minimum expectation value on $\mathcal{C}$ is zero, the expression reduces to
\begin{equation}
    B_{\mathcal{C}}(\rho) = \max_{\|W_{\mathcal{C}} - \mathrm{Tr}[W_{\mathcal{C}}]I/d\| \leq 1} -\langle W_{\mathcal{C}} \rangle_{\rho}.
\end{equation}
The following theorem establishes the alignment between the HS measure and the maximal distinguishability. Furthermore, it provides a geometric approach to constructing a geometrically optimal witness in the precise sense defined above.
\begin{theorem}
    For any convex compact set $\mathcal{C}$ and quantum state $\rho \notin \mathcal{C}$, the following relation holds:
    \begin{equation}
        D_{\mathcal{C}}(\rho) = B_{\mathcal{C}}(\rho),
    \end{equation}
    and the geometrically optimal witness $W_{\mathcal{C}}$ that achieves the maximum in $B_{\mathcal{C}}(\rho)$ is given by
    \begin{equation}
        W_{\mathcal{C}}(\sigma^{\star}) = \frac{(\sigma
^{\star} - \rho) - \mathrm{Tr}[\sigma^{\star}(\sigma^{\star} - \rho)] I}{\|\sigma^{\star} - \rho\|},
    \end{equation}
    where $\sigma^{\star} = \arg\min_{\sigma \in \mathcal{C}} \|\rho - \sigma\|$ is the nearest state in $\mathcal{C}$ to $\rho$ under the HS distance.
\end{theorem}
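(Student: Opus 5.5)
We prove the two inequalities $B_{\mathcal{C}}(\rho)\geq D_{\mathcal{C}}(\rho)$ and $B_{\mathcal{C}}(\rho)\leq D_{\mathcal{C}}(\rho)$ separately. The first is achieved by the explicit witness $W_{\mathcal{C}}(\sigma^\star)$; the second follows from a Cauchy--Schwarz argument valid for every admissible witness. Throughout we work in the real Hilbert space of Hermitian operators with $\langle A,B\rangle=\mathrm{Tr}[AB]$. We use that all states have unit trace, so adding multiples of $I$ to a witness shifts $\min_{\sigma\in\mathcal{C}}\langle W\rangle_\sigma$ and $\langle W\rangle_\rho$ equally. Compactness and convexity of $\mathcal{C}$ give existence and uniqueness of $\sigma^\star$ by the Hilbert projection theorem. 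Since $\rho\notin\mathcal{C}$, we have $D_{\mathcal{C}}(\rho)=\|\rho-\sigma^\star\|>0$.

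\emph{Lower bound.} First recall the variational characterization of the projection. For every $\sigma\in\mathcal{C}$ and $t\in[0,1]$, the point $\sigma^\star+t(\sigma-\sigma^\star)$ lies in $\mathcal{C}$. Expanding $\|\rho-\sigma^\star-t(\sigma-\sigma^\star)\|^2\geq\|\rho-\sigma^\star\|^2$ to first order in $t$ gives $\langle \sigma-\sigma^\star,\sigma^\star-\rho\rangle\geq 0$. Hence
\[
\mathrm{Tr}[W_{\mathcal{C}}(\sigma^\star)\sigma]=\frac{\langle\sigma^\star-\rho,\sigma\rangle-\langle\sigma^\star-\rho,\sigma^\star\rangle}{\|\sigma^\star-\rho\|}\geq 0
\]
for every $\sigma\in\mathcal{C}$, with equality at $\sigma=\sigma^\star$, so the minimum over $\mathcal{C}$ is exactly $0$. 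Next evaluate at $\rho$: $\mathrm{Tr}[W_{\mathcal{C}}(\sigma^\star)\rho]=\langle\sigma^\star-\rho,\rho-\sigma^\star\rangle/\|\sigma^\star-\rho\|=-\|\rho-\sigma^\star\|$. Finally check the norm constraint. Because $\mathrm{Tr}[\sigma^\star-\rho]=0$, the trace-free part of $W_{\mathcal{C}}(\sigma^\star)$ is $(\sigma^\star-\rho)/\|\sigma^\star-\rho\|$, which has norm $1$. Thus $W_{\mathcal{C}}(\sigma^\star)$ is feasible, and its gap equals $D_{\mathcal{C}}(\rho)$.

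\emph{Upper bound.} Take any feasible $W$ and write $W=\alpha I+W_0$ with $\mathrm{Tr}[W_0]=0$ and $\|W_0\|\leq 1$. The gap depends only on $W_0$:
\[
\min_{\sigma\in\mathcal{C}}\langle W\rangle_\sigma-\langle W\rangle_\rho\leq \langle W_0,\sigma^\star\rangle-\langle W_0,\rho\rangle=\langle W_0,\sigma^\star-\rho\rangle\leq\|W_0\|\,\|\sigma^\star-\rho\|\leq D_{\mathcal{C}}(\rho).
\]
The first inequality simply evaluates the minimum at the particular point $\sigma^\star$, and the second is Cauchy--Schwarz. Combining with the lower bound gives $B_{\mathcal{C}}(\rho)=D_{\mathcal{C}}(\rho)$ with the maximum attained by $W_{\mathcal{C}}(\sigma^\star)$. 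This also yields $\langle W_{\mathcal{C}}(\sigma^\star)\rangle_\rho=-D_{\mathcal{C}}(\rho)$.

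\emph{Uniqueness and main obstacle.} If one also wants uniqueness of the maximizer up to the trivial identity shift, the equality case of Cauchy--Schwarz forces $W_0=(\sigma^\star-\rho)/\|\sigma^\star-\rho\|$. Requiring the minimum on $\mathcal{C}$ to be zero then fixes $\alpha$. The step needing the most care is the projection inequality $\langle\sigma-\sigma^\star,\sigma^\star-\rho\rangle\geq 0$, which is where convexity of $\mathcal{C}$ enters essentially. One must also correctly handle the reduction between the two forms of $B_{\mathcal{C}}$, namely that shifting by $I$ leaves both the gap and the trace-free norm unchanged. Once these points are settled, the rest is a direct computation.
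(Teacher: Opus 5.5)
Your proof is correct and takes essentially the same route as the paper. The lower bound uses the explicit witness built from $\sigma^\star$ together with the nearest-point inequality $\langle \sigma-\sigma^\star,\sigma^\star-\rho\rangle\ge 0$. The upper bound evaluates the minimum at a point of $\mathcal{C}$ and applies Cauchy--Schwarz to the trace-free part of $W$; the paper uses an arbitrary $\rho'\in\mathcal{C}$ there, while you take $\sigma^\star$ directly. Your derivation of the projection inequality and your uniqueness-up-to-identity-shift remark are details the paper leaves implicit.
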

\begin{proof}
For simplicity, we denote the inner product of two operators $A$ and $B$ as $\langle A, B \rangle = \mathrm{Tr}[A^{\dagger} B]$ and ignore the subscript $\mathcal{C}$ in $D_{\mathcal{C}}(\rho)$, $W_{\mathcal{C}}$ and $B_{\mathcal{C}}(\rho)$ in the following proof.

From the maximization, we know that for any $\rho'\in \mathcal{C}$:
    \[
        B(\rho)\geq\min_{\sigma\in \mathcal{C}}\langle \frac{\rho'-\rho}{\|\rho'-\rho\|},\sigma-\rho\rangle,
    \]
    where we choose $W=\frac{\rho'-\rho}{\|\rho'-\rho\|}$. On the other hand, from the minimization, we know that for any $\rho'\in \mathcal{C}$:
    \[
    \begin{aligned}
         B(\rho)&\leq \max_{\|W-\mathrm{Tr}[W]I/d\|\leq 1}\langle W, \rho'-\rho\rangle\\
         &= \max_{\|W-\mathrm{Tr}[W]I/d\|\leq 1}\langle W-\mathrm{Tr}[W]I/d, \rho'-\rho\rangle\\
         &\leq \|\rho'-\rho\|.
    \end{aligned}
    \]
    Therefore, for any $\rho'\in \mathcal{C}$,
   \[
        \min_{\sigma\in \mathcal{C}}\langle \frac{\rho'-\rho}{\|\rho'-\rho\|},\sigma-\rho\rangle\leq B(\rho)\leq \|\rho'-\rho\|.
    \]
    Here, we denote the nearest state in $\mathcal{C}$ under the HS distance as $\sigma^\star$, then let $\rho'=\sigma^\star$ so that $B(\rho)\leq \|\sigma^\star-\rho\|=D(\rho)$ and
    \[
            B(\rho) \geq \min_{\sigma\in \mathcal{C}}\langle \frac{\sigma^\star-\rho}{\|\sigma^\star-\rho\|},\sigma-\rho\rangle
                    =\|\sigma^\star-\rho\|=D(\rho),
    \]
    where the first equality follows from the variational condition for the nearest point in the convex compact set $\mathcal{C}$. The above inequalities are saturated when $D(\rho)=B(\rho)$, with the corresponding geometrically optimal witness
    \[
        W_{\text{opt}}=\frac{(\sigma^\star-\rho)-\langle\sigma^\star,\sigma^\star-\rho\rangle I}{\|\sigma^\star-\rho\|},
    \]
    where the part $\langle\sigma^\star,\sigma^\star-\rho\rangle I$ places the supporting hyperplane through $\sigma^\star$, while geometrical optimality follows from the maximal-distinguishability relation above.
\end{proof}

\section{Dynamic refinement of extremes approximation model (DREAM)}
\label{app:dream}
\subsection{Overview of the optimization procedure}
Here, we consider the general HS measure of a target state $\rho$ with respect to a convex, compact set $\mathcal{C}$:
\[
D_{\mathcal{C}}(\rho) = \min_{\sigma \in \mathcal{C}} \|\rho - \sigma\| = \min_{\sigma \in \mathcal{C}} \sqrt{\mathrm{Tr}[(\rho - \sigma)^2]}.
\]
The key step in this optimization is the parametrization of the convex set $\mathcal{C}$. Due to the convexity of $\mathcal{C}$, it can be expressed as the convex hull of all its extreme points. These extreme points, corresponding to pure states $\{|\psi_i\rangle\}$ with specific properties, allow us to approximate $\mathcal{C}$ using a convex hull with limited extreme points. A traditional convex hull approximation involves sampling a large number of extreme points and constructing the convex hull \cite{PhysRevA.98.012315, hou2020upper}. However, this method can be highly inefficient and inaccurate, as most of the sampled extreme points may be irrelevant to the computation of $D_{\mathcal{C}}$. While one could attempt to improve the approximation by sampling additional extreme points and discarding those with low probability weights, such a randomness-based updating scheme provides limited improvement.

\begin{figure}[h]
    \centering
    \includegraphics[width=\linewidth]{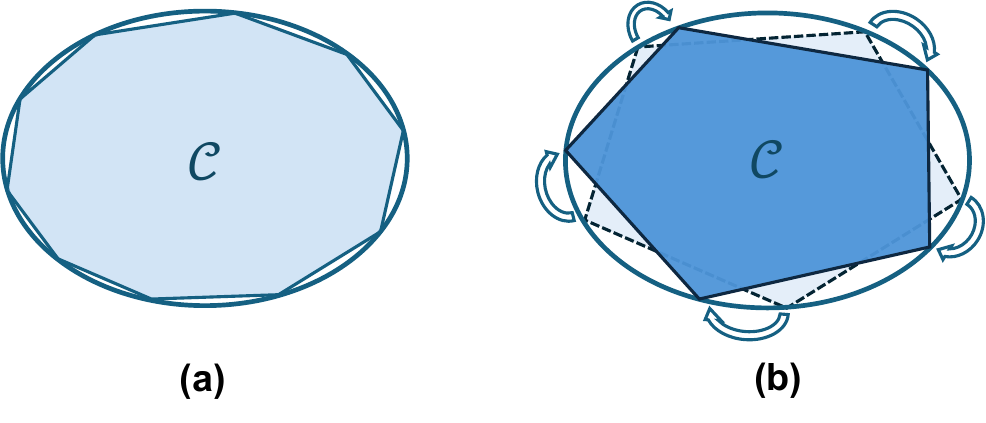}
    \caption{Schematic diagram of convex hull approximation (CHA): (a) Traditional CHA: Random sampling of a large number of extreme points. (b) Dynamic CHA: A small number of extreme points characterized by variational parameters.}
    \label{fig:cha}
\end{figure}

Alternatively, Carathéodory's theorem \cite{caratheodory1911variabilitatsbereich} guarantees that any state within a convex subset of the quantum state space can be expressed as a finite convex combination of extreme points, with the number of terms upper-bounded by $D^2$, where $D$ is the total dimension of the quantum system. This insight enables the use of parametrized extreme points to construct a dynamic convex hull approximation, which requires significantly fewer extreme points compared to random sampling, see Fig.~\ref{fig:cha}. Consequently, the computation of the HS measure can be reformulated as:
\begin{equation}
    D_{\mathcal{C}}(\rho) = \min_{\vec{\alpha}, \vec{\beta}} \left\|\rho - \left(\sum_{i=1}^n \lambda_i\left(\vec{\alpha}\right) |\psi_i(\vec{\beta})\rangle \langle\psi_i(\vec{\beta})| \right) \right\|,
\end{equation}
where $\vec{\alpha}$ and $\vec{\beta}$ are real parameters characterizing $\{\lambda_i\}$ and $\{|\psi_i\rangle\}$, respectively. Here, the number of terms $n \leq D^2$ is a hyperparameter in the optimization procedure. Numerical experiments indicate that using several multiples of $D$ typically provides satisfactory results for most quantum states, which aligns with the previous studies \cite{wu2004convex,ha2014separable,chen2013dimensions}.

To solve the resulting unconstrained optimization problem, we employ the PyTorch deep learning framework to construct a computational graph for the objective function. This allows automatic differentiation via backpropagation to compute gradients efficiently. For optimization, we utilize the classical L-BFGS algorithm. As the problem is inherently nonconvex, convergence to the global minimum is not guaranteed. The output is therefore denoted by $\hat{\sigma}$ and treated as a numerical approximation to the exact closest state $\sigma^\star$, rather than as a globally certified solution. To improve numerical precision and assess stability, we perform multiple runs with random initializations.
\subsection{Parametrization}
\label{app:dream-parameterization}
The parametrization of the probability distribution is introduced in Sec.~\ref{sec:construction}. Here, we focus on the parametrization of extreme points $\{|\psi_i\rangle\}$ in different entanglement detection tasks.

\emph{\textbf{High-dimensional entanglement.}}---
In this scenario, $\{|\psi_i\rangle\}$ represents the set of states with Schmidt ranks no larger than $k-1$, defined as follows:
\begin{equation}
|\psi\rangle = \frac{|\tilde{\psi}\rangle}{\sqrt{\langle\tilde{\psi}|\tilde{\psi}\rangle}}, \quad
|\tilde{\psi}\rangle = \sum_{j=1}^{k-1} s_j |\phi_A^{(j)}\rangle \otimes |\phi_B^{(j)}\rangle,
\end{equation}
where $\{s_j\}$ are positive coefficients. The parametrization of the local states $\{\phi_A^{(j)}\}$ and $\{\phi_B^{(j)}\}$ is detailed in Sec.~\ref{sec:construction}. For the coefficients $\{s_j\}$, we employ the SoftPlus function, a commonly used activation function in machine learning:
\begin{equation}
s_j = \ln(1 + e^{\theta_j}) \in \mathbb{R}^+,
\end{equation}
where $\{\theta_j\}$ are real-valued parameters.

\emph{\textbf{Genuine entanglement.}}---
For the detection of genuine entanglement, $\{|\psi_i\rangle\}$ is chosen as the set of product pure states under different bipartitions. For example, consider a tripartite quantum state in the composite Hilbert space $\mathcal{H}_A \otimes \mathcal{H}_B \otimes \mathcal{H}_C$. The possible bipartitions are $A|BC$, $B|CA$, and $C|AB$. Consequently, the set of extreme points for the set of biseparable states can be expressed as:
\begin{equation}
\begin{aligned}
    \{|\psi_i\rangle\}={}&\{|\phi_A^{(j)}\rangle \otimes |\phi_{BC}^{(j)}\rangle\} \\
    &\cup \{|\phi_B^{(k)}\rangle \otimes |\phi_{CA}^{(k)}\rangle\} \\
    &\cup \{|\phi_C^{(l)}\rangle \otimes |\phi_{AB}^{(l)}\rangle\}.
\end{aligned}
\end{equation}
For each product pure state under a given bipartition, the parametrization follows the same procedure as in the bipartite case.

\emph{\textbf{Complete entanglement.}}---
For the detection of completely entangled states, $\{|\psi_i\rangle\}$ is chosen as the set of fully product states under the given partitions. For a partition $A_1, A_2, \dots, A_n$, the fully product state takes the form:
\begin{equation}
    |\psi\rangle = |\phi_{A_1}\rangle \otimes |\phi_{A_2}\rangle \otimes \cdots \otimes |\phi_{A_n}\rangle,
\end{equation}
where $|\phi_{A_j}\rangle$ represents the local pure state in the $j$-th subsystem. The parametrization of these local states is detailed in Sec.~\ref{sec:construction}.
\subsection{Computational efficiency and accuracy}
\label{app:dream-benchmark}
Since the HS measure coincides with maximal distinguishability, the computational error of a GEW can be quantified as
\begin{equation}
    |\delta(\mathrm{Tr}[W\rho])|=\delta(D(\rho))=\|\rho-\sigma\|-\|\rho-\sigma^\star\|,
\end{equation}
where $\sigma$ is the nearest state obtained from the algorithm, while $\sigma^\star$ denotes the exact nearest state.

Here, we consider the $d\otimes d$ isotropic states defined in Eq.~\eqref{eq:isotropic_state}:
\begin{equation}
    \rho_{I}(F) = F |\Psi_d^+\rangle\langle \Psi_d^+| + (1-F) \frac{I - |\Psi_d^+\rangle\langle \Psi_d^+|}{d^2 - 1}.
\end{equation}
The corresponding HS measure is
\begin{equation}
    D(\rho_I(F))=
    \begin{cases}
        0,\quad 0\leq F\leq \frac{1}{d}, \\
        \frac{d}{\sqrt{d^2-1}}(F-\frac{1}{d}),\quad \frac{1}{d}<F\leq 1.
    \end{cases}
\end{equation}

We uniformly sampled one hundred entangled isotropic states from the range \( F \in \left(\frac{1}{d}, 1\right] \) with varying local dimensions \( d \), and applied our proposed method to compute their HS measures. For each data point, we calculated the average computational time and error, as presented in Fig.~\ref{fig:benchmark}. The results demonstrate that, even for quantum systems of moderate size, the proposed method achieves satisfactory accuracy within a reasonable computational time.

\begin{figure}[h]
    \centering
    \includegraphics[width=\linewidth]{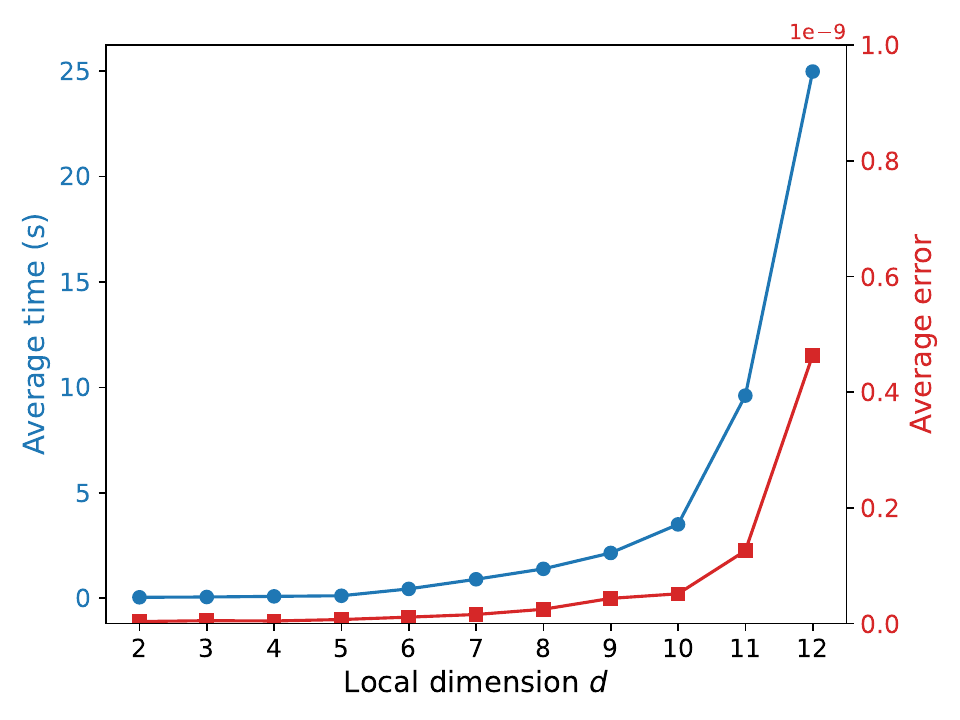}
    \caption{Average computational time and error for calculating the HS measure of isotropic states with different local dimensions. Each data point represents the average over one hundred selected entangled isotropic states. All computations were performed on a MacBook Pro with an M1 Pro chip.}
    \label{fig:benchmark}
\end{figure}

The analysis can also be extended to multipartite states. As a benchmark, we consider the HS measure of the $n$-qubit Greenberger--Horne--Zeilinger (GHZ) state,
\begin{equation}
    |\mathrm{GHZ}_n\rangle = \frac{1}{\sqrt{2}}(|0\rangle^{\otimes n} + |1\rangle^{\otimes n}).
\end{equation}
Here, the HS measure is computed with respect to the set of fully separable states. Although the GHZ and W states considered here are pure, their nearest fully separable states under the HS distance are generally \emph{mixed}: they lie on the boundary of the convex hull of fully product pure states and can involve nontrivial convex combinations of rank-deficient density matrices, as shown explicitly by $\sigma^\star$ below. The optimization is therefore not reduced to a pure-product-state overlap problem, but still involves the mixed-state separable boundary. Ref.~\cite{PhysRevA.102.012409} provides an analytical solution for the nearest fully separable state $\sigma^\star$ to $n$-qubit GHZ states:
\begin{equation}
    \sigma^\star = x\sigma_1+(1-x)\sigma_2,\quad x=\frac{\left(2^n-2\right)^2}{4+4^n-2^{n+1}},
\end{equation}
and
\begin{align}
    \sigma_{1}&=\frac{1}{2}\left(\begin{array}{ccccc}
1 & 0 & \ldots & 0 & 0 \\
0 & 0 & \ldots & 0 & 0 \\
\vdots & \vdots & \ddots & \vdots & \vdots \\
0 & 0 & \ldots & 0 & 0 \\
0 & 0 & \ldots & 0 & 1
\end{array}\right), \\
    \sigma_{2}&=\frac{1}{2^n}\left(\begin{array}{ccccc}
1 & 0 & \ldots & 0 & 1 \\
0 & 1 & \ldots & 0 & 0 \\
\vdots & \vdots & \ddots & \vdots & \vdots \\
0 & 0 & \ldots & 1 & 0 \\
1 & 0 & \ldots & 0 & 1
\end{array}\right).
\end{align}
This gives the HS measure of $n$-qubit GHZ states as
\begin{equation}
    D(|\mathrm{GHZ}_n\rangle) = \sqrt{\frac{\left(2^n-2\right)}{-4+2^{3-n}+2^{n+1}}}.
\end{equation}

Fig.~\ref{fig:ghz_hs_measure} shows the numerical results obtained from our method for $n$-qubit GHZ states. Compared with the analytical solution, the maximum relative error from $n=3$ to $n=11$ is about $0.0124\%$. Although the computational time increases with the number of qubits, it remains manageable up to $n=11$ qubits on a standard laptop. We also compute the HS measure of $n$-qubit W states,
\begin{equation}
    |W_n\rangle = \frac{1}{\sqrt{n}}(|00\ldots 001\rangle + |00\ldots 010\rangle + \cdots + |10\ldots 000\rangle).
\end{equation}
Although no closed-form expression is known for the nearest fully separable state to $n$-qubit W states, our method still converges efficiently in this regime, as shown in the right panel of Fig.~\ref{fig:ghz_hs_measure}.

\begin{figure}[!tbp]
    \centering
    \includegraphics[width=\linewidth]{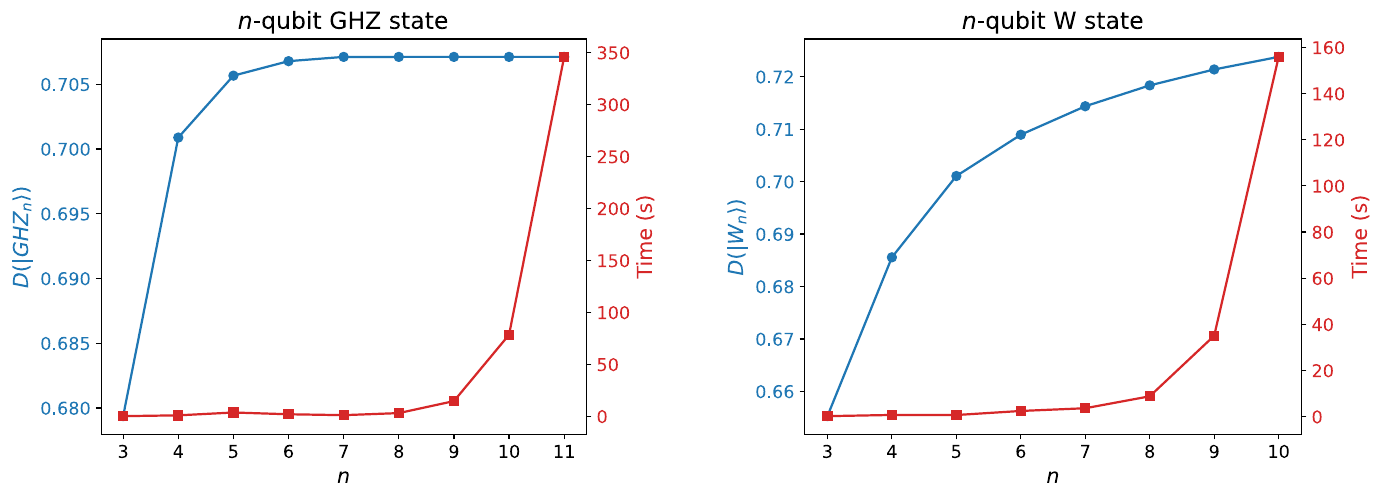}
    \caption{HS measure of $n$-qubit GHZ states (left) and $n$-qubit W states (right) with respect to fully separable states. Blue curves show the numerical HS measures, while red curves show the computational times. All computations were performed on a MacBook Pro with an M1 Pro chip.}
    \label{fig:ghz_hs_measure}
\end{figure}

\section{Enhanced noise robustness from a Newton-like method}
\label{app:newton}
In Sec.~\ref{sec:multipartite}, the visibility threshold for a target state is obtained directly using its corresponding GEW. In most cases, the performance already surpasses that of previous methods. However, if a specific noise model is incorporated into the construction of GEWs, an even better threshold can be achieved.

For instance, a Newton-like method can be employed to iteratively update our GEWs for improved noise resilience, as illustrated in Fig.~\ref{fig:newton_like_method}. The key idea is that, starting from the current GEW $W^{(n)}$, we identify the next transition state $\rho^{(n+1)}$ by solving $\mathrm{Tr}[W^{(n)}\rho^{(n+1)}]=0$, which gives
\begin{equation}
    \rho^{(n+1)}=\frac{\mathrm{Tr}[W^{(n)}]\rho-\mathrm{Tr}[W^{(n)}\rho]\cdot I}{\mathrm{Tr}[W^{(n)}]-d\cdot\mathrm{Tr}[W^{(n)}\rho]},
\end{equation}
where $\rho=\rho^{(0)}$ is the initial target state with total dimension $d$. Using our method, we then find the nearest separable state $\sigma^{(n+1)}$ to $\rho^{(n+1)}$ and construct the updated GEW $W^{(n+1)}$ accordingly. Owing to the convexity of the set $\mathcal{C}$, this iterative procedure is guaranteed to improve the visibility threshold with each iteration.
\begin{figure}[!tbp]
    \centering
    \includegraphics[width=\linewidth]{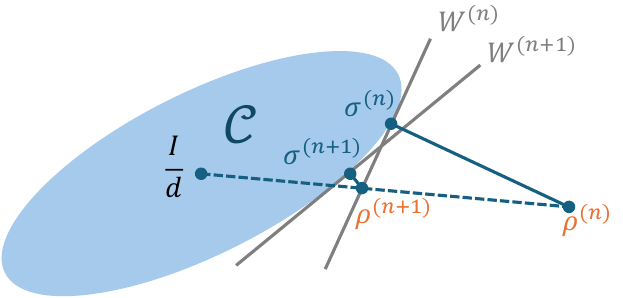}
    \caption{Schematic diagram of Newton-like method for constructing GEWs with better noise resilience against depolarizing noise.}
    \label{fig:newton_like_method}
\end{figure}

For example, the visibility thresholds $p_{\text{GEW}}$ reported in Sec.~\ref{sec:multipartite} for the three-qubit W state and the four-qubit W state are $0.5414$ and $0.6667$, respectively. After applying iterative optimization, the corresponding approximate numerical estimates become $0.4777$ and $0.4701$. As shown in Fig.~\ref{fig:w_state_threshold}, the estimates approach the known benchmark values after only a few iterations. Because the underlying optimization is nonconvex, small differences at the quoted precision should be regarded as numerical uncertainty and not as certified improvements beyond known biseparability boundaries.

\begin{figure}[!tbp]
    \centering
    \includegraphics[width=\linewidth]{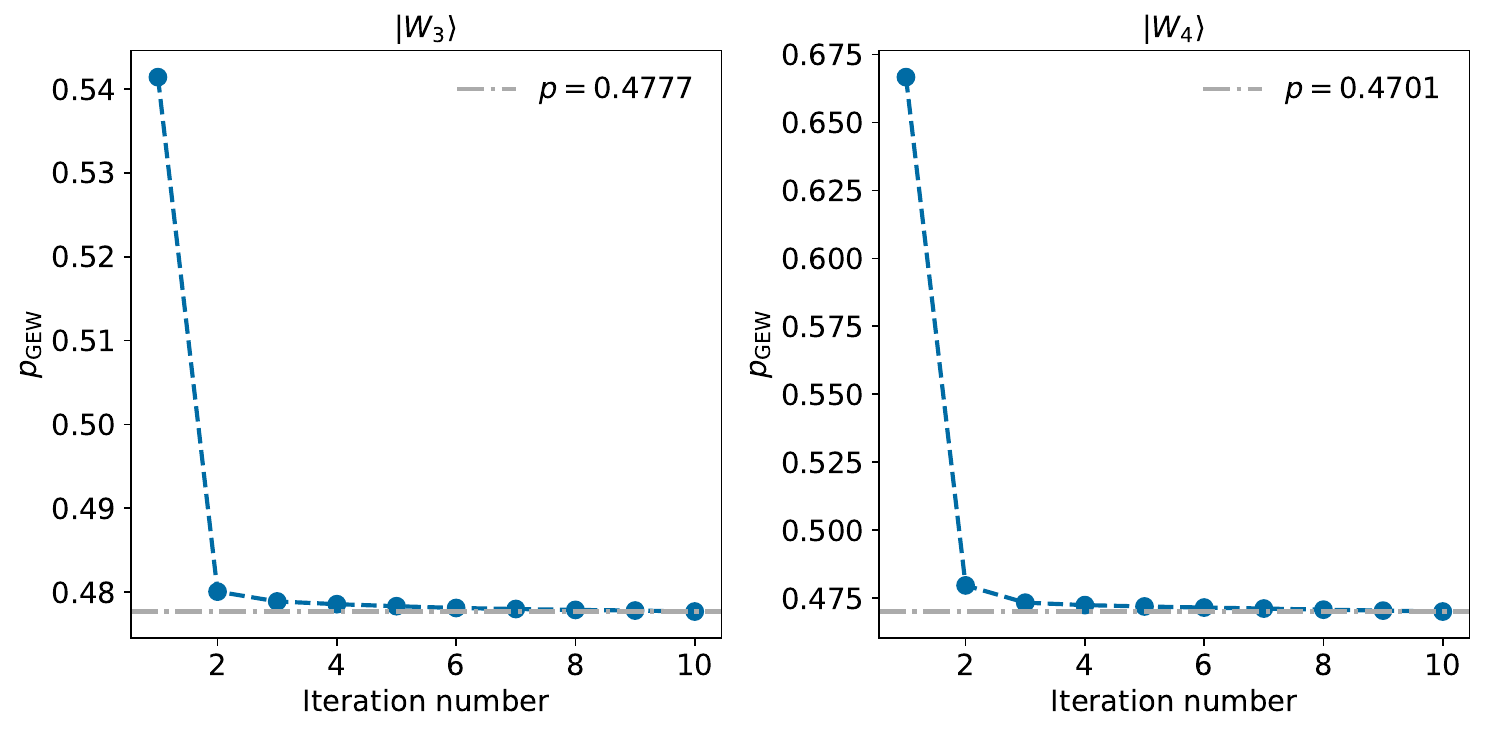}
    \caption{Numerical evolution of the approximate visibility estimates during iterative optimization.}
    \label{fig:w_state_threshold}
\end{figure}

We further compare the approximate refined visibility estimates with existing methods for the following standard genuinely multipartite entangled (GME) states:
\begin{equation}
    \begin{aligned}
        |W_3\rangle&=\frac{1}{\sqrt{3}}(|001\rangle+|010\rangle+|100\rangle),\\
        |W_4\rangle&=\frac{1}{2}(|0001\rangle+|0010\rangle+|0100\rangle+|1000\rangle),\\
        |D_4\rangle&= \frac{1}{\sqrt{6}}(|0011\rangle+|1100\rangle+|0110\rangle\\
        &+|1001\rangle+|1010\rangle+|0101\rangle),\\
        |\text{Cl}_4\rangle& = \frac{1}{2}(|0000\rangle+|0011\rangle+|1100\rangle-|1111\rangle), \\
        |\Psi_4\rangle&= \frac{1}{\sqrt{3}}(|0011\rangle+|1100\rangle\\
        &-\frac{1}{2}(|01\rangle+|10\rangle)\otimes(|01\rangle+|10\rangle))
    \end{aligned}
\end{equation}

The numerical estimates of $p_{\text{GEW}}$ are close to the fully decomposable benchmarks for the states considered above. Given the precision of the nonconvex numerical optimization, the small differences in Table~\ref{tab:multipartite_visibility} should not be interpreted as a certified performance advantage. More importantly, the two constructions differ in scope. Fully decomposable witnesses are restricted to decomposable operators and therefore cannot detect PPT-entangled bound entangled states. In contrast, a GEW based on the HS distance is constructed directly from the geometry of the target state relative to the relevant separable set and can address entanglement beyond the decomposable-witness class, as illustrated by the Horodecki and Tiles bound entangled states in Sec.~\ref{sec:bound} and Appendix~\ref{app:tiles}. The purpose of Table~\ref{tab:multipartite_visibility} is therefore to show that the numerical estimates are comparable on standard GME states, rather than to claim a certified improvement at the third decimal place.

\begin{table}[h]
\centering
\renewcommand{\arraystretch}{1.5}
\caption{Comparison of approximate numerical visibility estimates for GEWs with other methods. FID denotes fidelity-based witnesses, OSD denotes operator-Schmidt-decomposition witnesses, and FUL denotes fully decomposable witnesses. The FID and OSD values are taken from Ref.~\cite{PhysRevLett.133.040203}, while the FUL values are taken from Ref.~\cite{PhysRevLett.106.190502}. Differences at the third decimal place are within the numerical precision relevant here and are not interpreted as certified improvements beyond known biseparability boundaries.}
\label{tab:multipartite_visibility}
\resizebox{\columnwidth}{!}{%
\begin{tabular}{|c|c|c|c|c|}
\hline
State & Visibility $p_{\text{FID}}$ & Visibility $p_{\text{OSD}}$ & Visibility $p_{\text{FUL}}$ & Visibility $p_{\text{GEW}}$ \\
\hline
$\left|W_3\right\rangle$ & 0.619 & 0.556 & 0.479 & 0.478 \\
\hline
$\left|W_4\right\rangle$ & 0.733 & 0.714 & 0.474 & 0.470 \\
\hline
$\left|D_4\right\rangle$ & 0.644 & 0.540 & 0.461 & 0.459 \\
\hline
$\left|\text{Cl}_4\right\rangle$ & 0.467 & 0.463 & 0.385 & 0.383 \\
\hline
$\left|\Psi_4\right\rangle$ & 0.733 & 0.572 & 0.447 & 0.446 \\
\hline
\end{tabular}
}
\end{table}

\section{Additional results}
\label{app:additional-results}
\subsection{Tiles bound entangled state}
\label{app:tiles}
A bound entangled state (BES) can be constructed from the Tiles unextendible product basis (UPB) $\{|\psi_i\rangle\}$ as follows \cite{PhysRevLett.82.5385},
\begin{equation}
    \rho_{\mathrm{Tiles}}=\frac{1}{4}(I-\sum_{i=0}^4|\psi_i\rangle\langle\psi_i|),
\end{equation}
where the UPB states are
\begin{equation}
\begin{aligned}
&\left|\psi_0\right\rangle=\frac{1}{\sqrt{2}}|0\rangle(|0\rangle-|1\rangle),\\
& \left|\psi_1\right\rangle=\frac{1}{\sqrt{2}}(|0\rangle-|1\rangle)|2\rangle, \\
&\left|\psi_2\right\rangle=\frac{1}{\sqrt{2}}|2\rangle(|1\rangle-|2\rangle), \\
& \left|\psi_3\right\rangle=\frac{1}{\sqrt{2}}(|1\rangle-|2\rangle)|0\rangle, \\
&\left|\psi_4\right\rangle=\frac{1}{3}(|0\rangle+|1\rangle+|2\rangle)(|0\rangle+|1\rangle+|2\rangle).
\end{aligned}
\end{equation}

By computing the HS measure of Tiles BES, we find that its nearest separable state is a mixture of ten product pure states as the following form:
\begin{equation}
    \begin{aligned}
        \sigma=&p_{0}\left(\rho_{a}\otimes\rho_{b}+\rho_{b}\otimes\rho_{\tilde{a}}+\rho_{\tilde{b}}\otimes\rho_{a}+\rho_{\tilde{a}}\otimes\rho_{\tilde{b}}\right)\\
    +&p_{1}\left(\rho_{c}\otimes\rho_{d}+\rho_{d}\otimes\rho_{\tilde{c}}+\rho_{\tilde{d}}\otimes\rho_{c}+\rho_{\tilde{c}}\otimes\rho_{\tilde{d}}\right)\\
    +&p_{2}\left(\rho_{e}\otimes\rho_{f}+\rho_{f}\otimes\rho_{e}\right),
    \end{aligned}
\end{equation}
where $\rho_x:=|x\rangle\langle x|$ with $|x\rangle:=x_0|0\rangle+x_1|1\rangle+x_2|2\rangle$ and $\rho_{\tilde{x}}:=|\tilde{x}\rangle\langle\tilde{x}|$ with $|\tilde{x}\rangle:=x_2|0\rangle+x_1|1\rangle+x_0|2\rangle$. The numerical values of these vectors are provided below:
\[
\begin{aligned}
    p &= (0.0757194884, 0.1304440066, 0.08767301), \\
    a &= (0.8229256532, 0.5311820912, 0.2015910597), \\
    b &= (0.4803621975, 0.3840458715, -0.7885181849), \\
    c &= (0.7069251656, 0.7070735048, -0.0174318434), \\
    d &= (0.1899793832, 0.4370482044, -0.8791454379), \\
    e &= (0.0217550144, 0.9995266073, 0.0217550144), \\
    f &= (0.2725406432, -0.9227367965, 0.2725406432).
\end{aligned}
\]

\subsection{Four-qubit Smolin state}
Here, we consider the four-qubit Smolin state \cite{PhysRevA.63.032306}:
\begin{equation}
    \rho_S = \frac{1}{16} \left( I + \sigma_x^{\otimes 4} + \sigma_y^{\otimes 4} + \sigma_z^{\otimes 4} \right),
\end{equation}
where $\sigma_x$, $\sigma_y$, and $\sigma_z$ are the Pauli matrices. The HS measure of a state under a specific partition $\mathcal{P}$ is defined as
\begin{equation}
    D(\rho) = \min_{\sigma \in \mathrm{SEP}(\mathcal{P})} \|\rho - \sigma\|,
\end{equation}
where $\mathrm{SEP}(\mathcal{P})$ denotes the set of separable states under the partition $\mathcal{P}$. Due to the symmetry of the Smolin state, it suffices to consider the bipartitions $\{1\}:\{2,3,4\}$ and $\{1,2\}:\{3,4\}$, the tripartition $\{1\}:\{2\}:\{3,4\}$, and the four-partition $\{1\}:\{2\}:\{3\}:\{4\}$.

The Smolin state has been proven to be entangled for all partitions except $\{1,2\}:\{3,4\}$ \cite{PhysRevA.63.032306}. Furthermore, when white noise is added, the noisy state
\begin{equation}
    \mathcal{N}_p(\rho_S) = p \rho_S + (1-p) \frac{I}{16}
\end{equation}
remains entangled for $p > \frac{1}{3}$ \cite{PhysRevA.74.010305}. A natural conjecture is that
\begin{equation}
    D(\mathcal{N}_p(\rho_S)) =
    \begin{cases}
        0, & p \in [0, \frac{1}{3}], \\
        \|\mathcal{N}_p(\rho_S) - \mathcal{N}_{\frac{1}{3}}(\rho_S)\|, & p \in (\frac{1}{3}, 1],
    \end{cases}
\end{equation}
which simplifies to
\begin{equation}
        D(\mathcal{N}_p(\rho_S)) =
    \begin{cases}
        0, & p \in [0, \frac{1}{3}], \\
        \frac{\sqrt{3}}{4} \left( p - \frac{1}{3} \right), & p \in (\frac{1}{3}, 1].
    \end{cases}
\end{equation}

Our numerical results strongly support this conjecture, as illustrated in Fig.~\ref{fig:smolin_state}. The corresponding GEWs for detecting entanglement in the range $p \in (\frac{1}{3},1]$ across any partition can be constructed directly from the state $\mathcal{N}_{\frac{1}{3}}(\rho_S)$:

\begin{equation}
    W = \frac{(\mathcal{N}_{\frac{1}{3}}(\rho_S)-\rho_S)-\mathrm{Tr}[\mathcal{N}_{\frac{1}{3}}(\rho_S)(\mathcal{N}_{\frac{1}{3}}(\rho_S)-\rho_S)]I}{\|\mathcal{N}_{\frac{1}{3}}(\rho_S)-\rho_S\|}.
\end{equation}

\begin{figure}[h]
    \centering
    \includegraphics[width=0.8\linewidth]{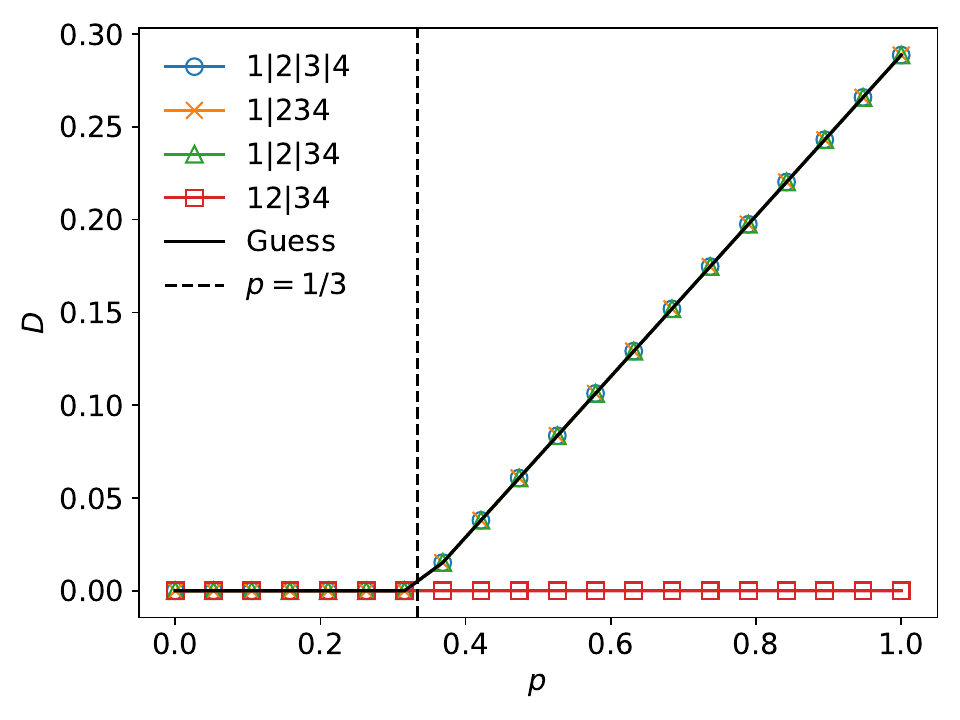}
    \caption{HS measure of noisy Smolin states $\mathcal{N}_p(\rho_S)=p\rho_S+(1-p)\frac{I}{16}$ under different partitions.}
    \label{fig:smolin_state}
\end{figure}

\section{Experimental details}
\label{app:experiment}
\subsection{Experimental system}

Our experiments are performed on a nuclear magnetic resonance (NMR) quantum processor
using a Bruker 300-MHz NMR spectrometer.
The sample is $^{13}$C-labeled crotonic acid dissolved in d$_6$-acetone.
The molecular structure is shown in Fig.~\ref{molecular}(a).
This molecule hosts four coupled $^{13}$C nuclear spins---C$_1$, C$_2$, C$_3$, and C$_4$---which together constitute a four-qubit quantum processor. All hydrogen spins, including $\mathrm{H}_1$, $\mathrm{H}_2$, and the methyl group $\mathrm{M_H}$, are fully decoupled throughout the experiments.

\begin{figure}[H]
    \centering
    \includegraphics[width=\linewidth]{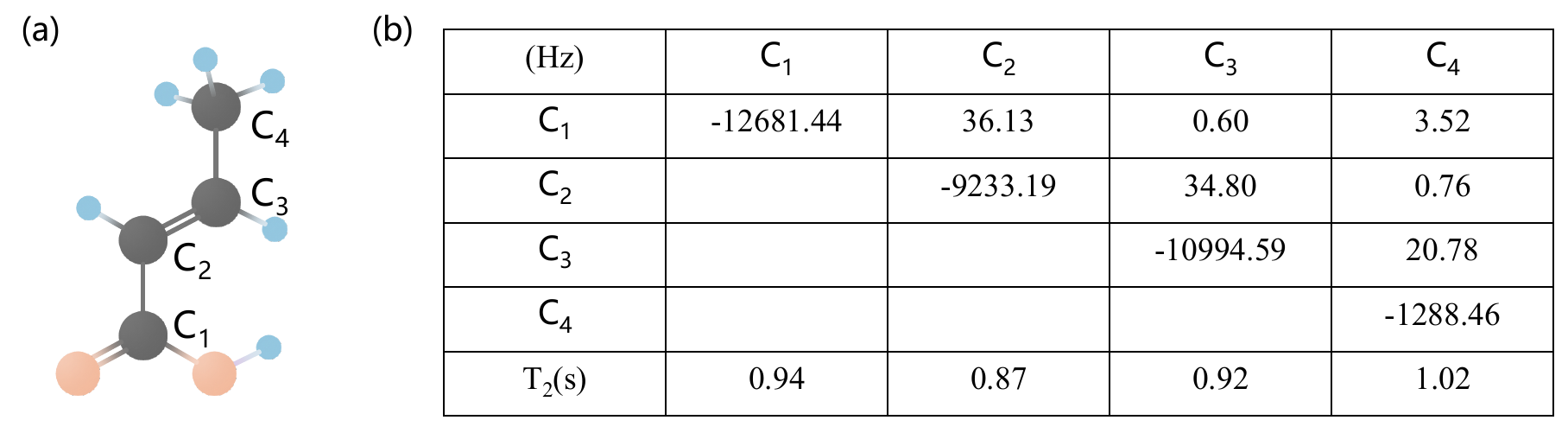}
    \caption{Molecular structure and system parameters.}
    \label{molecular}
\end{figure}

The internal dynamics of the system are governed by the Hamiltonian~\cite{PhysRevLett.129.100603}
\begin{equation}\label{Hamiltonian}
    \mathcal{H}_{\text{NMR}}=-\sum^4_{i=1}\frac{\omega_i}{2}\sigma^i_z+\frac{\pi}{2}\sum^4_{i<j}J_{ij}\sigma^i_z\sigma^j_z,
\end{equation}
where $\sigma_z$ denotes the Pauli matrix, and $\omega_i = 2\pi \nu_i$ is the Larmor frequency associated with the chemical shift $\nu_i$ of the $i$th spin. The scalar coupling strengths $J_{ij}$ quantify the interactions between spins $i$ and $j$. These parameters are extracted from the thermal-equilibrium NMR spectrum and summarized in Fig.~\ref{molecular}(b), with diagonal elements showing Larmor frequencies and off-diagonal elements representing coupling constants. Controlled two-spin interactions are implemented by letting the system evolve naturally under this Hamiltonian.

Control of each individual spin is achieved using radio-frequency (rf) pulses described by
\begin{equation}
    H_{rf}=-\sum_k\gamma_kB_1[\cos{(\omega_{rf}t+\phi)} \sigma^k_x+\sin{(\omega_{rf}t+\phi)} \sigma^k_y],
\end{equation}
where $B_1$, $\phi$, and $\omega_{rf}$ are the amplitude, phase, and frequency of the rf field, respectively, and $\gamma_k$ is the gyromagnetic ratio of spin $k$. By designing appropriate sequences of rf pulses, arbitrary single-spin rotations can be implemented. Combined with the intrinsic spin-spin couplings, these controls provide universal manipulation over the entire four-spin system.

\subsection{State preparation}

\begin{widetext}
    \centering
    \includegraphics[width=0.94\textwidth]{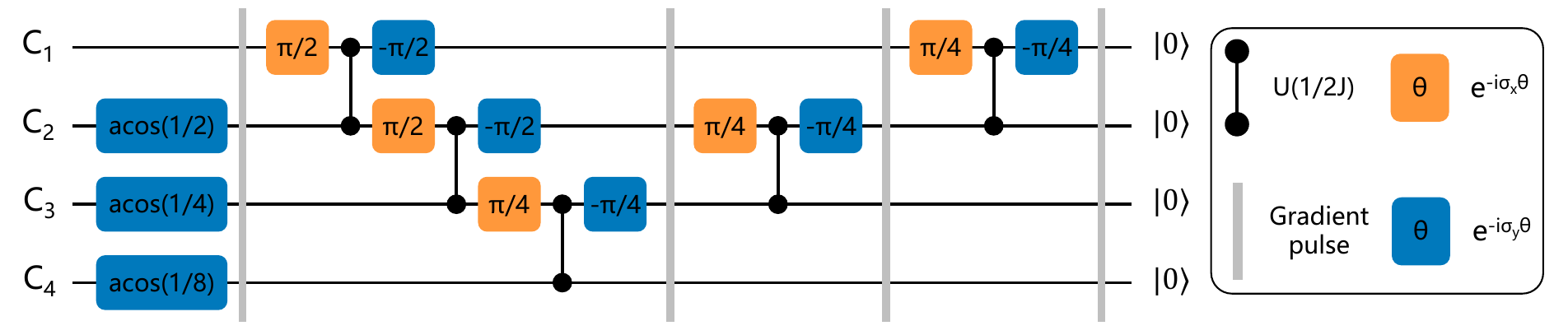}
    \captionof{figure}{Circuit for PPS preparation.}
    \label{circuit}
\end{widetext}

Because the system's natural Gibbs state does not match the initial states required for our experiments, a dedicated preparation procedure is employed.
To test the witnesses introduced in the main article, we prepare states of specific forms. This section describes how these states are generated.
The overall preparation begins with the creation of a pseudo-pure state (PPS) via spatial averaging.
The target states are then generated either by temporal averaging for predefined mixed states or by coherent unitary preparation for W states.

We begin by generating the PPS from the thermal equilibrium state.
In NMR quantum computing, a PPS is a mixed state that behaves effectively as a pure state.
For our four-spin system, the PPS is defined as
\begin{equation}
    \rho_\mathrm{PPS}=\frac{1-\epsilon}{16}I+\epsilon\left|0000\right\rangle\left\langle0000\right|,
\end{equation}
where $I$ is the $16\times 16$ identity matrix and the polarization under experimental conditions is $\epsilon \sim 10^{-5}$.
The identity term does not influence observable signals or typical NMR dynamics (rf pulses or free evolution).
Thus, up to a global scaling factor, the PPS behaves as the pure state $\left|0000\right\rangle\left\langle0000\right|$.
We employ spatial averaging to obtain this PPS from the thermal state~\cite{PhysRevLett.132.210403}.

Spatial averaging relies on applying magnetic-field gradients to suppress unwanted off-diagonal terms in the density matrix.
All experiments are carried out at room temperature.
The thermal state of the four-spin system is
\begin{equation}
    \rho_{eq}=\frac{I}{2^4}+\epsilon\sum_{k=1}^4\sigma_z^k .
\end{equation}
Using the pulse sequence shown in Fig.~\ref{circuit}, and ignoring the identity component, the deviation density operator evolves as
\begin{widetext}
\begin{equation}
    \begin{aligned}
    \sigma_z^1+\sigma_z^2+\sigma_z^3+\sigma_z^4
    &\xrightarrow{U_{p1},\mathrm{G}z}
    \sigma_z^1+\frac{1}{2}\sigma_z^2+\frac{1}{4}\sigma_z^3+\frac{1}{8}\sigma_z^4\\
    &\xrightarrow{U_{p2},\mathrm{G}z}
    \frac{1}{2}(\sigma_z^1\sigma_z^2\sigma_z^3+\sigma_z^1\sigma_z^2\sigma_z^3\sigma_z^4)
    +\frac{1}{4}(\sigma_z^2\sigma_z^3+\sigma_z^2\sigma_z^3\sigma_z^4)
    +\frac{1}{8}(\sigma_z^3\sigma_z^4+\sigma_z^4)\\
    &\xrightarrow{U_{p3},\mathrm{G}z}
    \frac{1}{4}(\sigma_z^1\sigma_z^2\sigma_z^3+\sigma_z^1\sigma_z^2+\sigma_z^1\sigma_z^2\sigma_z^3\sigma_z^4+\sigma_z^1\sigma_z^2\sigma_z^4)\\
    &\qquad\qquad
    +\frac{1}{8}(\sigma_z^2+\sigma_z^2\sigma_z^3+\sigma_z^2\sigma_z^3\sigma_z^4+\sigma_z^2\sigma_z^4+\sigma_z^3+\sigma_z^3\sigma_z^4+\sigma_z^4)\\
    &\xrightarrow{U_{p4},\mathrm{G}z}
    \frac{1}{8}(\sigma_z^1\sigma_z^2\sigma_z^3+\sigma_z^1\sigma_z^3+\sigma_z^1\sigma_z^2+\sigma_z^1
    +\sigma_z^1\sigma_z^2\sigma_z^3\sigma_z^4+\sigma_z^1\sigma_z^3\sigma_z^4\\
    &\qquad\qquad
    +\sigma_z^1\sigma_z^2\sigma_z^4+\sigma_z^1\sigma_z^4+\sigma_z^2+\sigma_z^2\sigma_z^3
    +\sigma_z^2\sigma_z^3\sigma_z^4+\sigma_z^2\sigma_z^4\\
    &\qquad\qquad
    +\sigma_z^3+\sigma_z^3\sigma_z^4+\sigma_z^4).
    \end{aligned}
\end{equation}
\end{widetext}
The resulting deviation operator is proportional to the traceless part of
\[
\bigotimes_{k=1}^4 \frac{I+\sigma_z^k}{2}
=
\left|0000\right\rangle\left\langle0000\right|,
\]
which corresponds to the desired PPS up to an irrelevant identity component and a global scaling factor.

For the high-dimensional-entanglement experiment, the bipartite $4\otimes4$ Hilbert space is encoded into the same four-qubit NMR register by grouping two physical qubits into one subsystem and the other two physical qubits into the second subsystem.
Specifically, we take $(C_1,C_2)$ as subsystem $A$ and $(C_3,C_4)$ as subsystem $B$.
The local computational basis of each four-dimensional subsystem is defined as
\[
\left|0\right\rangle \equiv \left|00\right\rangle,\quad
\left|1\right\rangle \equiv \left|01\right\rangle,\quad
\left|2\right\rangle \equiv \left|10\right\rangle,\quad
\left|3\right\rangle \equiv \left|11\right\rangle .
\]
Under this encoding, the four-qubit Hilbert space is identified with a bipartite $4\otimes4$ system used to realize the high-dimensional entangled states in the Schmidt-number-witness experiment.

For the Horodecki bound-entanglement experiment shown in Fig.~\ref{fig:bound_entanglement}, the target states are bipartite $3\otimes3$ states.
They are encoded into the same four-qubit NMR register by grouping two physical qubits as subsystem $A$ and the other two physical qubits as subsystem $B$.
For each subsystem, we use a three-dimensional subspace of the two-qubit Hilbert space as the qutrit computational basis,
\[
|0\rangle \equiv |00\rangle,\quad
|1\rangle \equiv |01\rangle,\quad
|2\rangle \equiv |10\rangle,
\]
while the remaining level $|11\rangle$ is not used in this encoding.
Under this mapping, the Horodecki states $\rho_b$ are embedded into a nine-dimensional subspace of the four-qubit Hilbert space.

We then prepare the predefined mixed states required for our experiment.
This includes the Horodecki bound-entangled states $\rho_b$ and the high-dimensional target states used in the Schmidt-number-witness experiment.
Since these target states cannot be obtained from the pure state $\left|0000\right\rangle$ through unitary evolution alone, we employ the temporal averaging technique commonly used in NMR. In this approach, we repeatedly generate the individual pure-state components that constitute the desired mixed state.
Each component can be prepared from the PPS using appropriate unitary operations, which are implemented experimentally through tailored pulse sequences~\cite{Fan2023Innovation,PhysRevLett.126.110502}.
These pulses are designed using an optimal-control algorithm, yielding simulated fidelities above 99.5\% with a total pulse duration of 20\,ms.
After preparing all components, we combine them with the proper weights to obtain the final target mixed states.

\begin{figure}[H]
    \centering
    \includegraphics[width=\linewidth]{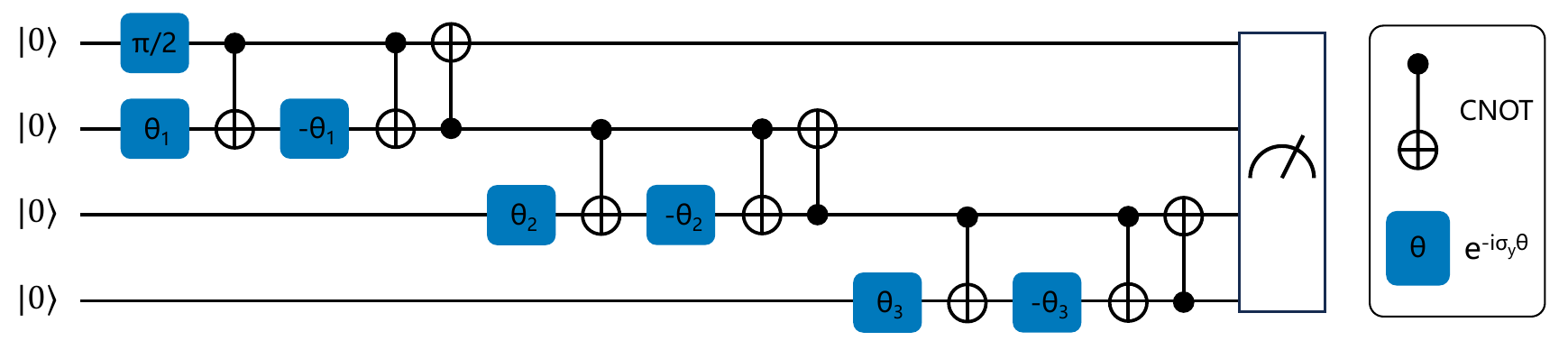}
    \caption{Logical circuit for preparing the four-qubit W state.}
    \label{fig:wstate_preparation}
\end{figure}

For the multipartite-entanglement experiments, the three- and four-qubit W states are pure states prepared coherently from the PPS by unitary operations.
The logical preparation circuit for the four-qubit W state is shown in Fig.~\ref{fig:wstate_preparation}.

The four-qubit target state is
\[
|W_4\rangle=\frac{1}{2}
\left(
|1000\rangle+|0100\rangle+|0010\rangle+|0001\rangle
\right).
\]
Starting from $|0000\rangle$, a $y$-axis rotation with angle $\pi/2$ first creates a single excitation.
The subsequent controlled rotations, implemented by the rotation pairs $\theta_j$ and $-\theta_j$ together with CNOT gates, sequentially redistribute the excitation amplitude over the four spins, following the standard controlled-rotation state-preparation strategy~\cite{Mottonen2005Transformation}.
With the convention $R_y(\theta)=e^{-i\sigma_y\theta}$, the rotation angles for preparing $|W_4\rangle$ are chosen as
\[
\begin{aligned}
\cos(2\theta^{(4)}_1)&=\frac{1}{2}, &
\cos(2\theta^{(4)}_2)&=\frac{1}{\sqrt{3}},\\
\cos(2\theta^{(4)}_3)&=\frac{1}{\sqrt{2}}.
\end{aligned}
\]
which gives equal amplitudes for the four single-excitation basis states.

The three-qubit W state is prepared in the subspace of the first three spins, while the fourth spin is kept in $|0\rangle$.
The target state is therefore
\[
|W_3\rangle\otimes|0\rangle
=
\frac{1}{\sqrt{3}}
\left(
|1000\rangle+|0100\rangle+|0010\rangle
\right).
\]
Its preparation follows the same sequential-amplitude-distribution procedure as in Fig.~\ref{fig:wstate_preparation}, but only the first two controlled-rotation blocks are used.
The final redistribution between the third and fourth spins is omitted, so that the fourth spin remains in $|0\rangle$.
The corresponding angles are chosen as
\[
\cos(2\theta^{(3)}_1)=\frac{1}{\sqrt{3}},\qquad
\cos(2\theta^{(3)}_2)=\frac{1}{\sqrt{2}}.
\]
These logical preparation circuits are compiled into shaped radio-frequency pulses using the same GRAPE-based optimal-control procedure described above.

\subsection{Readout}

The signal produced by an individual molecule is too weak to detect directly, so the NMR sample contains a macroscopic ensemble of identical molecules.
All measurements therefore produce ensemble-averaged signals, from which the quantum state of an individual molecule can be inferred.

During detection, the nuclear spins precess around the static magnetic field $B_0$ along the $z$-axis and gradually relax back to thermal equilibrium.
This precession induces an oscillating transverse (in the $x$--$y$ plane) electromagnetic signal, which is captured by the surrounding rf detection coils.
The resulting time-domain signal, known as the free induction decay (FID), can be expressed as~\cite{PhysRevLett.134.040201}
\begin{equation}
    A(t)=\mathrm{Tr}\!\left[e^{-i\mathcal{H}_{\text{NMR}} t}\,\rho\, e^{i\mathcal{H}_{\text{NMR}} t}
    \sum_k (\sigma_x^k - i\sigma_y^k)\, e^{-\lambda_k t}\right],
\end{equation}
where $\rho$ is the density matrix of the system and $\lambda_k = 1/T_2$ accounts for transverse relaxation.
The real and imaginary parts correspond to the measured signals along the $x$ and $y$ axes, respectively.
A typical real-part FID for the thermal state of crotonic acid is shown in Fig.~\ref{spec}(a).

\begin{figure}[H]
    \centering
    \includegraphics[width=\linewidth]{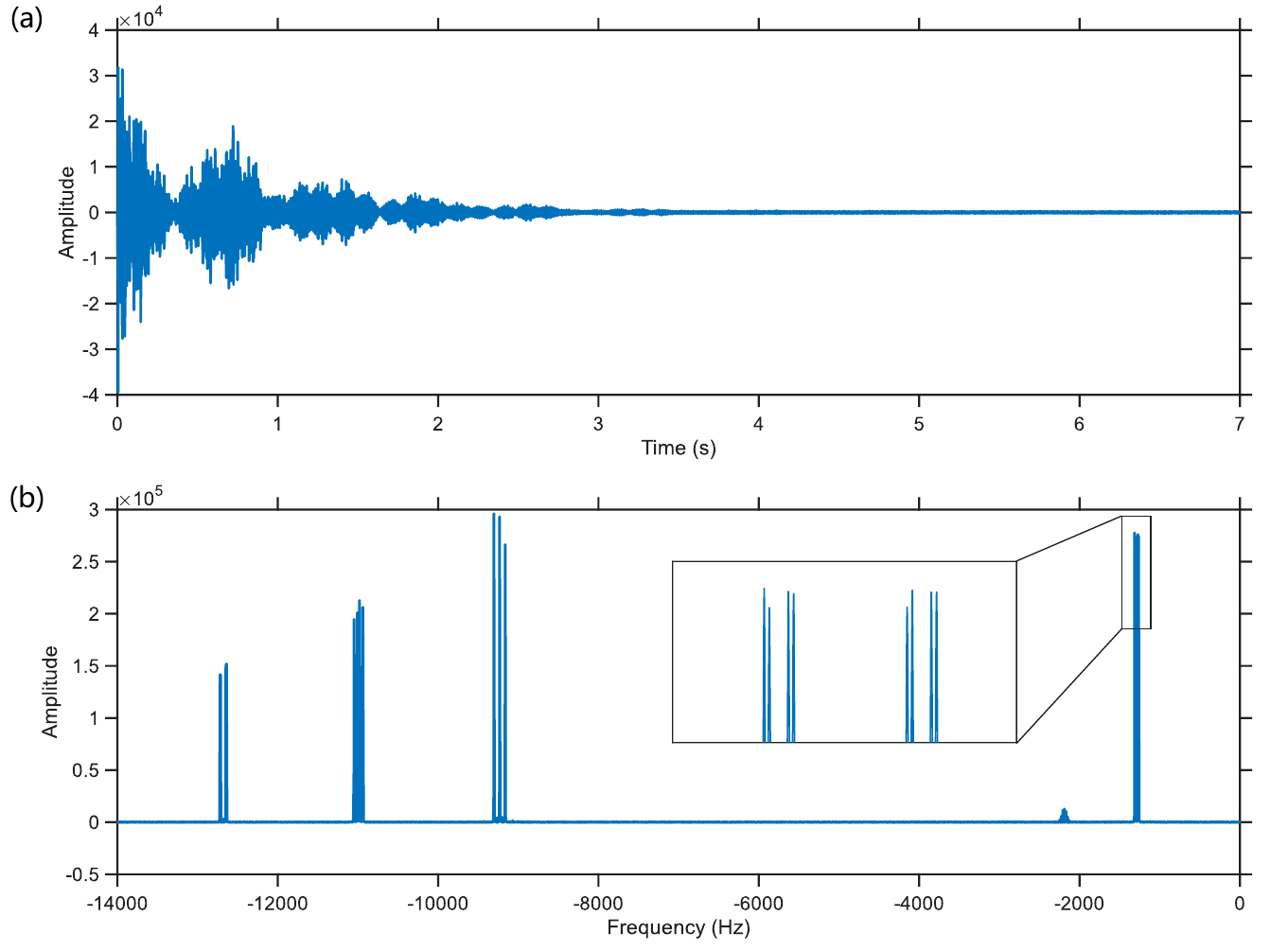}
    \caption{Direct results obtained in the NMR experiment.}
    \label{spec}
\end{figure}

Given the Hamiltonian in Eq.~\ref{Hamiltonian}, the FID can be decomposed into a sum of single-frequency components determined by the Larmor frequencies and coupling constants.
These components can be resolved through a Fourier transform.

Consider a single-spin example with Hamiltonian $H_{\mathrm{single}} = \frac{\omega_s}{2}\sigma_z$, where $\omega_s$ is the Larmor frequency.
The density matrix can be expanded in the Pauli basis as
\begin{equation}
    \rho_\mathrm{single}=\frac{I_2 + \alpha_1\sigma_x + \alpha_2\sigma_y + \alpha_3\sigma_z}{2}.
\end{equation}
The corresponding FID signal is
\begin{widetext}
\begin{equation}
    A_\mathrm{single}(t)
    = \frac{1}{2}\mathrm{Tr}\!\left[
    e^{-i\omega_s \sigma_z t/2}
    (I_2+\alpha_1\sigma_x+\alpha_2\sigma_y+\alpha_3\sigma_z)
    e^{i\omega_s \sigma_z t/2}
    (\sigma_x-i\sigma_y)e^{-\lambda_k t}
    \right].
\end{equation}
\end{widetext}
Because $\mathrm{Tr}(\sigma_x) = \mathrm{Tr}(\sigma_y) = 0$, the terms $I_2$ and $\alpha_3\sigma_z$ do not contribute.
Separating into real and imaginary parts yields
\begin{widetext}
\begin{equation}
\begin{aligned}
    \mathrm{Re}[A_\mathrm{single}(t)] &= \frac{1}{2}\mathrm{Tr}\!\left[
    e^{-i\omega_s\sigma_z t/2}(\alpha_1\sigma_x+\alpha_2\sigma_y)
    e^{i\omega_s\sigma_z t/2}\sigma_x
    \right] e^{-\lambda_k t}, \\
    \mathrm{Im}[A_\mathrm{single}(t)] &= \frac{1}{2}\mathrm{Tr}\!\left[
    e^{-i\omega_s\sigma_z t/2}(\alpha_1\sigma_x+\alpha_2\sigma_y)
    e^{i\omega_s\sigma_z t/2}\sigma_y
    \right] e^{-\lambda_k t}.
\end{aligned}
\end{equation}
\end{widetext}
Neglecting relaxation, the expressions simplify to
\begin{equation}
\begin{aligned}
    \mathrm{Re}[A_\mathrm{single}(t)] &= \frac{\alpha_1\cos\omega_s t - \alpha_2\sin\omega_s t}{2},\\
    \mathrm{Im}[A_\mathrm{single}(t)] &= \frac{\alpha_2\cos\omega_s t + \alpha_1\sin\omega_s t}{2}.
\end{aligned}
\end{equation}
After Fourier transformation,
\begin{equation}
\begin{aligned}
    \mathrm{Re}[A_\mathrm{single}(\omega)] &= \frac{\alpha_1+\alpha_2}{2}\,\delta(\omega-\omega_s),\\
    \mathrm{Im}[A_\mathrm{single}(\omega)] &= \frac{\alpha_2-\alpha_1}{2}\,\delta(\omega-\omega_s).
\end{aligned}
\end{equation}
Thus, the amplitudes of the spectral peaks encode the coefficients of the Pauli operators, allowing us to reconstruct these coefficients from the observed spectrum.

Since the NMR spectrometer directly detects only transverse components ($\sigma_x$ and $\sigma_y$), accessing longitudinal information ($\sigma_z$) requires an additional rotation.
We apply a $\pi/2$ rotation about the $y$-axis, $R_y(\pi/2)$, which maps $\sigma_z$ to $\sigma_x$, allowing $\sigma_z$ components to be measured.

This analysis generalizes to multi-spin systems by incorporating spin-spin couplings.
For the four-spin crotonic acid system, the spectrum contains 32 distinct spectral lines [Fig.~\ref{spec}(b)], enabling the extraction of a large set of Pauli coefficients from a single measurement.

The GEWs studied in this work are constructed from a geometric optimization, rather than from a preselected set of experimentally convenient observables. More precisely, a GEW is determined by the nearest state in the desired set to the target state under the HS distance, and hence corresponds to the supporting hyperplane obtained from this closest-state geometry. The number of local measurement settings is therefore not an input constraint in our optimization. Imposing a fixed set or number of measurement settings would restrict the allowed witnesses to a smaller, experimentally tailored class; such witnesses may be useful in practice, but they are not generally guaranteed to be geometrically optimal in the sense considered here.

After a GEW has been obtained, its experimental implementation is handled as a separate step. The witness is decomposed into local Pauli operators, and compatible Pauli terms are grouped into common local measurement settings. Consequently, the required number of settings depends on the Pauli structure of the resulting GEW and on the grouping strategy used in the experiment. When a noise model is explicitly incorporated to improve the visibility threshold, the optimized witness may acquire a different Pauli structure; therefore, the measurement cost need not scale monotonically with the threshold. We therefore distinguish between the construction of the geometrically optimal GEW and the subsequent Pauli decomposition and grouping used for implementation.

To experimentally detect entanglement, we evaluate the expectation value of the corresponding witness operator $W$.
The witness is first embedded into the four-qubit Hilbert space according to the encoding of the target state and then decomposed into the local Pauli basis.
Importantly, the witness evaluation itself does not require full quantum state tomography; only the Pauli components with nonzero coefficients in the witness decomposition need to be measured.

\begin{equation}
    \langle W \rangle_{\rm exp}
    = \mathrm{Tr}(W\rho)
    = \mathrm{Tr}\!\left( \sum_i w_i \sigma_{4,i}\,\rho \right)
    = \sum_i w_i \langle \sigma_{4,i} \rangle_{\rm exp},
\end{equation}
where $\sigma_{4,i}$ denotes a four-qubit Pauli operator and $w_i$ is the corresponding coefficient in the witness expansion.
Thus, the experimentally required measurements are determined by the nonzero Pauli terms appearing in $W$.

For the Horodecki bound-entanglement experiment shown in Fig.~\ref{fig:bound_entanglement}, a numerical approximation to the geometrically optimal witness $W_b$ is constructed for each target state $\rho_b$ by approximating the nearest separable state under the Hilbert--Schmidt distance.
The resulting $3\otimes3$ witness is embedded into the four-qubit Hilbert space using the qutrit encoding described in the state-preparation section.
We then decompose the embedded witness into local Pauli operators and evaluate $\langle W_b\rangle_{\rm exp}$ from the measured Pauli expectations according to the equation above.

As for multipartite entangled states, the construction of GEWs follows a similar approach. For example, the GEW for the three-qubit W state $|W_3\rangle$ can be expressed as
\begin{widetext}
\begin{equation}
\begin{aligned}
    W \approx & ~0.473\cdot III+0.053\cdot \sigma^1_z\sigma^2_z\sigma^3_z
    + 0.017\cdot(I\sigma^2_z\sigma^3_z+\sigma^1_zI\sigma^3_z+\sigma^1_z\sigma^2_zI)\\
    &-0.096\cdot(I\sigma^2_x\sigma^3_x+\sigma^1_xI\sigma^3_x+\sigma^1_x\sigma^2_xI
    +I\sigma^2_y\sigma^3_y+\sigma^1_yI\sigma^3_y+\sigma^1_y\sigma^2_yI) \\
    &-0.105\cdot(\sigma^1_z\sigma^2_x\sigma^3_x+\sigma^1_x\sigma^2_z\sigma^3_x+\sigma^1_x\sigma^2_x\sigma^3_z
    +\sigma^1_z\sigma^2_y\sigma^3_y+\sigma^1_y\sigma^2_z\sigma^3_y+\sigma^1_y\sigma^2_y\sigma^3_z).
\end{aligned}
\end{equation}
\end{widetext}

The witness measurements described above are sufficient for entanglement detection.
Separately, to assess the quality of the experimentally prepared states, we reconstruct the full density matrices via quantum state tomography.
Complete tomography of a four-qubit state requires determining $4^4-1=255$ independent parameters, where the identity component $IIII$ in the Pauli expansion is fixed by normalization and is not independently measured.

In the NMR implementation, the following strings denote the readout rotation settings used in the state-interrogation tomography protocol, rather than independently measured Pauli-basis operators.
Here, $I$ indicates that no readout rotation is applied to the corresponding spin.
Thus, the setting $IIII$ corresponds to direct acquisition without additional readout rotations.
Following the standard state-interrogation method for NMR quantum-information processing~\cite{Leskowitz2004StateInterrogation}, complete tomography of a four-qubit state can be achieved with $2^4+1=17$ readout experiments.
For each prepared state reported in this work, all 17 readout experiments were acquired. The GEW expectation value was evaluated from the Pauli components required by that witness within the same data set, while the complete data set was also used for independent state-tomography characterization. Thus, 17 is the number of readouts actually performed per prepared state in our implementation, not a claim for the minimum number theoretically required to evaluate a given witness.

We use the following 17 readout rotation settings:
\begin{widetext}
\[
IIII,\
\sigma^1_x\sigma^2_x\sigma^3_x\sigma^4_x,\
II\sigma^3_y\sigma^4_y,\
\sigma^1_y\sigma^2_y\sigma^3_x\sigma^4_x,\
III\sigma^4_y,\
\sigma^1_x\sigma^2_y\sigma^3_x\sigma^4_x,\
\sigma^1_y\sigma^2_x\sigma^3_y I,
\]
\[
I\sigma^2_x\sigma^3_y I,\
III\sigma^4_x,\
\sigma^1_x I\sigma^3_y\sigma^4_y,\
\sigma^1_y\sigma^2_x II,\
\sigma^1_y\sigma^2_y\sigma^3_x\sigma^4_y,\
\sigma^1_x\sigma^2_y\sigma^3_x I,
\]
\[
II\sigma^3_y\sigma^4_x,\
I\sigma^2_x I\sigma^4_y,\
II\sigma^3_x I,\
I\sigma^2_y I\sigma^4_y.
\]
\end{widetext}
In this protocol, each readout rotation maps a specific set of density-matrix elements onto observable single-quantum coherences in the NMR spectrum.
Therefore, settings such as $II\sigma^3_y\sigma^4_y$ and $III\sigma^4_y$ are distinct readout settings, rather than lower-weight marginals obtained from another setting.

Combining the data from these experiments yields the full reconstructed density matrix.
As an example of state-quality characterization, Fig.~\ref{densitym} shows the real and imaginary parts of the experimentally reconstructed density matrix of the bound entangled state $\rho_b$ with $b=1$.
Here, the original $3\otimes3$ state is embedded into the four-qubit Hilbert space using the qutrit encoding described in the state-preparation section, with the unused level $|11\rangle$ in each two-qubit subsystem lying outside the encoded qutrit subspace.

To quantify the state preparation accuracy, we compute the fidelity
\begin{equation}
    F(\rho_{\mathrm{the}},\rho_{\mathrm{exp}})
    = \mathrm{Tr}\!\left(\sqrt{\rho_{\mathrm{the}}^{1/2}
        \rho_{\mathrm{exp}}
        \rho_{\mathrm{the}}^{1/2}}\right),
\end{equation}
where $\rho_{\mathrm{the}}$ is the ideal theoretical state and $\rho_{\mathrm{exp}}$ is the experimentally reconstructed density matrix.
All prepared states in our experiments achieve fidelities above 98\%.

The target states in this work are specified by the preparation protocol and are therefore known theoretically before the experiment. The corresponding GEWs are computed from these known target states by finding the nearest state in the relevant convex set, such as the separable, biseparable, or low-Schmidt-number set. The experimental task is then to measure the expectation value of the precomputed witness on the actually prepared state. Thus, tomography of an unknown target state is not required for constructing the witness; in our implementation, tomography is used only as an independent characterization of the preparation quality.

\begin{figure}[H]
    \centering
    \includegraphics[width=\linewidth]{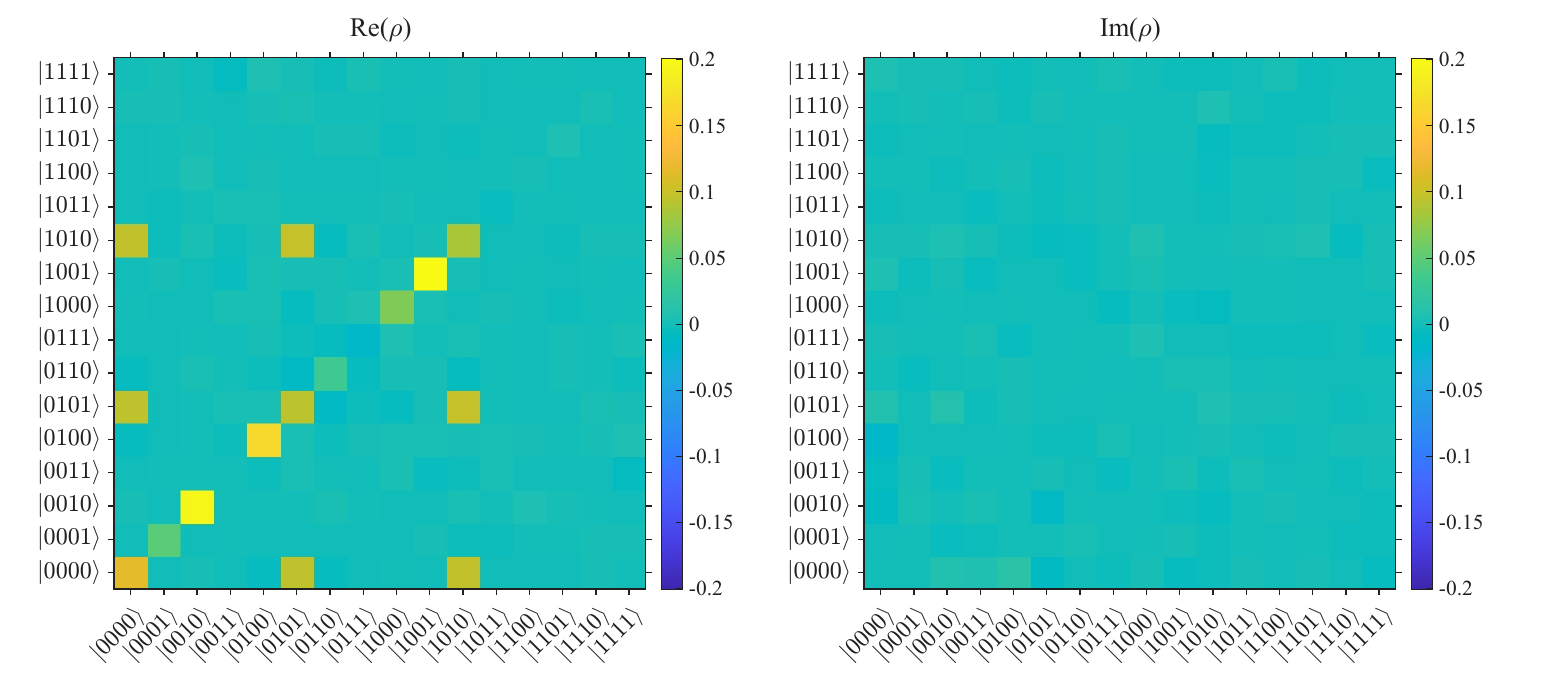}
    \caption{Four-qubit quantum state tomography result for the embedded Horodecki state $\rho_b$ with $b=1$.}
    \label{densitym}
\end{figure}

\bibliography{reference}

\begin{thebibliography}{71}%
\makeatletter
\providecommand \@ifxundefined [1]{%
 \@ifx{#1\undefined}
}%
\providecommand \@ifnum [1]{%
 \ifnum #1\expandafter \@firstoftwo
 \else \expandafter \@secondoftwo
 \fi
}%
\providecommand \@ifx [1]{%
 \ifx #1\expandafter \@firstoftwo
 \else \expandafter \@secondoftwo
 \fi
}%
\providecommand \natexlab [1]{#1}%
\providecommand \enquote  [1]{``#1''}%
\providecommand \bibnamefont  [1]{#1}%
\providecommand \bibfnamefont [1]{#1}%
\providecommand \citenamefont [1]{#1}%
\providecommand \href@noop [0]{\@secondoftwo}%
\providecommand \href [0]{\begingroup \@sanitize@url \@href}%
\providecommand \@href[1]{\@@startlink{#1}\@@href}%
\providecommand \@@href[1]{\endgroup#1\@@endlink}%
\providecommand \@sanitize@url [0]{\catcode `\\12\catcode `\$12\catcode
  `\&12\catcode `\#12\catcode `\^12\catcode `\_12\catcode `\%12\relax}%
\providecommand \@@startlink[1]{}%
\providecommand \@@endlink[0]{}%
\providecommand \url  [0]{\begingroup\@sanitize@url \@url }%
\providecommand \@url [1]{\endgroup\@href {#1}{\urlprefix }}%
\providecommand \urlprefix  [0]{URL }%
\providecommand \Eprint [0]{\href }%
\providecommand \doibase [0]{https://doi.org/}%
\providecommand \selectlanguage [0]{\@gobble}%
\providecommand \bibinfo  [0]{\@secondoftwo}%
\providecommand \bibfield  [0]{\@secondoftwo}%
\providecommand \translation [1]{[#1]}%
\providecommand \BibitemOpen [0]{}%
\providecommand \bibitemStop [0]{}%
\providecommand \bibitemNoStop [0]{.\EOS\space}%
\providecommand \EOS [0]{\spacefactor3000\relax}%
\providecommand \BibitemShut  [1]{\csname bibitem#1\endcsname}%
\let\auto@bib@innerbib\@empty
\bibitem [{\citenamefont {Bennett}\ \emph {et~al.}(1993)\citenamefont
  {Bennett}, \citenamefont {Brassard}, \citenamefont {Cr\'epeau}, \citenamefont
  {Jozsa}, \citenamefont {Peres},\ and\ \citenamefont
  {Wootters}}]{PhysRevLett.70.1895}%
  \BibitemOpen
  \bibfield  {author} {\bibinfo {author} {\bibfnamefont {C.~H.}\ \bibnamefont
  {Bennett}}, \bibinfo {author} {\bibfnamefont {G.}~\bibnamefont {Brassard}},
  \bibinfo {author} {\bibfnamefont {C.}~\bibnamefont {Cr\'epeau}}, \bibinfo
  {author} {\bibfnamefont {R.}~\bibnamefont {Jozsa}}, \bibinfo {author}
  {\bibfnamefont {A.}~\bibnamefont {Peres}},\ and\ \bibinfo {author}
  {\bibfnamefont {W.~K.}\ \bibnamefont {Wootters}},\ }\bibfield  {title}
  {\bibinfo {title} {Teleporting an unknown quantum state via dual classical
  and einstein-podolsky-rosen channels},\ }\href
  {https://doi.org/10.1103/PhysRevLett.70.1895} {\bibfield  {journal} {\bibinfo
   {journal} {Phys. Rev. Lett.}\ }\textbf {\bibinfo {volume} {70}},\ \bibinfo
  {pages} {1895} (\bibinfo {year} {1993})}\BibitemShut {NoStop}%
\bibitem [{\citenamefont {Steane}(1998)}]{steane1998quantum}%
  \BibitemOpen
  \bibfield  {author} {\bibinfo {author} {\bibfnamefont {A.}~\bibnamefont
  {Steane}},\ }\bibfield  {title} {\bibinfo {title} {Quantum computing},\
  }\href {https://doi.org/10.1088/0034-4885/61/2/002} {\bibfield  {journal}
  {\bibinfo  {journal} {Rep. Prog. Phys.}\ }\textbf {\bibinfo {volume} {61}},\
  \bibinfo {pages} {117} (\bibinfo {year} {1998})}\BibitemShut {NoStop}%
\bibitem [{\citenamefont {Zhao}\ \emph {et~al.}(2025)\citenamefont {Zhao},
  \citenamefont {Zhou},\ and\ \citenamefont {Childs}}]{zhao2025entanglement}%
  \BibitemOpen
  \bibfield  {author} {\bibinfo {author} {\bibfnamefont {Q.}~\bibnamefont
  {Zhao}}, \bibinfo {author} {\bibfnamefont {Y.}~\bibnamefont {Zhou}},\ and\
  \bibinfo {author} {\bibfnamefont {A.~M.}\ \bibnamefont {Childs}},\ }\bibfield
   {title} {\bibinfo {title} {Entanglement accelerates quantum simulation},\
  }\href {https://doi.org/https://doi.org/10.1038/s41567-025-02945-2}
  {\bibfield  {journal} {\bibinfo  {journal} {Nat. Phys.}\ }\textbf {\bibinfo
  {volume} {21}},\ \bibinfo {pages} {1338} (\bibinfo {year}
  {2025})}\BibitemShut {NoStop}%
\bibitem [{\citenamefont {Giovannetti}\ \emph {et~al.}(2006)\citenamefont
  {Giovannetti}, \citenamefont {Lloyd},\ and\ \citenamefont
  {Maccone}}]{PhysRevLett.96.010401}%
  \BibitemOpen
  \bibfield  {author} {\bibinfo {author} {\bibfnamefont {V.}~\bibnamefont
  {Giovannetti}}, \bibinfo {author} {\bibfnamefont {S.}~\bibnamefont {Lloyd}},\
  and\ \bibinfo {author} {\bibfnamefont {L.}~\bibnamefont {Maccone}},\
  }\bibfield  {title} {\bibinfo {title} {Quantum metrology},\ }\href
  {https://doi.org/10.1103/PhysRevLett.96.010401} {\bibfield  {journal}
  {\bibinfo  {journal} {Phys. Rev. Lett.}\ }\textbf {\bibinfo {volume} {96}},\
  \bibinfo {pages} {010401} (\bibinfo {year} {2006})}\BibitemShut {NoStop}%
\bibitem [{\citenamefont {Friis}\ \emph {et~al.}(2019)\citenamefont {Friis},
  \citenamefont {Vitagliano}, \citenamefont {Malik},\ and\ \citenamefont
  {Huber}}]{friis2019entanglement}%
  \BibitemOpen
  \bibfield  {author} {\bibinfo {author} {\bibfnamefont {N.}~\bibnamefont
  {Friis}}, \bibinfo {author} {\bibfnamefont {G.}~\bibnamefont {Vitagliano}},
  \bibinfo {author} {\bibfnamefont {M.}~\bibnamefont {Malik}},\ and\ \bibinfo
  {author} {\bibfnamefont {M.}~\bibnamefont {Huber}},\ }\bibfield  {title}
  {\bibinfo {title} {Entanglement certification from theory to experiment},\
  }\href {https://doi.org/10.1038/s42254-018-0003-5} {\bibfield  {journal}
  {\bibinfo  {journal} {Nat Rev Phys}\ }\textbf {\bibinfo {volume} {1}},\
  \bibinfo {pages} {72} (\bibinfo {year} {2019})}\BibitemShut {NoStop}%
\bibitem [{\citenamefont {G{\"u}hne}\ and\ \citenamefont
  {T{\'o}th}(2009)}]{guhne2009entanglement}%
  \BibitemOpen
  \bibfield  {author} {\bibinfo {author} {\bibfnamefont {O.}~\bibnamefont
  {G{\"u}hne}}\ and\ \bibinfo {author} {\bibfnamefont {G.}~\bibnamefont
  {T{\'o}th}},\ }\bibfield  {title} {\bibinfo {title} {Entanglement
  detection},\ }\href {https://doi.org/10.1016/j.physrep.2009.02.004}
  {\bibfield  {journal} {\bibinfo  {journal} {Physics Reports}\ }\textbf
  {\bibinfo {volume} {474}},\ \bibinfo {pages} {1} (\bibinfo {year}
  {2009})}\BibitemShut {NoStop}%
\bibitem [{\citenamefont {Horodecki}\ \emph {et~al.}(2009)\citenamefont
  {Horodecki}, \citenamefont {Horodecki}, \citenamefont {Horodecki},\ and\
  \citenamefont {Horodecki}}]{RevModPhys.81.865}%
  \BibitemOpen
  \bibfield  {author} {\bibinfo {author} {\bibfnamefont {R.}~\bibnamefont
  {Horodecki}}, \bibinfo {author} {\bibfnamefont {P.}~\bibnamefont
  {Horodecki}}, \bibinfo {author} {\bibfnamefont {M.}~\bibnamefont
  {Horodecki}},\ and\ \bibinfo {author} {\bibfnamefont {K.}~\bibnamefont
  {Horodecki}},\ }\bibfield  {title} {\bibinfo {title} {Quantum entanglement},\
  }\href {https://doi.org/10.1103/RevModPhys.81.865} {\bibfield  {journal}
  {\bibinfo  {journal} {Rev. Mod. Phys.}\ }\textbf {\bibinfo {volume} {81}},\
  \bibinfo {pages} {865} (\bibinfo {year} {2009})}\BibitemShut {NoStop}%
\bibitem [{\citenamefont {Terhal}(2000)}]{terhal2000bell}%
  \BibitemOpen
  \bibfield  {author} {\bibinfo {author} {\bibfnamefont {B.~M.}\ \bibnamefont
  {Terhal}},\ }\bibfield  {title} {\bibinfo {title} {Bell inequalities and the
  separability criterion},\ }\href
  {https://doi.org/10.1016/S0375-9601(00)00401-1} {\bibfield  {journal}
  {\bibinfo  {journal} {Phys. Lett. A}\ }\textbf {\bibinfo {volume} {271}},\
  \bibinfo {pages} {319} (\bibinfo {year} {2000})}\BibitemShut {NoStop}%
\bibitem [{\citenamefont {Lewenstein}\ \emph {et~al.}(2000)\citenamefont
  {Lewenstein}, \citenamefont {Kraus}, \citenamefont {Cirac},\ and\
  \citenamefont {Horodecki}}]{PhysRevA.62.052310}%
  \BibitemOpen
  \bibfield  {author} {\bibinfo {author} {\bibfnamefont {M.}~\bibnamefont
  {Lewenstein}}, \bibinfo {author} {\bibfnamefont {B.}~\bibnamefont {Kraus}},
  \bibinfo {author} {\bibfnamefont {J.~I.}\ \bibnamefont {Cirac}},\ and\
  \bibinfo {author} {\bibfnamefont {P.}~\bibnamefont {Horodecki}},\ }\bibfield
  {title} {\bibinfo {title} {Optimization of entanglement witnesses},\ }\href
  {https://doi.org/10.1103/PhysRevA.62.052310} {\bibfield  {journal} {\bibinfo
  {journal} {Phys. Rev. A}\ }\textbf {\bibinfo {volume} {62}},\ \bibinfo
  {pages} {052310} (\bibinfo {year} {2000})}\BibitemShut {NoStop}%
\bibitem [{\citenamefont {Cramer}\ \emph {et~al.}(2010)\citenamefont {Cramer},
  \citenamefont {Plenio}, \citenamefont {Flammia}, \citenamefont {Somma},
  \citenamefont {Gross}, \citenamefont {Bartlett}, \citenamefont
  {Landon-Cardinal}, \citenamefont {Poulin},\ and\ \citenamefont
  {Liu}}]{cramer2010efficient}%
  \BibitemOpen
  \bibfield  {author} {\bibinfo {author} {\bibfnamefont {M.}~\bibnamefont
  {Cramer}}, \bibinfo {author} {\bibfnamefont {M.~B.}\ \bibnamefont {Plenio}},
  \bibinfo {author} {\bibfnamefont {S.~T.}\ \bibnamefont {Flammia}}, \bibinfo
  {author} {\bibfnamefont {R.}~\bibnamefont {Somma}}, \bibinfo {author}
  {\bibfnamefont {D.}~\bibnamefont {Gross}}, \bibinfo {author} {\bibfnamefont
  {S.~D.}\ \bibnamefont {Bartlett}}, \bibinfo {author} {\bibfnamefont
  {O.}~\bibnamefont {Landon-Cardinal}}, \bibinfo {author} {\bibfnamefont
  {D.}~\bibnamefont {Poulin}},\ and\ \bibinfo {author} {\bibfnamefont {Y.-K.}\
  \bibnamefont {Liu}},\ }\bibfield  {title} {\bibinfo {title} {Efficient
  quantum state tomography},\ }\href {https://doi.org/10.1038/ncomms1147}
  {\bibfield  {journal} {\bibinfo  {journal} {Nat Commun}\ }\textbf {\bibinfo
  {volume} {1}},\ \bibinfo {pages} {149} (\bibinfo {year} {2010})}\BibitemShut
  {NoStop}%
\bibitem [{\citenamefont {Peres}(1996)}]{PhysRevLett.77.1413}%
  \BibitemOpen
  \bibfield  {author} {\bibinfo {author} {\bibfnamefont {A.}~\bibnamefont
  {Peres}},\ }\bibfield  {title} {\bibinfo {title} {Separability criterion for
  density matrices},\ }\href {https://doi.org/10.1103/PhysRevLett.77.1413}
  {\bibfield  {journal} {\bibinfo  {journal} {Phys. Rev. Lett.}\ }\textbf
  {\bibinfo {volume} {77}},\ \bibinfo {pages} {1413} (\bibinfo {year}
  {1996})}\BibitemShut {NoStop}%
\bibitem [{\citenamefont {Rudolph}(2005)}]{rudolph2005further}%
  \BibitemOpen
  \bibfield  {author} {\bibinfo {author} {\bibfnamefont {O.}~\bibnamefont
  {Rudolph}},\ }\bibfield  {title} {\bibinfo {title} {Further results on the
  cross norm criterion for separability},\ }\href
  {https://doi.org/10.1007/s11128-005-5664-1} {\bibfield  {journal} {\bibinfo
  {journal} {Quantum Inf Process}\ }\textbf {\bibinfo {volume} {4}},\ \bibinfo
  {pages} {219} (\bibinfo {year} {2005})}\BibitemShut {NoStop}%
\bibitem [{\citenamefont {Chen}\ and\ \citenamefont
  {Wu}(2003)}]{chen2003matrix}%
  \BibitemOpen
  \bibfield  {author} {\bibinfo {author} {\bibfnamefont {K.}~\bibnamefont
  {Chen}}\ and\ \bibinfo {author} {\bibfnamefont {L.-A.}\ \bibnamefont {Wu}},\
  }\bibfield  {title} {\bibinfo {title} {A matrix realignment method for
  recognizing entanglement},\ }\href
  {https://dl.acm.org/doi/abs/10.5555/2011534.2011535} {\bibfield  {journal}
  {\bibinfo  {journal} {Quantum Info. Comput.}\ }\textbf {\bibinfo {volume}
  {3}},\ \bibinfo {pages} {193–202} (\bibinfo {year} {2003})}\BibitemShut
  {NoStop}%
\bibitem [{\citenamefont {Bourennane}\ \emph {et~al.}(2004)\citenamefont
  {Bourennane}, \citenamefont {Eibl}, \citenamefont {Kurtsiefer}, \citenamefont
  {Gaertner}, \citenamefont {Weinfurter}, \citenamefont {G\"uhne},
  \citenamefont {Hyllus}, \citenamefont {Bru\ss{}}, \citenamefont
  {Lewenstein},\ and\ \citenamefont {Sanpera}}]{PhysRevLett.92.087902}%
  \BibitemOpen
  \bibfield  {author} {\bibinfo {author} {\bibfnamefont {M.}~\bibnamefont
  {Bourennane}}, \bibinfo {author} {\bibfnamefont {M.}~\bibnamefont {Eibl}},
  \bibinfo {author} {\bibfnamefont {C.}~\bibnamefont {Kurtsiefer}}, \bibinfo
  {author} {\bibfnamefont {S.}~\bibnamefont {Gaertner}}, \bibinfo {author}
  {\bibfnamefont {H.}~\bibnamefont {Weinfurter}}, \bibinfo {author}
  {\bibfnamefont {O.}~\bibnamefont {G\"uhne}}, \bibinfo {author} {\bibfnamefont
  {P.}~\bibnamefont {Hyllus}}, \bibinfo {author} {\bibfnamefont
  {D.}~\bibnamefont {Bru\ss{}}}, \bibinfo {author} {\bibfnamefont
  {M.}~\bibnamefont {Lewenstein}},\ and\ \bibinfo {author} {\bibfnamefont
  {A.}~\bibnamefont {Sanpera}},\ }\bibfield  {title} {\bibinfo {title}
  {Experimental detection of multipartite entanglement using witness
  operators},\ }\href {https://doi.org/10.1103/PhysRevLett.92.087902}
  {\bibfield  {journal} {\bibinfo  {journal} {Phys. Rev. Lett.}\ }\textbf
  {\bibinfo {volume} {92}},\ \bibinfo {pages} {087902} (\bibinfo {year}
  {2004})}\BibitemShut {NoStop}%
\bibitem [{\citenamefont {T\'oth}\ and\ \citenamefont
  {G\"uhne}(2005)}]{PhysRevLett.94.060501}%
  \BibitemOpen
  \bibfield  {author} {\bibinfo {author} {\bibfnamefont {G.}~\bibnamefont
  {T\'oth}}\ and\ \bibinfo {author} {\bibfnamefont {O.}~\bibnamefont
  {G\"uhne}},\ }\bibfield  {title} {\bibinfo {title} {Detecting genuine
  multipartite entanglement with two local measurements},\ }\href
  {https://doi.org/10.1103/PhysRevLett.94.060501} {\bibfield  {journal}
  {\bibinfo  {journal} {Phys. Rev. Lett.}\ }\textbf {\bibinfo {volume} {94}},\
  \bibinfo {pages} {060501} (\bibinfo {year} {2005})}\BibitemShut {NoStop}%
\bibitem [{\citenamefont {G{\"u}hne}\ \emph {et~al.}(2003)\citenamefont
  {G{\"u}hne}, \citenamefont {Hyllus}, \citenamefont {Bru{\ss}}, \citenamefont
  {Ekert}, \citenamefont {Lewenstein}, \citenamefont {Macchiavello},\ and\
  \citenamefont {Sanpera}}]{guhne2003experimental}%
  \BibitemOpen
  \bibfield  {author} {\bibinfo {author} {\bibfnamefont {O.}~\bibnamefont
  {G{\"u}hne}}, \bibinfo {author} {\bibfnamefont {P.}~\bibnamefont {Hyllus}},
  \bibinfo {author} {\bibfnamefont {D.}~\bibnamefont {Bru{\ss}}}, \bibinfo
  {author} {\bibfnamefont {A.}~\bibnamefont {Ekert}}, \bibinfo {author}
  {\bibfnamefont {M.}~\bibnamefont {Lewenstein}}, \bibinfo {author}
  {\bibfnamefont {C.}~\bibnamefont {Macchiavello}},\ and\ \bibinfo {author}
  {\bibfnamefont {A.}~\bibnamefont {Sanpera}},\ }\bibfield  {title} {\bibinfo
  {title} {Experimental detection of entanglement via witness operators and
  local measurements},\ }\href {https://doi.org/10.1080/09500340308234554}
  {\bibfield  {journal} {\bibinfo  {journal} {Journal of Modern Optics}\
  }\textbf {\bibinfo {volume} {50}},\ \bibinfo {pages} {1079} (\bibinfo {year}
  {2003})}\BibitemShut {NoStop}%
\bibitem [{\citenamefont {Gerke}\ \emph {et~al.}(2018)\citenamefont {Gerke},
  \citenamefont {Vogel},\ and\ \citenamefont {Sperling}}]{PhysRevX.8.031047}%
  \BibitemOpen
  \bibfield  {author} {\bibinfo {author} {\bibfnamefont {S.}~\bibnamefont
  {Gerke}}, \bibinfo {author} {\bibfnamefont {W.}~\bibnamefont {Vogel}},\ and\
  \bibinfo {author} {\bibfnamefont {J.}~\bibnamefont {Sperling}},\ }\bibfield
  {title} {\bibinfo {title} {Numerical construction of multipartite
  entanglement witnesses},\ }\href {https://doi.org/10.1103/PhysRevX.8.031047}
  {\bibfield  {journal} {\bibinfo  {journal} {Phys. Rev. X}\ }\textbf {\bibinfo
  {volume} {8}},\ \bibinfo {pages} {031047} (\bibinfo {year}
  {2018})}\BibitemShut {NoStop}%
\bibitem [{\citenamefont {Sperling}\ and\ \citenamefont
  {Vogel}(2013)}]{PhysRevLett.111.110503}%
  \BibitemOpen
  \bibfield  {author} {\bibinfo {author} {\bibfnamefont {J.}~\bibnamefont
  {Sperling}}\ and\ \bibinfo {author} {\bibfnamefont {W.}~\bibnamefont
  {Vogel}},\ }\bibfield  {title} {\bibinfo {title} {Multipartite entanglement
  witnesses},\ }\href {https://doi.org/10.1103/PhysRevLett.111.110503}
  {\bibfield  {journal} {\bibinfo  {journal} {Phys. Rev. Lett.}\ }\textbf
  {\bibinfo {volume} {111}},\ \bibinfo {pages} {110503} (\bibinfo {year}
  {2013})}\BibitemShut {NoStop}%
\bibitem [{\citenamefont {Sanpera}\ \emph {et~al.}(2001)\citenamefont
  {Sanpera}, \citenamefont {Bru\ss{}},\ and\ \citenamefont
  {Lewenstein}}]{PhysRevA.63.050301}%
  \BibitemOpen
  \bibfield  {author} {\bibinfo {author} {\bibfnamefont {A.}~\bibnamefont
  {Sanpera}}, \bibinfo {author} {\bibfnamefont {D.}~\bibnamefont {Bru\ss{}}},\
  and\ \bibinfo {author} {\bibfnamefont {M.}~\bibnamefont {Lewenstein}},\
  }\bibfield  {title} {\bibinfo {title} {Schmidt-number witnesses and bound
  entanglement},\ }\href {https://doi.org/10.1103/PhysRevA.63.050301}
  {\bibfield  {journal} {\bibinfo  {journal} {Phys. Rev. A}\ }\textbf {\bibinfo
  {volume} {63}},\ \bibinfo {pages} {050301} (\bibinfo {year}
  {2001})}\BibitemShut {NoStop}%
\bibitem [{\citenamefont {Wyderka}\ \emph {et~al.}(2023)\citenamefont
  {Wyderka}, \citenamefont {Chesi}, \citenamefont {Kampermann}, \citenamefont
  {Macchiavello},\ and\ \citenamefont {Bru\ss{}}}]{PhysRevA.107.022431}%
  \BibitemOpen
  \bibfield  {author} {\bibinfo {author} {\bibfnamefont {N.}~\bibnamefont
  {Wyderka}}, \bibinfo {author} {\bibfnamefont {G.}~\bibnamefont {Chesi}},
  \bibinfo {author} {\bibfnamefont {H.}~\bibnamefont {Kampermann}}, \bibinfo
  {author} {\bibfnamefont {C.}~\bibnamefont {Macchiavello}},\ and\ \bibinfo
  {author} {\bibfnamefont {D.}~\bibnamefont {Bru\ss{}}},\ }\bibfield  {title}
  {\bibinfo {title} {Construction of efficient schmidt-number witnesses for
  high-dimensional quantum states},\ }\href
  {https://doi.org/10.1103/PhysRevA.107.022431} {\bibfield  {journal} {\bibinfo
   {journal} {Phys. Rev. A}\ }\textbf {\bibinfo {volume} {107}},\ \bibinfo
  {pages} {022431} (\bibinfo {year} {2023})}\BibitemShut {NoStop}%
\bibitem [{\citenamefont {Terhal}\ and\ \citenamefont
  {Horodecki}(2000)}]{PhysRevA.61.040301}%
  \BibitemOpen
  \bibfield  {author} {\bibinfo {author} {\bibfnamefont {B.~M.}\ \bibnamefont
  {Terhal}}\ and\ \bibinfo {author} {\bibfnamefont {P.}~\bibnamefont
  {Horodecki}},\ }\bibfield  {title} {\bibinfo {title} {Schmidt number for
  density matrices},\ }\href {https://doi.org/10.1103/PhysRevA.61.040301}
  {\bibfield  {journal} {\bibinfo  {journal} {Phys. Rev. A}\ }\textbf {\bibinfo
  {volume} {61}},\ \bibinfo {pages} {040301} (\bibinfo {year}
  {2000})}\BibitemShut {NoStop}%
\bibitem [{\citenamefont {Gurvits}(2003)}]{10.1145/780542.780545}%
  \BibitemOpen
  \bibfield  {author} {\bibinfo {author} {\bibfnamefont {L.}~\bibnamefont
  {Gurvits}},\ }\bibfield  {title} {\bibinfo {title} {Classical deterministic
  complexity of edmonds' problem and quantum entanglement},\ }in\ \href
  {https://doi.org/10.1145/780542.780545} {\emph {\bibinfo {booktitle}
  {Proceedings of the Thirty-Fifth Annual ACM Symposium on Theory of
  Computing}}},\ \bibinfo {series and number} {STOC '03}\ (\bibinfo
  {publisher} {Association for Computing Machinery},\ \bibinfo {address} {New
  York, NY, USA},\ \bibinfo {year} {2003})\ pp.\ \bibinfo {pages}
  {10--19}\BibitemShut {NoStop}%
\bibitem [{\citenamefont {Ioannou}(2007)}]{ioannou2006computational}%
  \BibitemOpen
  \bibfield  {author} {\bibinfo {author} {\bibfnamefont {L.~M.}\ \bibnamefont
  {Ioannou}},\ }\bibfield  {title} {\bibinfo {title} {Computational complexity
  of the quantum separability problem},\ }\href@noop {} {\bibfield  {journal}
  {\bibinfo  {journal} {Quantum Inf. Comput.}\ }\textbf {\bibinfo {volume}
  {7}},\ \bibinfo {pages} {335} (\bibinfo {year} {2007})}\BibitemShut {NoStop}%
\bibitem [{\citenamefont {Krammer}(2009)}]{krammer2009characterizing}%
  \BibitemOpen
  \bibfield  {author} {\bibinfo {author} {\bibfnamefont {P.}~\bibnamefont
  {Krammer}},\ }\bibfield  {title} {\bibinfo {title} {Characterizing
  entanglement with geometric entanglement witnesses},\ }\href
  {https://doi.org/10.1088/1751-8113/42/6/065305} {\bibfield  {journal}
  {\bibinfo  {journal} {J. Phys. A: Math. Theor.}\ }\textbf {\bibinfo {volume}
  {42}},\ \bibinfo {pages} {065305} (\bibinfo {year} {2009})}\BibitemShut
  {NoStop}%
\bibitem [{\citenamefont {Bertlmann}\ \emph {et~al.}(2005)\citenamefont
  {Bertlmann}, \citenamefont {Durstberger}, \citenamefont {Hiesmayr},\ and\
  \citenamefont {Krammer}}]{PhysRevA.72.052331}%
  \BibitemOpen
  \bibfield  {author} {\bibinfo {author} {\bibfnamefont {R.~A.}\ \bibnamefont
  {Bertlmann}}, \bibinfo {author} {\bibfnamefont {K.}~\bibnamefont
  {Durstberger}}, \bibinfo {author} {\bibfnamefont {B.~C.}\ \bibnamefont
  {Hiesmayr}},\ and\ \bibinfo {author} {\bibfnamefont {P.}~\bibnamefont
  {Krammer}},\ }\bibfield  {title} {\bibinfo {title} {Optimal entanglement
  witnesses for qubits and qutrits},\ }\href
  {https://doi.org/10.1103/PhysRevA.72.052331} {\bibfield  {journal} {\bibinfo
  {journal} {Phys. Rev. A}\ }\textbf {\bibinfo {volume} {72}},\ \bibinfo
  {pages} {052331} (\bibinfo {year} {2005})}\BibitemShut {NoStop}%
\bibitem [{\citenamefont {Bertlmann}\ and\ \citenamefont
  {Krammer}(2008)}]{PhysRevA.77.024303}%
  \BibitemOpen
  \bibfield  {author} {\bibinfo {author} {\bibfnamefont {R.~A.}\ \bibnamefont
  {Bertlmann}}\ and\ \bibinfo {author} {\bibfnamefont {P.}~\bibnamefont
  {Krammer}},\ }\bibfield  {title} {\bibinfo {title} {Geometric entanglement
  witnesses and bound entanglement},\ }\href
  {https://doi.org/10.1103/PhysRevA.77.024303} {\bibfield  {journal} {\bibinfo
  {journal} {Phys. Rev. A}\ }\textbf {\bibinfo {volume} {77}},\ \bibinfo
  {pages} {024303} (\bibinfo {year} {2008})}\BibitemShut {NoStop}%
\bibitem [{\citenamefont {Bertlmann}\ \emph {et~al.}(2002)\citenamefont
  {Bertlmann}, \citenamefont {Narnhofer},\ and\ \citenamefont
  {Thirring}}]{PhysRevA.66.032319}%
  \BibitemOpen
  \bibfield  {author} {\bibinfo {author} {\bibfnamefont {R.~A.}\ \bibnamefont
  {Bertlmann}}, \bibinfo {author} {\bibfnamefont {H.}~\bibnamefont
  {Narnhofer}},\ and\ \bibinfo {author} {\bibfnamefont {W.}~\bibnamefont
  {Thirring}},\ }\bibfield  {title} {\bibinfo {title} {Geometric picture of
  entanglement and bell inequalities},\ }\href
  {https://doi.org/10.1103/PhysRevA.66.032319} {\bibfield  {journal} {\bibinfo
  {journal} {Phys. Rev. A}\ }\textbf {\bibinfo {volume} {66}},\ \bibinfo
  {pages} {032319} (\bibinfo {year} {2002})}\BibitemShut {NoStop}%
\bibitem [{\citenamefont {Reed}\ and\ \citenamefont
  {Simon}(1980)}]{reed1980methods}%
  \BibitemOpen
  \bibfield  {author} {\bibinfo {author} {\bibfnamefont {M.}~\bibnamefont
  {Reed}}\ and\ \bibinfo {author} {\bibfnamefont {B.}~\bibnamefont {Simon}},\
  }\href {https://doi.org/10.1016/B978-0-12-585001-8.X5001-6} {\emph {\bibinfo
  {title} {Methods of modern mathematical physics: Functional analysis}}},\
  Vol.~\bibinfo {volume} {1}\ (\bibinfo  {publisher} {Academic Press},\
  \bibinfo {address} {New York},\ \bibinfo {year} {1980})\BibitemShut {NoStop}%
\bibitem [{\citenamefont {Lu}\ \emph {et~al.}(2018{\natexlab{a}})\citenamefont
  {Lu}, \citenamefont {Zhao}, \citenamefont {Li}, \citenamefont {Yin},
  \citenamefont {Yuan}, \citenamefont {Hung}, \citenamefont {Chen},
  \citenamefont {Li}, \citenamefont {Liu}, \citenamefont {Peng}, \citenamefont
  {Liang}, \citenamefont {Ma}, \citenamefont {Chen},\ and\ \citenamefont
  {Pan}}]{PhysRevX.8.021072}%
  \BibitemOpen
  \bibfield  {author} {\bibinfo {author} {\bibfnamefont {H.}~\bibnamefont
  {Lu}}, \bibinfo {author} {\bibfnamefont {Q.}~\bibnamefont {Zhao}}, \bibinfo
  {author} {\bibfnamefont {Z.-D.}\ \bibnamefont {Li}}, \bibinfo {author}
  {\bibfnamefont {X.-F.}\ \bibnamefont {Yin}}, \bibinfo {author} {\bibfnamefont
  {X.}~\bibnamefont {Yuan}}, \bibinfo {author} {\bibfnamefont {J.-C.}\
  \bibnamefont {Hung}}, \bibinfo {author} {\bibfnamefont {L.-K.}\ \bibnamefont
  {Chen}}, \bibinfo {author} {\bibfnamefont {L.}~\bibnamefont {Li}}, \bibinfo
  {author} {\bibfnamefont {N.-L.}\ \bibnamefont {Liu}}, \bibinfo {author}
  {\bibfnamefont {C.-Z.}\ \bibnamefont {Peng}}, \bibinfo {author}
  {\bibfnamefont {Y.-C.}\ \bibnamefont {Liang}}, \bibinfo {author}
  {\bibfnamefont {X.}~\bibnamefont {Ma}}, \bibinfo {author} {\bibfnamefont
  {Y.-A.}\ \bibnamefont {Chen}},\ and\ \bibinfo {author} {\bibfnamefont
  {J.-W.}\ \bibnamefont {Pan}},\ }\bibfield  {title} {\bibinfo {title}
  {Entanglement structure: Entanglement partitioning in multipartite systems
  and its experimental detection using optimizable witnesses},\ }\href
  {https://doi.org/10.1103/PhysRevX.8.021072} {\bibfield  {journal} {\bibinfo
  {journal} {Phys. Rev. X}\ }\textbf {\bibinfo {volume} {8}},\ \bibinfo {pages}
  {021072} (\bibinfo {year} {2018}{\natexlab{a}})}\BibitemShut {NoStop}%
\bibitem [{\citenamefont {Zhou}\ \emph {et~al.}(2019)\citenamefont {Zhou},
  \citenamefont {Zhao}, \citenamefont {Yuan},\ and\ \citenamefont
  {Ma}}]{zhou2019detecting}%
  \BibitemOpen
  \bibfield  {author} {\bibinfo {author} {\bibfnamefont {Y.}~\bibnamefont
  {Zhou}}, \bibinfo {author} {\bibfnamefont {Q.}~\bibnamefont {Zhao}}, \bibinfo
  {author} {\bibfnamefont {X.}~\bibnamefont {Yuan}},\ and\ \bibinfo {author}
  {\bibfnamefont {X.}~\bibnamefont {Ma}},\ }\bibfield  {title} {\bibinfo
  {title} {Detecting multipartite entanglement structure with minimal
  resources},\ }\href {https://doi.org/10.1038/s41534-019-0200-9} {\bibfield
  {journal} {\bibinfo  {journal} {npj Quantum Inf}\ }\textbf {\bibinfo {volume}
  {5}},\ \bibinfo {pages} {83} (\bibinfo {year} {2019})}\BibitemShut {NoStop}%
\bibitem [{\citenamefont {Shahandeh}\ \emph {et~al.}(2014)\citenamefont
  {Shahandeh}, \citenamefont {Sperling},\ and\ \citenamefont
  {Vogel}}]{PhysRevLett.113.260502}%
  \BibitemOpen
  \bibfield  {author} {\bibinfo {author} {\bibfnamefont {F.}~\bibnamefont
  {Shahandeh}}, \bibinfo {author} {\bibfnamefont {J.}~\bibnamefont
  {Sperling}},\ and\ \bibinfo {author} {\bibfnamefont {W.}~\bibnamefont
  {Vogel}},\ }\bibfield  {title} {\bibinfo {title} {Structural quantification
  of entanglement},\ }\href {https://doi.org/10.1103/PhysRevLett.113.260502}
  {\bibfield  {journal} {\bibinfo  {journal} {Phys. Rev. Lett.}\ }\textbf
  {\bibinfo {volume} {113}},\ \bibinfo {pages} {260502} (\bibinfo {year}
  {2014})}\BibitemShut {NoStop}%
\bibitem [{\citenamefont {Huber}\ and\ \citenamefont
  {de~Vicente}(2013)}]{PhysRevLett.110.030501}%
  \BibitemOpen
  \bibfield  {author} {\bibinfo {author} {\bibfnamefont {M.}~\bibnamefont
  {Huber}}\ and\ \bibinfo {author} {\bibfnamefont {J.~I.}\ \bibnamefont
  {de~Vicente}},\ }\bibfield  {title} {\bibinfo {title} {Structure of
  multidimensional entanglement in multipartite systems},\ }\href
  {https://doi.org/10.1103/PhysRevLett.110.030501} {\bibfield  {journal}
  {\bibinfo  {journal} {Phys. Rev. Lett.}\ }\textbf {\bibinfo {volume} {110}},\
  \bibinfo {pages} {030501} (\bibinfo {year} {2013})}\BibitemShut {NoStop}%
\bibitem [{\citenamefont {Horodecki}\ \emph {et~al.}(1998)\citenamefont
  {Horodecki}, \citenamefont {Horodecki},\ and\ \citenamefont
  {Horodecki}}]{PhysRevLett.80.5239}%
  \BibitemOpen
  \bibfield  {author} {\bibinfo {author} {\bibfnamefont {M.}~\bibnamefont
  {Horodecki}}, \bibinfo {author} {\bibfnamefont {P.}~\bibnamefont
  {Horodecki}},\ and\ \bibinfo {author} {\bibfnamefont {R.}~\bibnamefont
  {Horodecki}},\ }\bibfield  {title} {\bibinfo {title} {Mixed-state
  entanglement and distillation: Is there a ``bound'' entanglement in
  nature?},\ }\href {https://doi.org/10.1103/PhysRevLett.80.5239} {\bibfield
  {journal} {\bibinfo  {journal} {Phys. Rev. Lett.}\ }\textbf {\bibinfo
  {volume} {80}},\ \bibinfo {pages} {5239} (\bibinfo {year}
  {1998})}\BibitemShut {NoStop}%
\bibitem [{\citenamefont {Walter}\ \emph {et~al.}(2016)\citenamefont {Walter},
  \citenamefont {Gross},\ and\ \citenamefont
  {Eisert}}]{walter2016multipartite}%
  \BibitemOpen
  \bibfield  {author} {\bibinfo {author} {\bibfnamefont {M.}~\bibnamefont
  {Walter}}, \bibinfo {author} {\bibfnamefont {D.}~\bibnamefont {Gross}},\ and\
  \bibinfo {author} {\bibfnamefont {J.}~\bibnamefont {Eisert}},\ }\bibfield
  {title} {\bibinfo {title} {Multipartite entanglement},\ }in\ \href
  {https://doi.org/10.1002/9783527805785.ch14} {\emph {\bibinfo {booktitle}
  {Quantum Information: From Foundations to Quantum Technology
  Applications}}},\ \bibinfo {editor} {edited by\ \bibinfo {editor}
  {\bibfnamefont {D.}~\bibnamefont {Bru{\ss}}}\ and\ \bibinfo {editor}
  {\bibfnamefont {G.}~\bibnamefont {Leuchs}}}\ (\bibinfo  {publisher} {Wiley
  Online Library},\ \bibinfo {address} {Weinheim},\ \bibinfo {year} {2016})\
  Chap.~\bibinfo {chapter} {14}, pp.\ \bibinfo {pages} {293--330},\ \bibinfo
  {note} {{Part III}}\BibitemShut {NoStop}%
\bibitem [{\citenamefont {Hu}\ \emph {et~al.}(2021)\citenamefont {Hu},
  \citenamefont {Xing}, \citenamefont {Guo}, \citenamefont {Weilenmann},
  \citenamefont {Aguilar}, \citenamefont {Gao}, \citenamefont {Liu},
  \citenamefont {Huang}, \citenamefont {Li}, \citenamefont {Guo}, \citenamefont
  {Wang},\ and\ \citenamefont {Navascu\'es}}]{PhysRevLett.127.220501}%
  \BibitemOpen
  \bibfield  {author} {\bibinfo {author} {\bibfnamefont {X.-M.}\ \bibnamefont
  {Hu}}, \bibinfo {author} {\bibfnamefont {W.-B.}\ \bibnamefont {Xing}},
  \bibinfo {author} {\bibfnamefont {Y.}~\bibnamefont {Guo}}, \bibinfo {author}
  {\bibfnamefont {M.}~\bibnamefont {Weilenmann}}, \bibinfo {author}
  {\bibfnamefont {E.~A.}\ \bibnamefont {Aguilar}}, \bibinfo {author}
  {\bibfnamefont {X.}~\bibnamefont {Gao}}, \bibinfo {author} {\bibfnamefont
  {B.-H.}\ \bibnamefont {Liu}}, \bibinfo {author} {\bibfnamefont {Y.-F.}\
  \bibnamefont {Huang}}, \bibinfo {author} {\bibfnamefont {C.-F.}\ \bibnamefont
  {Li}}, \bibinfo {author} {\bibfnamefont {G.-C.}\ \bibnamefont {Guo}},
  \bibinfo {author} {\bibfnamefont {Z.}~\bibnamefont {Wang}},\ and\ \bibinfo
  {author} {\bibfnamefont {M.}~\bibnamefont {Navascu\'es}},\ }\bibfield
  {title} {\bibinfo {title} {Optimized detection of high-dimensional
  entanglement},\ }\href {https://doi.org/10.1103/PhysRevLett.127.220501}
  {\bibfield  {journal} {\bibinfo  {journal} {Phys. Rev. Lett.}\ }\textbf
  {\bibinfo {volume} {127}},\ \bibinfo {pages} {220501} (\bibinfo {year}
  {2021})}\BibitemShut {NoStop}%
\bibitem [{\citenamefont {Erhard}\ \emph {et~al.}(2020)\citenamefont {Erhard},
  \citenamefont {Krenn},\ and\ \citenamefont {Zeilinger}}]{erhard2020advances}%
  \BibitemOpen
  \bibfield  {author} {\bibinfo {author} {\bibfnamefont {M.}~\bibnamefont
  {Erhard}}, \bibinfo {author} {\bibfnamefont {M.}~\bibnamefont {Krenn}},\ and\
  \bibinfo {author} {\bibfnamefont {A.}~\bibnamefont {Zeilinger}},\ }\bibfield
  {title} {\bibinfo {title} {Advances in high-dimensional quantum
  entanglement},\ }\href {https://doi.org/10.1038/s42254-020-0193-5} {\bibfield
   {journal} {\bibinfo  {journal} {Nat Rev Phys}\ }\textbf {\bibinfo {volume}
  {2}},\ \bibinfo {pages} {365} (\bibinfo {year} {2020})}\BibitemShut {NoStop}%
\bibitem [{\citenamefont {Bavaresco}\ \emph {et~al.}(2018)\citenamefont
  {Bavaresco}, \citenamefont {Herrera~Valencia}, \citenamefont {Kl{\"o}ckl},
  \citenamefont {Pivoluska}, \citenamefont {Erker}, \citenamefont {Friis},
  \citenamefont {Malik},\ and\ \citenamefont
  {Huber}}]{bavaresco2018measurements}%
  \BibitemOpen
  \bibfield  {author} {\bibinfo {author} {\bibfnamefont {J.}~\bibnamefont
  {Bavaresco}}, \bibinfo {author} {\bibfnamefont {N.}~\bibnamefont
  {Herrera~Valencia}}, \bibinfo {author} {\bibfnamefont {C.}~\bibnamefont
  {Kl{\"o}ckl}}, \bibinfo {author} {\bibfnamefont {M.}~\bibnamefont
  {Pivoluska}}, \bibinfo {author} {\bibfnamefont {P.}~\bibnamefont {Erker}},
  \bibinfo {author} {\bibfnamefont {N.}~\bibnamefont {Friis}}, \bibinfo
  {author} {\bibfnamefont {M.}~\bibnamefont {Malik}},\ and\ \bibinfo {author}
  {\bibfnamefont {M.}~\bibnamefont {Huber}},\ }\bibfield  {title} {\bibinfo
  {title} {Measurements in two bases are sufficient for certifying
  high-dimensional entanglement},\ }\href
  {https://doi.org/10.1038/s41567-018-0203-z} {\bibfield  {journal} {\bibinfo
  {journal} {Nat. Phys.}\ }\textbf {\bibinfo {volume} {14}},\ \bibinfo {pages}
  {1032} (\bibinfo {year} {2018})}\BibitemShut {NoStop}%
\bibitem [{\citenamefont {Goodfellow}\ \emph {et~al.}(2016)\citenamefont
  {Goodfellow}, \citenamefont {Bengio},\ and\ \citenamefont
  {Courville}}]{goodfellow2016deep}%
  \BibitemOpen
  \bibfield  {author} {\bibinfo {author} {\bibfnamefont {I.}~\bibnamefont
  {Goodfellow}}, \bibinfo {author} {\bibfnamefont {Y.}~\bibnamefont {Bengio}},\
  and\ \bibinfo {author} {\bibfnamefont {A.}~\bibnamefont {Courville}},\
  }\href@noop {} {\emph {\bibinfo {title} {Deep Learning}}}\ (\bibinfo
  {publisher} {MIT Press},\ \bibinfo {year} {2016})\ \bibinfo {note}
  {\url{http://www.deeplearningbook.org}}\BibitemShut {NoStop}%
\bibitem [{\citenamefont
  {Carath{\'e}odory}(1911)}]{caratheodory1911variabilitatsbereich}%
  \BibitemOpen
  \bibfield  {author} {\bibinfo {author} {\bibfnamefont {C.}~\bibnamefont
  {Carath{\'e}odory}},\ }\bibfield  {title} {\bibinfo {title} {{\"U}ber den
  variabilit{\"a}tsbereich der fourier’schen konstanten von positiven
  harmonischen funktionen},\ }\href@noop {} {\bibfield  {journal} {\bibinfo
  {journal} {Rendiconti Del Circolo Matematico di Palermo (1884-1940)}\
  }\textbf {\bibinfo {volume} {32}},\ \bibinfo {pages} {193} (\bibinfo {year}
  {1911})}\BibitemShut {NoStop}%
\bibitem [{\citenamefont {Lu}\ \emph {et~al.}(2018{\natexlab{b}})\citenamefont
  {Lu}, \citenamefont {Huang}, \citenamefont {Li}, \citenamefont {Li},
  \citenamefont {Chen}, \citenamefont {Lu}, \citenamefont {Ji}, \citenamefont
  {Shen}, \citenamefont {Zhou},\ and\ \citenamefont
  {Zeng}}]{PhysRevA.98.012315}%
  \BibitemOpen
  \bibfield  {author} {\bibinfo {author} {\bibfnamefont {S.}~\bibnamefont
  {Lu}}, \bibinfo {author} {\bibfnamefont {S.}~\bibnamefont {Huang}}, \bibinfo
  {author} {\bibfnamefont {K.}~\bibnamefont {Li}}, \bibinfo {author}
  {\bibfnamefont {J.}~\bibnamefont {Li}}, \bibinfo {author} {\bibfnamefont
  {J.}~\bibnamefont {Chen}}, \bibinfo {author} {\bibfnamefont {D.}~\bibnamefont
  {Lu}}, \bibinfo {author} {\bibfnamefont {Z.}~\bibnamefont {Ji}}, \bibinfo
  {author} {\bibfnamefont {Y.}~\bibnamefont {Shen}}, \bibinfo {author}
  {\bibfnamefont {D.}~\bibnamefont {Zhou}},\ and\ \bibinfo {author}
  {\bibfnamefont {B.}~\bibnamefont {Zeng}},\ }\bibfield  {title} {\bibinfo
  {title} {Separability-entanglement classifier via machine learning},\ }\href
  {https://doi.org/10.1103/PhysRevA.98.012315} {\bibfield  {journal} {\bibinfo
  {journal} {Phys. Rev. A}\ }\textbf {\bibinfo {volume} {98}},\ \bibinfo
  {pages} {012315} (\bibinfo {year} {2018}{\natexlab{b}})}\BibitemShut
  {NoStop}%
\bibitem [{\citenamefont {Hou}\ \emph {et~al.}(2020)\citenamefont {Hou},
  \citenamefont {Cao}, \citenamefont {Zhou},\ and\ \citenamefont
  {Zeng}}]{hou2020upper}%
  \BibitemOpen
  \bibfield  {author} {\bibinfo {author} {\bibfnamefont {S.-Y.}\ \bibnamefont
  {Hou}}, \bibinfo {author} {\bibfnamefont {C.}~\bibnamefont {Cao}}, \bibinfo
  {author} {\bibfnamefont {D.}~\bibnamefont {Zhou}},\ and\ \bibinfo {author}
  {\bibfnamefont {B.}~\bibnamefont {Zeng}},\ }\bibfield  {title} {\bibinfo
  {title} {Upper bounds for relative entropy of entanglement based on active
  learning},\ }\href {https://doi.org/10.1088/2058-9565/abb412} {\bibfield
  {journal} {\bibinfo  {journal} {Quantum Science and Technology}\ }\textbf
  {\bibinfo {volume} {5}},\ \bibinfo {pages} {045019} (\bibinfo {year}
  {2020})}\BibitemShut {NoStop}%
\bibitem [{\citenamefont {Long}\ \emph {et~al.}(2022)\citenamefont {Long},
  \citenamefont {He}, \citenamefont {Zhang}, \citenamefont {Tang},
  \citenamefont {Lin}, \citenamefont {Liu}, \citenamefont {Nie}, \citenamefont
  {Feng}, \citenamefont {Li}, \citenamefont {Xin}, \citenamefont {Ai},\ and\
  \citenamefont {Lu}}]{PhysRevLett.129.070502}%
  \BibitemOpen
  \bibfield  {author} {\bibinfo {author} {\bibfnamefont {X.}~\bibnamefont
  {Long}}, \bibinfo {author} {\bibfnamefont {W.-T.}\ \bibnamefont {He}},
  \bibinfo {author} {\bibfnamefont {N.-N.}\ \bibnamefont {Zhang}}, \bibinfo
  {author} {\bibfnamefont {K.}~\bibnamefont {Tang}}, \bibinfo {author}
  {\bibfnamefont {Z.}~\bibnamefont {Lin}}, \bibinfo {author} {\bibfnamefont
  {H.}~\bibnamefont {Liu}}, \bibinfo {author} {\bibfnamefont {X.}~\bibnamefont
  {Nie}}, \bibinfo {author} {\bibfnamefont {G.}~\bibnamefont {Feng}}, \bibinfo
  {author} {\bibfnamefont {J.}~\bibnamefont {Li}}, \bibinfo {author}
  {\bibfnamefont {T.}~\bibnamefont {Xin}}, \bibinfo {author} {\bibfnamefont
  {Q.}~\bibnamefont {Ai}},\ and\ \bibinfo {author} {\bibfnamefont
  {D.}~\bibnamefont {Lu}},\ }\bibfield  {title} {\bibinfo {title}
  {Entanglement-enhanced quantum metrology in colored noise by quantum {Z}eno
  effect},\ }\href {https://doi.org/10.1103/PhysRevLett.129.070502} {\bibfield
  {journal} {\bibinfo  {journal} {Phys. Rev. Lett.}\ }\textbf {\bibinfo
  {volume} {129}},\ \bibinfo {pages} {070502} (\bibinfo {year}
  {2022})}\BibitemShut {NoStop}%
\bibitem [{\citenamefont {Huang}\ \emph {et~al.}(2024)\citenamefont {Huang},
  \citenamefont {Xi}, \citenamefont {Long}, \citenamefont {Liu}, \citenamefont
  {Fan}, \citenamefont {Wang}, \citenamefont {Zheng}, \citenamefont {Feng},
  \citenamefont {Nie},\ and\ \citenamefont {Lu}}]{PhysRevLett.132.210403}%
  \BibitemOpen
  \bibfield  {author} {\bibinfo {author} {\bibfnamefont {K.}~\bibnamefont
  {Huang}}, \bibinfo {author} {\bibfnamefont {C.}~\bibnamefont {Xi}}, \bibinfo
  {author} {\bibfnamefont {X.}~\bibnamefont {Long}}, \bibinfo {author}
  {\bibfnamefont {H.}~\bibnamefont {Liu}}, \bibinfo {author} {\bibfnamefont
  {Y.-a.}\ \bibnamefont {Fan}}, \bibinfo {author} {\bibfnamefont
  {X.}~\bibnamefont {Wang}}, \bibinfo {author} {\bibfnamefont {Y.}~\bibnamefont
  {Zheng}}, \bibinfo {author} {\bibfnamefont {Y.}~\bibnamefont {Feng}},
  \bibinfo {author} {\bibfnamefont {X.}~\bibnamefont {Nie}},\ and\ \bibinfo
  {author} {\bibfnamefont {D.}~\bibnamefont {Lu}},\ }\bibfield  {title}
  {\bibinfo {title} {Experimental realization of self-contained quantum
  refrigeration},\ }\href {https://doi.org/10.1103/PhysRevLett.132.210403}
  {\bibfield  {journal} {\bibinfo  {journal} {Phys. Rev. Lett.}\ }\textbf
  {\bibinfo {volume} {132}},\ \bibinfo {pages} {210403} (\bibinfo {year}
  {2024})}\BibitemShut {NoStop}%
\bibitem [{\citenamefont {Xi}\ \emph {et~al.}(2024)\citenamefont {Xi},
  \citenamefont {Liu}, \citenamefont {Liu}, \citenamefont {Huang},
  \citenamefont {Long}, \citenamefont {Ebler}, \citenamefont {Nie},
  \citenamefont {Dahlsten},\ and\ \citenamefont {Lu}}]{PhysRevLett.133.040401}%
  \BibitemOpen
  \bibfield  {author} {\bibinfo {author} {\bibfnamefont {C.}~\bibnamefont
  {Xi}}, \bibinfo {author} {\bibfnamefont {X.}~\bibnamefont {Liu}}, \bibinfo
  {author} {\bibfnamefont {H.}~\bibnamefont {Liu}}, \bibinfo {author}
  {\bibfnamefont {K.}~\bibnamefont {Huang}}, \bibinfo {author} {\bibfnamefont
  {X.}~\bibnamefont {Long}}, \bibinfo {author} {\bibfnamefont {D.}~\bibnamefont
  {Ebler}}, \bibinfo {author} {\bibfnamefont {X.}~\bibnamefont {Nie}}, \bibinfo
  {author} {\bibfnamefont {O.}~\bibnamefont {Dahlsten}},\ and\ \bibinfo
  {author} {\bibfnamefont {D.}~\bibnamefont {Lu}},\ }\bibfield  {title}
  {\bibinfo {title} {Experimental validation of enhanced information capacity
  by quantum switch in accordance with thermodynamic laws},\ }\href
  {https://doi.org/10.1103/PhysRevLett.133.040401} {\bibfield  {journal}
  {\bibinfo  {journal} {Phys. Rev. Lett.}\ }\textbf {\bibinfo {volume} {133}},\
  \bibinfo {pages} {040401} (\bibinfo {year} {2024})}\BibitemShut {NoStop}%
\bibitem [{\citenamefont {Liu}\ \emph {et~al.}(2025{\natexlab{a}})\citenamefont
  {Liu}, \citenamefont {Hur}, \citenamefont {Zhang}, \citenamefont {Che},
  \citenamefont {Long}, \citenamefont {Wang}, \citenamefont {Huang},
  \citenamefont {Fan}, \citenamefont {Zheng}, \citenamefont {Feng},
  \citenamefont {Zhou}, \citenamefont {Ng}, \citenamefont {Nie}, \citenamefont
  {Park},\ and\ \citenamefont {Lu}}]{y8wr-yml4}%
  \BibitemOpen
  \bibfield  {author} {\bibinfo {author} {\bibfnamefont {H.}~\bibnamefont
  {Liu}}, \bibinfo {author} {\bibfnamefont {T.}~\bibnamefont {Hur}}, \bibinfo
  {author} {\bibfnamefont {S.}~\bibnamefont {Zhang}}, \bibinfo {author}
  {\bibfnamefont {L.}~\bibnamefont {Che}}, \bibinfo {author} {\bibfnamefont
  {X.}~\bibnamefont {Long}}, \bibinfo {author} {\bibfnamefont {X.}~\bibnamefont
  {Wang}}, \bibinfo {author} {\bibfnamefont {K.}~\bibnamefont {Huang}},
  \bibinfo {author} {\bibfnamefont {Y.-a.}\ \bibnamefont {Fan}}, \bibinfo
  {author} {\bibfnamefont {Y.}~\bibnamefont {Zheng}}, \bibinfo {author}
  {\bibfnamefont {Y.}~\bibnamefont {Feng}}, \bibinfo {author} {\bibfnamefont
  {Y.}~\bibnamefont {Zhou}}, \bibinfo {author} {\bibfnamefont {J.}~\bibnamefont
  {Ng}}, \bibinfo {author} {\bibfnamefont {X.}~\bibnamefont {Nie}}, \bibinfo
  {author} {\bibfnamefont {D.~K.}\ \bibnamefont {Park}},\ and\ \bibinfo
  {author} {\bibfnamefont {D.}~\bibnamefont {Lu}},\ }\bibfield  {title}
  {\bibinfo {title} {Neural quantum embedding via deterministic quantum
  computation with one qubit},\ }\href {https://doi.org/10.1103/y8wr-yml4}
  {\bibfield  {journal} {\bibinfo  {journal} {Phys. Rev. Lett.}\ }\textbf
  {\bibinfo {volume} {135}},\ \bibinfo {pages} {080603} (\bibinfo {year}
  {2025}{\natexlab{a}})}\BibitemShut {NoStop}%
\bibitem [{\citenamefont {Huang}\ \emph {et~al.}(2025)\citenamefont {Huang},
  \citenamefont {Che}, \citenamefont {Wei}, \citenamefont {Xu}, \citenamefont
  {Nie}, \citenamefont {Li}, \citenamefont {Lu},\ and\ \citenamefont
  {Xin}}]{huang2025direct}%
  \BibitemOpen
  \bibfield  {author} {\bibinfo {author} {\bibfnamefont {Y.}~\bibnamefont
  {Huang}}, \bibinfo {author} {\bibfnamefont {L.}~\bibnamefont {Che}}, \bibinfo
  {author} {\bibfnamefont {C.}~\bibnamefont {Wei}}, \bibinfo {author}
  {\bibfnamefont {F.}~\bibnamefont {Xu}}, \bibinfo {author} {\bibfnamefont
  {X.}~\bibnamefont {Nie}}, \bibinfo {author} {\bibfnamefont {J.}~\bibnamefont
  {Li}}, \bibinfo {author} {\bibfnamefont {D.}~\bibnamefont {Lu}},\ and\
  \bibinfo {author} {\bibfnamefont {T.}~\bibnamefont {Xin}},\ }\bibfield
  {title} {\bibinfo {title} {Direct entanglement detection of quantum systems
  using machine learning},\ }\href {https://doi.org/10.1038/s41534-025-00970-w}
  {\bibfield  {journal} {\bibinfo  {journal} {npj Quantum Inf}\ }\textbf
  {\bibinfo {volume} {11}},\ \bibinfo {pages} {29} (\bibinfo {year}
  {2025})}\BibitemShut {NoStop}%
\bibitem [{\citenamefont {Khaneja}\ \emph {et~al.}(2005)\citenamefont
  {Khaneja}, \citenamefont {Reiss}, \citenamefont {Kehlet}, \citenamefont
  {Schulte-Herbrüggen},\ and\ \citenamefont {Glaser}}]{KHANEJA2005296}%
  \BibitemOpen
  \bibfield  {author} {\bibinfo {author} {\bibfnamefont {N.}~\bibnamefont
  {Khaneja}}, \bibinfo {author} {\bibfnamefont {T.}~\bibnamefont {Reiss}},
  \bibinfo {author} {\bibfnamefont {C.}~\bibnamefont {Kehlet}}, \bibinfo
  {author} {\bibfnamefont {T.}~\bibnamefont {Schulte-Herbrüggen}},\ and\
  \bibinfo {author} {\bibfnamefont {S.~J.}\ \bibnamefont {Glaser}},\ }\bibfield
   {title} {\bibinfo {title} {Optimal control of coupled spin dynamics: design
  of nmr pulse sequences by gradient ascent algorithms},\ }\href
  {https://doi.org/https://doi.org/10.1016/j.jmr.2004.11.004} {\bibfield
  {journal} {\bibinfo  {journal} {Journal of Magnetic Resonance}\ }\textbf
  {\bibinfo {volume} {172}},\ \bibinfo {pages} {296} (\bibinfo {year}
  {2005})}\BibitemShut {NoStop}%
\bibitem [{\citenamefont {Horodecki}(1997)}]{horodecki1997separability}%
  \BibitemOpen
  \bibfield  {author} {\bibinfo {author} {\bibfnamefont {P.}~\bibnamefont
  {Horodecki}},\ }\bibfield  {title} {\bibinfo {title} {Separability criterion
  and inseparable mixed states with positive partial transposition},\ }\href
  {https://doi.org/10.1016/S0375-9601(97)00416-7} {\bibfield  {journal}
  {\bibinfo  {journal} {Phys. Lett. A}\ }\textbf {\bibinfo {volume} {232}},\
  \bibinfo {pages} {333} (\bibinfo {year} {1997})}\BibitemShut {NoStop}%
\bibitem [{\citenamefont {G\"uhne}\ \emph {et~al.}(2021)\citenamefont
  {G\"uhne}, \citenamefont {Mao},\ and\ \citenamefont
  {Yu}}]{PhysRevLett.126.140503}%
  \BibitemOpen
  \bibfield  {author} {\bibinfo {author} {\bibfnamefont {O.}~\bibnamefont
  {G\"uhne}}, \bibinfo {author} {\bibfnamefont {Y.}~\bibnamefont {Mao}},\ and\
  \bibinfo {author} {\bibfnamefont {X.-D.}\ \bibnamefont {Yu}},\ }\bibfield
  {title} {\bibinfo {title} {Geometry of faithful entanglement},\ }\href
  {https://doi.org/10.1103/PhysRevLett.126.140503} {\bibfield  {journal}
  {\bibinfo  {journal} {Phys. Rev. Lett.}\ }\textbf {\bibinfo {volume} {126}},\
  \bibinfo {pages} {140503} (\bibinfo {year} {2021})}\BibitemShut {NoStop}%
\bibitem [{\citenamefont {Horodecki}\ \emph {et~al.}(1999)\citenamefont
  {Horodecki}, \citenamefont {Horodecki},\ and\ \citenamefont
  {Horodecki}}]{PhysRevLett.82.1056}%
  \BibitemOpen
  \bibfield  {author} {\bibinfo {author} {\bibfnamefont {P.}~\bibnamefont
  {Horodecki}}, \bibinfo {author} {\bibfnamefont {M.}~\bibnamefont
  {Horodecki}},\ and\ \bibinfo {author} {\bibfnamefont {R.}~\bibnamefont
  {Horodecki}},\ }\bibfield  {title} {\bibinfo {title} {Bound entanglement can
  be activated},\ }\href {https://doi.org/10.1103/PhysRevLett.82.1056}
  {\bibfield  {journal} {\bibinfo  {journal} {Phys. Rev. Lett.}\ }\textbf
  {\bibinfo {volume} {82}},\ \bibinfo {pages} {1056} (\bibinfo {year}
  {1999})}\BibitemShut {NoStop}%
\bibitem [{not()}]{note1}%
  \BibitemOpen
  \href@noop {} {}\bibinfo {note} {Horodecki state $\rho_b$ refers to the
  expression given in Eq.~(4) in Ref.~\cite{PhysRevLett.82.1056}. Notice that
  $\rho_b$ is invariant under the change of $b \to 5-b$ and interchange of the
  parties. So that the original analysis for $b \in [2,5]$ in
  Ref.~\cite{PhysRevLett.82.1056} can be extended to the whole range of $b \in
  [0,5]$}\BibitemShut {NoStop}%
\bibitem [{\citenamefont {Bennett}\ \emph {et~al.}(1999)\citenamefont
  {Bennett}, \citenamefont {DiVincenzo}, \citenamefont {Mor}, \citenamefont
  {Shor}, \citenamefont {Smolin},\ and\ \citenamefont
  {Terhal}}]{PhysRevLett.82.5385}%
  \BibitemOpen
  \bibfield  {author} {\bibinfo {author} {\bibfnamefont {C.~H.}\ \bibnamefont
  {Bennett}}, \bibinfo {author} {\bibfnamefont {D.~P.}\ \bibnamefont
  {DiVincenzo}}, \bibinfo {author} {\bibfnamefont {T.}~\bibnamefont {Mor}},
  \bibinfo {author} {\bibfnamefont {P.~W.}\ \bibnamefont {Shor}}, \bibinfo
  {author} {\bibfnamefont {J.~A.}\ \bibnamefont {Smolin}},\ and\ \bibinfo
  {author} {\bibfnamefont {B.~M.}\ \bibnamefont {Terhal}},\ }\bibfield  {title}
  {\bibinfo {title} {Unextendible product bases and bound entanglement},\
  }\href {https://doi.org/10.1103/PhysRevLett.82.5385} {\bibfield  {journal}
  {\bibinfo  {journal} {Phys. Rev. Lett.}\ }\textbf {\bibinfo {volume} {82}},\
  \bibinfo {pages} {5385} (\bibinfo {year} {1999})}\BibitemShut {NoStop}%
\bibitem [{\citenamefont {Pandya}\ \emph {et~al.}(2020)\citenamefont {Pandya},
  \citenamefont {Sakarya},\ and\ \citenamefont {Wie\ifmmode~\acute{s}\else
  \'{s}\fi{}niak}}]{PhysRevA.102.012409}%
  \BibitemOpen
  \bibfield  {author} {\bibinfo {author} {\bibfnamefont {P.}~\bibnamefont
  {Pandya}}, \bibinfo {author} {\bibfnamefont {O.}~\bibnamefont {Sakarya}},\
  and\ \bibinfo {author} {\bibfnamefont {M.}~\bibnamefont
  {Wie\ifmmode~\acute{s}\else \'{s}\fi{}niak}},\ }\bibfield  {title} {\bibinfo
  {title} {Hilbert-schmidt distance and entanglement witnessing},\ }\href
  {https://doi.org/10.1103/PhysRevA.102.012409} {\bibfield  {journal} {\bibinfo
   {journal} {Phys. Rev. A}\ }\textbf {\bibinfo {volume} {102}},\ \bibinfo
  {pages} {012409} (\bibinfo {year} {2020})}\BibitemShut {NoStop}%
\bibitem [{\citenamefont {Zhang}\ \emph {et~al.}(2024)\citenamefont {Zhang},
  \citenamefont {Denker}, \citenamefont {Asadian},\ and\ \citenamefont
  {G\"uhne}}]{PhysRevLett.133.040203}%
  \BibitemOpen
  \bibfield  {author} {\bibinfo {author} {\bibfnamefont {C.}~\bibnamefont
  {Zhang}}, \bibinfo {author} {\bibfnamefont {S.}~\bibnamefont {Denker}},
  \bibinfo {author} {\bibfnamefont {A.}~\bibnamefont {Asadian}},\ and\ \bibinfo
  {author} {\bibfnamefont {O.}~\bibnamefont {G\"uhne}},\ }\bibfield  {title}
  {\bibinfo {title} {Analyzing quantum entanglement with the schmidt
  decomposition in operator space},\ }\href
  {https://doi.org/10.1103/PhysRevLett.133.040203} {\bibfield  {journal}
  {\bibinfo  {journal} {Phys. Rev. Lett.}\ }\textbf {\bibinfo {volume} {133}},\
  \bibinfo {pages} {040203} (\bibinfo {year} {2024})}\BibitemShut {NoStop}%
\bibitem [{\citenamefont {Horodecki}\ and\ \citenamefont
  {Horodecki}(1999)}]{PhysRevA.59.4206}%
  \BibitemOpen
  \bibfield  {author} {\bibinfo {author} {\bibfnamefont {M.}~\bibnamefont
  {Horodecki}}\ and\ \bibinfo {author} {\bibfnamefont {P.}~\bibnamefont
  {Horodecki}},\ }\bibfield  {title} {\bibinfo {title} {Reduction criterion of
  separability and limits for a class of distillation protocols},\ }\href
  {https://doi.org/10.1103/PhysRevA.59.4206} {\bibfield  {journal} {\bibinfo
  {journal} {Phys. Rev. A}\ }\textbf {\bibinfo {volume} {59}},\ \bibinfo
  {pages} {4206} (\bibinfo {year} {1999})}\BibitemShut {NoStop}%
\bibitem [{\citenamefont {Weilenmann}\ \emph {et~al.}(2020)\citenamefont
  {Weilenmann}, \citenamefont {Dive}, \citenamefont {Trillo}, \citenamefont
  {Aguilar},\ and\ \citenamefont {Navascu\'es}}]{PhysRevLett.124.200502}%
  \BibitemOpen
  \bibfield  {author} {\bibinfo {author} {\bibfnamefont {M.}~\bibnamefont
  {Weilenmann}}, \bibinfo {author} {\bibfnamefont {B.}~\bibnamefont {Dive}},
  \bibinfo {author} {\bibfnamefont {D.}~\bibnamefont {Trillo}}, \bibinfo
  {author} {\bibfnamefont {E.~A.}\ \bibnamefont {Aguilar}},\ and\ \bibinfo
  {author} {\bibfnamefont {M.}~\bibnamefont {Navascu\'es}},\ }\bibfield
  {title} {\bibinfo {title} {Entanglement detection beyond measuring
  fidelities},\ }\href {https://doi.org/10.1103/PhysRevLett.124.200502}
  {\bibfield  {journal} {\bibinfo  {journal} {Phys. Rev. Lett.}\ }\textbf
  {\bibinfo {volume} {124}},\ \bibinfo {pages} {200502} (\bibinfo {year}
  {2020})}\BibitemShut {NoStop}%
\bibitem [{\citenamefont {Jungnitsch}\ \emph {et~al.}(2011)\citenamefont
  {Jungnitsch}, \citenamefont {Moroder},\ and\ \citenamefont
  {G\"uhne}}]{PhysRevLett.106.190502}%
  \BibitemOpen
  \bibfield  {author} {\bibinfo {author} {\bibfnamefont {B.}~\bibnamefont
  {Jungnitsch}}, \bibinfo {author} {\bibfnamefont {T.}~\bibnamefont
  {Moroder}},\ and\ \bibinfo {author} {\bibfnamefont {O.}~\bibnamefont
  {G\"uhne}},\ }\bibfield  {title} {\bibinfo {title} {Taming multiparticle
  entanglement},\ }\href {https://doi.org/10.1103/PhysRevLett.106.190502}
  {\bibfield  {journal} {\bibinfo  {journal} {Phys. Rev. Lett.}\ }\textbf
  {\bibinfo {volume} {106}},\ \bibinfo {pages} {190502} (\bibinfo {year}
  {2011})}\BibitemShut {NoStop}%
\bibitem [{\citenamefont {Chitambar}\ and\ \citenamefont
  {Gour}(2019)}]{RevModPhys.91.025001}%
  \BibitemOpen
  \bibfield  {author} {\bibinfo {author} {\bibfnamefont {E.}~\bibnamefont
  {Chitambar}}\ and\ \bibinfo {author} {\bibfnamefont {G.}~\bibnamefont
  {Gour}},\ }\bibfield  {title} {\bibinfo {title} {Quantum resource theories},\
  }\href {https://doi.org/10.1103/RevModPhys.91.025001} {\bibfield  {journal}
  {\bibinfo  {journal} {Rev. Mod. Phys.}\ }\textbf {\bibinfo {volume} {91}},\
  \bibinfo {pages} {025001} (\bibinfo {year} {2019})}\BibitemShut {NoStop}%
\bibitem [{\citenamefont {Zhu}\ \emph {et~al.}(2025)\citenamefont {Zhu},
  \citenamefont {Zhang}, \citenamefont {An},\ and\ \citenamefont
  {Zeng}}]{zhu2025unified}%
  \BibitemOpen
  \bibfield  {author} {\bibinfo {author} {\bibfnamefont {X.}~\bibnamefont
  {Zhu}}, \bibinfo {author} {\bibfnamefont {C.}~\bibnamefont {Zhang}}, \bibinfo
  {author} {\bibfnamefont {Z.}~\bibnamefont {An}},\ and\ \bibinfo {author}
  {\bibfnamefont {B.}~\bibnamefont {Zeng}},\ }\bibfield  {title} {\bibinfo
  {title} {Unified framework for calculating convex roof resource measures},\
  }\href {https://doi.org/10.1038/s41534-025-01012-1} {\bibfield  {journal}
  {\bibinfo  {journal} {npj Quantum Inf}\ }\textbf {\bibinfo {volume} {11}},\
  \bibinfo {pages} {56} (\bibinfo {year} {2025})}\BibitemShut {NoStop}%
\bibitem [{rep()}]{repo}%
  \BibitemOpen
  \href@noop {} {}\bibinfo {note} {The code and data for this work are
  open-source and available at
  \url{https://github.com/numqi/geometric-witness.git}.}\BibitemShut {Stop}%
\bibitem [{\citenamefont {Wu}(2004)}]{wu2004convex}%
  \BibitemOpen
  \bibfield  {author} {\bibinfo {author} {\bibfnamefont {S.}~\bibnamefont
  {Wu}},\ }\bibfield  {title} {\bibinfo {title} {The convex sum of product
  states for a separable state},\ }\href
  {https://doi.org/10.1016/j.physleta.2003.12.046} {\bibfield  {journal}
  {\bibinfo  {journal} {Phys. Lett. A}\ }\textbf {\bibinfo {volume} {321}},\
  \bibinfo {pages} {301} (\bibinfo {year} {2004})}\BibitemShut {NoStop}%
\bibitem [{\citenamefont {Ha}\ and\ \citenamefont
  {Kye}(2014)}]{ha2014separable}%
  \BibitemOpen
  \bibfield  {author} {\bibinfo {author} {\bibfnamefont {K.-C.}\ \bibnamefont
  {Ha}}\ and\ \bibinfo {author} {\bibfnamefont {S.-H.}\ \bibnamefont {Kye}},\
  }\bibfield  {title} {\bibinfo {title} {Separable states with unique
  decompositions},\ }\href {https://doi.org/10.1007/s00220-014-1980-6}
  {\bibfield  {journal} {\bibinfo  {journal} {Commun. Math. Phys.}\ }\textbf
  {\bibinfo {volume} {328}},\ \bibinfo {pages} {131} (\bibinfo {year}
  {2014})}\BibitemShut {NoStop}%
\bibitem [{\citenamefont {Chen}\ and\ \citenamefont
  {{\DH}okovi{\'c}}(2013)}]{chen2013dimensions}%
  \BibitemOpen
  \bibfield  {author} {\bibinfo {author} {\bibfnamefont {L.}~\bibnamefont
  {Chen}}\ and\ \bibinfo {author} {\bibfnamefont {D.~{\v{Z}}.}\ \bibnamefont
  {{\DH}okovi{\'c}}},\ }\bibfield  {title} {\bibinfo {title} {Dimensions,
  lengths, and separability in finite-dimensional quantum systems},\
  }\href@noop {} {\bibfield  {journal} {\bibinfo  {journal} {Journal of
  Mathematical Physics}\ }\textbf {\bibinfo {volume} {54}} (\bibinfo {year}
  {2013})}\BibitemShut {NoStop}%
\bibitem [{\citenamefont {Smolin}(2001)}]{PhysRevA.63.032306}%
  \BibitemOpen
  \bibfield  {author} {\bibinfo {author} {\bibfnamefont {J.~A.}\ \bibnamefont
  {Smolin}},\ }\bibfield  {title} {\bibinfo {title} {Four-party unlockable
  bound entangled state},\ }\href {https://doi.org/10.1103/PhysRevA.63.032306}
  {\bibfield  {journal} {\bibinfo  {journal} {Phys. Rev. A}\ }\textbf {\bibinfo
  {volume} {63}},\ \bibinfo {pages} {032306} (\bibinfo {year}
  {2001})}\BibitemShut {NoStop}%
\bibitem [{\citenamefont {Augusiak}\ and\ \citenamefont
  {Horodecki}(2006)}]{PhysRevA.74.010305}%
  \BibitemOpen
  \bibfield  {author} {\bibinfo {author} {\bibfnamefont {R.}~\bibnamefont
  {Augusiak}}\ and\ \bibinfo {author} {\bibfnamefont {P.}~\bibnamefont
  {Horodecki}},\ }\bibfield  {title} {\bibinfo {title} {Bound entanglement
  maximally violating bell inequalities: Quantum entanglement is not fully
  equivalent to cryptographic security},\ }\href
  {https://doi.org/10.1103/PhysRevA.74.010305} {\bibfield  {journal} {\bibinfo
  {journal} {Phys. Rev. A}\ }\textbf {\bibinfo {volume} {74}},\ \bibinfo
  {pages} {010305} (\bibinfo {year} {2006})}\BibitemShut {NoStop}%
\bibitem [{\citenamefont {Nie}\ \emph {et~al.}(2022)\citenamefont {Nie},
  \citenamefont {Zhu}, \citenamefont {Huang}, \citenamefont {Tang},
  \citenamefont {Long}, \citenamefont {Lin}, \citenamefont {Tian},
  \citenamefont {Qiu}, \citenamefont {Xi}, \citenamefont {Yang}, \citenamefont
  {Li}, \citenamefont {Dong}, \citenamefont {Xin},\ and\ \citenamefont
  {Lu}}]{PhysRevLett.129.100603}%
  \BibitemOpen
  \bibfield  {author} {\bibinfo {author} {\bibfnamefont {X.}~\bibnamefont
  {Nie}}, \bibinfo {author} {\bibfnamefont {X.}~\bibnamefont {Zhu}}, \bibinfo
  {author} {\bibfnamefont {K.}~\bibnamefont {Huang}}, \bibinfo {author}
  {\bibfnamefont {K.}~\bibnamefont {Tang}}, \bibinfo {author} {\bibfnamefont
  {X.}~\bibnamefont {Long}}, \bibinfo {author} {\bibfnamefont {Z.}~\bibnamefont
  {Lin}}, \bibinfo {author} {\bibfnamefont {Y.}~\bibnamefont {Tian}}, \bibinfo
  {author} {\bibfnamefont {C.}~\bibnamefont {Qiu}}, \bibinfo {author}
  {\bibfnamefont {C.}~\bibnamefont {Xi}}, \bibinfo {author} {\bibfnamefont
  {X.}~\bibnamefont {Yang}}, \bibinfo {author} {\bibfnamefont {J.}~\bibnamefont
  {Li}}, \bibinfo {author} {\bibfnamefont {Y.}~\bibnamefont {Dong}}, \bibinfo
  {author} {\bibfnamefont {T.}~\bibnamefont {Xin}},\ and\ \bibinfo {author}
  {\bibfnamefont {D.}~\bibnamefont {Lu}},\ }\bibfield  {title} {\bibinfo
  {title} {Experimental realization of a quantum refrigerator driven by
  indefinite causal orders},\ }\href
  {https://doi.org/10.1103/PhysRevLett.129.100603} {\bibfield  {journal}
  {\bibinfo  {journal} {Phys. Rev. Lett.}\ }\textbf {\bibinfo {volume} {129}},\
  \bibinfo {pages} {100603} (\bibinfo {year} {2022})}\BibitemShut {NoStop}%
\bibitem [{\citenamefont {Fan}\ \emph {et~al.}(2023)\citenamefont {Fan},
  \citenamefont {Li}, \citenamefont {Hu}, \citenamefont {Li}, \citenamefont
  {Long}, \citenamefont {Liu}, \citenamefont {Yang}, \citenamefont {Nie},
  \citenamefont {Li}, \citenamefont {Xin}, \citenamefont {Lu},\ and\
  \citenamefont {Wan}}]{Fan2023Innovation}%
  \BibitemOpen
  \bibfield  {author} {\bibinfo {author} {\bibfnamefont {Y.-a.}\ \bibnamefont
  {Fan}}, \bibinfo {author} {\bibfnamefont {Y.}~\bibnamefont {Li}}, \bibinfo
  {author} {\bibfnamefont {Y.}~\bibnamefont {Hu}}, \bibinfo {author}
  {\bibfnamefont {Y.}~\bibnamefont {Li}}, \bibinfo {author} {\bibfnamefont
  {X.}~\bibnamefont {Long}}, \bibinfo {author} {\bibfnamefont {H.}~\bibnamefont
  {Liu}}, \bibinfo {author} {\bibfnamefont {X.}~\bibnamefont {Yang}}, \bibinfo
  {author} {\bibfnamefont {X.}~\bibnamefont {Nie}}, \bibinfo {author}
  {\bibfnamefont {J.}~\bibnamefont {Li}}, \bibinfo {author} {\bibfnamefont
  {T.}~\bibnamefont {Xin}}, \bibinfo {author} {\bibfnamefont {D.}~\bibnamefont
  {Lu}},\ and\ \bibinfo {author} {\bibfnamefont {Y.}~\bibnamefont {Wan}},\
  }\bibfield  {title} {\bibinfo {title} {Experimental quantum simulation of a
  topologically protected hadamard gate via braiding fibonacci anyons},\ }\href
  {https://doi.org/10.1016/j.xinn.2023.100480} {\bibfield  {journal} {\bibinfo
  {journal} {The Innovation}\ }\textbf {\bibinfo {volume} {4}},\ \bibinfo
  {pages} {100480} (\bibinfo {year} {2023})}\BibitemShut {NoStop}%
\bibitem [{\citenamefont {Xin}\ \emph {et~al.}(2021)\citenamefont {Xin},
  \citenamefont {Che}, \citenamefont {Xi}, \citenamefont {Singh}, \citenamefont
  {Nie}, \citenamefont {Li}, \citenamefont {Dong},\ and\ \citenamefont
  {Lu}}]{PhysRevLett.126.110502}%
  \BibitemOpen
  \bibfield  {author} {\bibinfo {author} {\bibfnamefont {T.}~\bibnamefont
  {Xin}}, \bibinfo {author} {\bibfnamefont {L.}~\bibnamefont {Che}}, \bibinfo
  {author} {\bibfnamefont {C.}~\bibnamefont {Xi}}, \bibinfo {author}
  {\bibfnamefont {A.}~\bibnamefont {Singh}}, \bibinfo {author} {\bibfnamefont
  {X.}~\bibnamefont {Nie}}, \bibinfo {author} {\bibfnamefont {J.}~\bibnamefont
  {Li}}, \bibinfo {author} {\bibfnamefont {Y.}~\bibnamefont {Dong}},\ and\
  \bibinfo {author} {\bibfnamefont {D.}~\bibnamefont {Lu}},\ }\bibfield
  {title} {\bibinfo {title} {Experimental quantum principal component analysis
  via parametrized quantum circuits},\ }\href
  {https://doi.org/10.1103/PhysRevLett.126.110502} {\bibfield  {journal}
  {\bibinfo  {journal} {Phys. Rev. Lett.}\ }\textbf {\bibinfo {volume} {126}},\
  \bibinfo {pages} {110502} (\bibinfo {year} {2021})}\BibitemShut {NoStop}%
\bibitem [{\citenamefont {M{\"o}tt{\"o}nen}\ \emph {et~al.}(2005)\citenamefont
  {M{\"o}tt{\"o}nen}, \citenamefont {Vartiainen}, \citenamefont {Bergholm},\
  and\ \citenamefont {Salomaa}}]{Mottonen2005Transformation}%
  \BibitemOpen
  \bibfield  {author} {\bibinfo {author} {\bibfnamefont {M.}~\bibnamefont
  {M{\"o}tt{\"o}nen}}, \bibinfo {author} {\bibfnamefont {J.~J.}\ \bibnamefont
  {Vartiainen}}, \bibinfo {author} {\bibfnamefont {V.}~\bibnamefont
  {Bergholm}},\ and\ \bibinfo {author} {\bibfnamefont {M.~M.}\ \bibnamefont
  {Salomaa}},\ }\bibfield  {title} {\bibinfo {title} {Transformation of quantum
  states using uniformly controlled rotations},\ }\href@noop {} {\bibfield
  {journal} {\bibinfo  {journal} {Quantum Inf. Comput.}\ }\textbf {\bibinfo
  {volume} {5}},\ \bibinfo {pages} {467} (\bibinfo {year} {2005})},\ \Eprint
  {https://arxiv.org/abs/quant-ph/0407010} {quant-ph/0407010} \BibitemShut
  {NoStop}%
\bibitem [{\citenamefont {Liu}\ \emph {et~al.}(2025{\natexlab{b}})\citenamefont
  {Liu}, \citenamefont {Liu}, \citenamefont {Chen}, \citenamefont {Nie},
  \citenamefont {Liu},\ and\ \citenamefont {Lu}}]{PhysRevLett.134.040201}%
  \BibitemOpen
  \bibfield  {author} {\bibinfo {author} {\bibfnamefont {H.}~\bibnamefont
  {Liu}}, \bibinfo {author} {\bibfnamefont {Z.}~\bibnamefont {Liu}}, \bibinfo
  {author} {\bibfnamefont {S.}~\bibnamefont {Chen}}, \bibinfo {author}
  {\bibfnamefont {X.}~\bibnamefont {Nie}}, \bibinfo {author} {\bibfnamefont
  {X.}~\bibnamefont {Liu}},\ and\ \bibinfo {author} {\bibfnamefont
  {D.}~\bibnamefont {Lu}},\ }\bibfield  {title} {\bibinfo {title} {Certifying
  quantum temporal correlation via randomized measurements: Theory and
  experiment},\ }\href {https://doi.org/10.1103/PhysRevLett.134.040201}
  {\bibfield  {journal} {\bibinfo  {journal} {Phys. Rev. Lett.}\ }\textbf
  {\bibinfo {volume} {134}},\ \bibinfo {pages} {040201} (\bibinfo {year}
  {2025}{\natexlab{b}})}\BibitemShut {NoStop}%
\bibitem [{\citenamefont {Leskowitz}\ and\ \citenamefont
  {Mueller}(2004)}]{Leskowitz2004StateInterrogation}%
  \BibitemOpen
  \bibfield  {author} {\bibinfo {author} {\bibfnamefont {G.~M.}\ \bibnamefont
  {Leskowitz}}\ and\ \bibinfo {author} {\bibfnamefont {L.~J.}\ \bibnamefont
  {Mueller}},\ }\bibfield  {title} {\bibinfo {title} {State interrogation in
  nuclear magnetic resonance quantum-information processing},\ }\href
  {https://doi.org/10.1103/PhysRevA.69.052302} {\bibfield  {journal} {\bibinfo
  {journal} {Phys. Rev. A}\ }\textbf {\bibinfo {volume} {69}},\ \bibinfo
  {pages} {052302} (\bibinfo {year} {2004})}\BibitemShut {NoStop}%
\end{thebibliography}%

\end{document}